\newcolumntype{H}{>{\setbox0=\hbox\bgroup}c<{\egroup}@{}}
\newtheorem{theorem}{Theorem}
\newtheorem{definition}{Definition}
\newtheorem{invariant}{Invariant}
\newtheorem{observation}{Observation}
\newtheorem{corollary}{Corollary}
\newtheorem{lemma}{Lemma}
\newcommand {\NODES}{\mathcal{N}}
\newcommand {\fullyupdate}{{\sc fully-update} }
\newcommand {\fullyupdateend}{{\sc fully-update}}
\newcommand {\fullydynamic}{{\sc fully-dynamic} }
\newcommand {\fullydynamicend}{{\sc fully-dynamic}}
\newcommand {\fullycleanup}{{\sc fully-cleanup} }
\newcommand {\fullycleanupend}{{\sc fully-cleanup}}
\newcommand {\fullyfixup}{{\sc fully-fixup} }
\newcommand {\fullyfixupend}{{\sc fully-fixup}}
\newcommand {\cb}{\beta}
\newcommand {\vstar} {{\nu^*}}
\newcommand {\fresh} {fresh }
\newcommand {\freshend} {fresh}
\newcommand {\system}{tuple system }
\newcommand {\systemend}{tuple system}
\newcommand {\weight} {{\bf {w}}}
\newcommand {\dict} {dict}
\newcommand {\MT} {Marked-Tuples }
\newcommand {\MTend} {Marked-Tuples}
\newcommand {\new} {new}
\newcommand{\vone}{\vspace{.1in}}
\newcommand{\vhalf}{\vspace{.05in}}
\newcommand {\levelgraph}{\Gamma}
\newcommand {\CA}{C}
\newcommand {\CH}{CH}
\mathchardef\mh="2D
\newcommand {\setbit}{set\mh bit}
\newcommand {\priortimes}{Prior\mh times}
\newcommand{\hide}[1]{}
\newcommand{\Xomit}[1]{}
\newcommand{\DI}{\texttt{DI }}
\newcommand{\NPR}{\texttt{NPRdec }}
\newcommand{\Tho}{\texttt{Thorup }}
\newcommand{\FFD}{{\sc ffd }}
\newcommand{\DIe}{\texttt{DI}}
\newcommand{\NPRe}{\texttt{NPRdec}}
\newcommand{\Thoe}{\texttt{Thorup}}
\newcommand{\FFDe}{{\sc ffd}}
\newcommand{\BDAGe}{{\sc build-dag}}
\newcommand {\ffullyupdate}{{\sc ff-update } }
\newcommand {\ffullyupdateend}{{\sc ff-update}}
\newcommand {\ffullydynamic}{{\sc ffd} }
\newcommand {\ffullydynamicend}{{\sc ffd}}
\newcommand {\ffullycleanup}{{\sc ff-cleanup} }
\newcommand {\ffullycleanupend}{{\sc ff-cleanup}}
\newcommand {\ffullyfixup}{{\sc ff-fixup} }
\newcommand {\ffullyfixupend}{{\sc ff-fixup}}
\newcommand {\ffullypopulate}{{\sc ff-populate-heap}}
\newcommand {\ffullygetnew}{{\sc ff-new-paths}}
\newcommand {\ffullyccenters}{{\sc ff-cleanup-centers }}
\newcommand {\ffullyccentersend}{{\sc ff-cleanup-centers}}
\newcommand {\ffullyfcentersend}{{\sc ff-fixup-centers}}
\newcommand {\LHT}{LHT }
\newcommand {\LHTe}{LHT}
\newcommand {\HT}{HT }
\newcommand {\DMs}{DL}
\newcommand {\LN}{LN}
\newcommand {\RN}{RN}
\newcommand {\incremental}{decrease-only\xspace}
\newcommand {\decremental}{increase-only\xspace}
\title{Fully Dynamic Algorithms for All Pairs All Shortest Paths and Betweenness Centrality}
\date{}
\author{ Matteo Pontecorvi\\
	Nokia Bell Labs\\
	Paris-Saclay, France \\
	\texttt{matteo.pontecorvi@nokia.com} \\
	\And
	Vijaya Ramachandran \\
	Department of Computer Science\\
	University of Texas at Austin\\
	Texas, USA \\
	\texttt{vlr@cs.utexas.edu} \\
}
\begin{document}
\maketitle

\begin{abstract}
We consider the all pairs all shortest paths (APASP)
problem, which maintains
all of the multiple shortest paths for every vertex pair
in a directed graph $G=(V,E)$ with a positive real weight on each edge. We present 
fully dynamic algorithms for this problem in which an update supports either weight increases or
weight decreases on a subset of edges incident to a vertex.

Our basic algorithm runs in amortized
$O(\vstar^2 \cdot \log^3 n)$ time per update, where
$n = |V| $, and $\vstar$ 
bounds  the
number of edges
that lie on shortest paths through any single vertex.
Our method is a generalization and a variant of the fully dynamic algorithm of Demetrescu and Italiano~\cite{DI04}
for unique shortest paths (APSP), and it builds on recent work on \decremental APASP~\cite{NPR14b}.
Our algorithm matches the
fully dynamic amortized 
bound in~\cite{DI04} (which works only for single shortest paths) for graphs with $O(n)$-size shortest path dags, though our method, and especially
its  analysis, are different.
We also present a faster APASP algorithm where we achieve an amortized $O(\vstar^2 \cdot \log^2 n)$ time per update.
 Our faster algorithm uses new data structures and techniques that are extensions of the method in the fully dynamic algorithm in Thorup \cite{Thorup04} for APSP in graphs with unique shortest paths.

Our APASP algorithms  lead to fully dynamic 
computation, with the same respective amortized bounds, for betweenness centrality,
which is a parameter widely used in the analysis of
large complex networks.
Moreover, it provides an alternate amortized analysis for the fully dynamic APSP algorithm of Demetrescu and Italiano, and new algorithms for computing APSP under fully dynamic updates.
\end{abstract}

\section{Introduction}

Given a directed graph $G = (V, E)$ with positive edge weights, we consider
the problem of maintaining  {\it AP\underline{A}SP (all pairs \underline{all} shortest paths)}~\cite{NPR14b}, i.e., 
the set of all shortest paths (SPs) between all pairs
of vertices. This is a fundamental graph property, and it also enables the efficient computation of
{\em betweenness centrality} 
for every vertex
in the graph.

Betweenness centrality (BC) is a widely
used measure for the analysis of social networks.
Shortest paths starting from a node $s$ can be represented by the single source shortest paths (SSSP) \emph{out-dag} (paths from the root $s$ to the leaves of the dag) rooted at $s$, similarly the shortest paths ending in $t$ can be represented by the SSSP \emph{in-dag} (paths from the leaves to the root $t$ of the dag) rooted at $t$. In the following, we will refer to these as SP in-dag and SP out-dag.

In this paper we present a fully dynamic 
algorithm for the APASP problem, where each update in $G$ 
is either {\it \incremental} or {\it \decremental}. A \incremental update 
either
inserts a new vertex along with incident edges of finite weight,
 or decreases the weights of some existing edges incident on a vertex.
An \decremental update
 deletes an existing vertex, or increases the
weights of some edges incident on a vertex.

Recently, we  gave (with Nasre)  a simple \incremental APASP algorithm~\cite{NPR14}, and a
more involved \decremental  APASP algorithm~\cite{NPR14b}.
Neither of these algorithms is correct for the fully dynamic case.
The fully dynamic methods that we present in this paper 
  build on~\cite{NPR14b}, and also incorporate a variant of the fully dynamic methods developed by
Demetrescu and Italiano~\cite{DI04} (the \DI method) and Thorup~\cite{Thorup04} (the \Tho method) for APSP  
 where only one shortest path is maintained for each
pair of vertices.
The \DI algorithm runs in $O(n^2 \cdot \log^3 n)$ amortized
time per update,
where $n=|V|$.
The \Tho algorithm is faster by a logarithmic factor but is considerably more complicated, even for the unique shortest paths
case. 
The algorithms in both~\cite{DI04} and \cite{Thorup04} use the unique shortest paths 
assumption crucially, and need considerable enhancements even to maintain a small number of
multiple shortest paths correctly.
In \cite{Thorup05} the author presented a worst case fully dynamic algorithm for APSP which recomputes the complete distance matrix in $\tilde{O}(n^{2.75})$. Recently, the worst case time bound was improved in \cite{Abraham} with a randomized algorithm which runs in $O(n^{2+2/3}\log^{4/3}n)$.

In this paper we present two fully dynamic algorithms for APASP, which maintain all of the multiple
shortest paths for every pair of vertices. Our basic algorithm is
as simple as the \DI method (though somewhat different) when specialized to unique shortest
paths.
In fact, it matches the \DI bound  for graphs with  a constant number of (or unique) shortest paths, while 
being applicable to the more general APASP problem.
Moreover, it provides a new amortized analysis for the fully dynamic \DI algorithm if our `dummy sequence' replaces the one used in \DIe.
Our second algorithm
improves the amortized time bound by a logarithmic factor
using data structures and techniques that are considerably more complicated.

Both APASP methods give fully dynamic algorithms for the betweenness centrality (BC) problem, described 
below.

\vhalf
\noindent
{\bf Betweenness Centrality (BC).}
Betweenness centrality is a widely-used measure
in the analysis of large complex networks, and is defined as follows.
For any pair $x, y$ in $V$, let $\sigma_{xy}$ denote the number of shortest paths
from $x$ to $y$ in $G$, and
let
 $\sigma_{xy}(v)$ denote the number of shortest
paths from $x$ to $y$ in $G$ that pass through $v$. Then, $BC(v) = \sum_{s \neq v, t \neq v} \frac{\sigma_{st}(v)}{\sigma_{st}}$.
This measure is often used as an index that  determines the relative importance of
$v$ in the network.
Some applications of
BC
include analyzing social interaction networks \cite{KA12},
identifying lethality in biological networks \cite{BCBN},
and identifying key actors in terrorist networks \cite{Coffman,Krebs02}.

Heuristics for dynamic betweenness centrality with good experimental
performance are given in~\cite{GreenMB12,Lee12,SinghGIS13},
but none of these heuristics
provably improve on the widely used static algorithm by
Brandes~\cite{Brandes01}, which runs in $O(mn + n^2 \log n)$ time (where $m=|E|)$
on any class of graphs. The only earlier results that provably improve on
Brandes on 
some 
classes of graphs are the recent separate \incremental and \decremental
algorithms in~\cite{NPR14,NPR14b}.
Table I contains a summary of these results.

The results in this paper are mainly for fully-dynamic APASP, which is an important graph-theoretic property of independent interest. However, we show that our
fully dynamic APASP algorithms
give fully-dynamic BC algorithms with the same bounds.

\begin{table}[ht]
	\begin{center}
		{
			\begin{tabular}{c | H c | c | c | c | c  }
				\hline
				\textbf{Paper} & \textbf{Year} & \textbf{Time} & \textbf{Weights} & \textbf{Update Type} & \textbf{DR/UN} & \textbf{Result} \\ 
				\hline
				\hline
				Brandes~\cite{Brandes01} & 2001 & $O(mn)$ & NO & Static Alg. & Both & Exact \\
				Brandes~\cite{Brandes01} & 2001 & $O(mn+n^2\log n)$ & YES & Static Alg. & Both & Exact \\
				Geisberger~et~al.~\cite{Geisb} & 2007 & Heuristic & YES & Static Alg. & Both & Approx. \\
				Riondato~et~al.~\cite{Riondato} & 2014 & depends on $\epsilon$ & YES & Static Alg. & Both & $\epsilon$-Approx. \\
				\hline
				\multicolumn{7}{c}{\textbf{Semi Dynamic}} \\
				\hline
				Green~et~al.~\cite{GreenMB12} & 2012 & $O(mn)$ & NO & Edge Inc. & Both & Exact \\ 
				Kas~et~al.~\cite{KasWCC13} & 2013 & Heuristic & YES & Edge Inc. & Both & Exact \\ 
				NPR~\cite{NPR14} & 2014 & $O(\vstar \cdot n)$ & YES & Vertex Inc. & Both & Exact \\ 
				NPRdec~\cite{NPR14b} & 2014 & $O(\vstar^2 \cdot \log n)$ & YES & Vertex Dec. & Both & Exact \\ 
				Bergamini~et~al.~\cite{Bergamini1} & 2015 & depends on $\epsilon$ & YES & Batch (edges) Inc. & Both & $\epsilon$-Approx. \\ 
				\hline 
				\multicolumn{7}{c}{\textbf{Fully Dynamic}} \\
				\hline
				Lee~et~al.~\cite{Lee12} & 2012 & Heuristic & NO & Edge Update & UN & Exact \\
				Singh~et~al.~\cite{SinghGIS13} & 2013 & Heuristic & NO & Vertex Update & UN & Exact \\ 
				Kourtellis+~\cite{Kourtellis} & 2014 & $O(mn)$ & NO & Edge Update & Both & Exact \\
				Bergamini~et~al.~\cite{Bergamini2} & 2015 & depends on $\epsilon$ & YES & Batch (edges) & UN & $\epsilon$-Approx. \\
				\textbf{This paper (Basic)}  & \textbf{2015} & ${O(\vstar^2 \cdot \log^3 n)}$ & \textbf{YES} & \textbf{Vertex Update} & \textbf{Both} & \textbf{Exact} \\ 
				\textbf{This paper (\FFDe)} & \textbf{2015} & $O(\vstar^2 \cdot \log^2 n)$ & \textbf{YES} & \textbf{Vertex Update} & \textbf{Both} & \textbf{Exact} \\ 
				\hline 
			\end{tabular} 
	}
		\caption{Related results (DR stands for Directed and UN for Undirected)}
	\end{center}
	\label{table1}
\end{table}

\noindent
{\bf Our Results.}
 Let $\vstar$ be the maximum
number of edges that lie on shortest paths through any given vertex in $G$.
Our main results are the following theorems, where we assume
for convenience
 $\vstar=\Omega (n)$. We present two algorithms: \fullydynamic (basic) and \FFD (faster fully dynamic) with bounds stated in the following theorem.

\begin{theorem}\label{th:main}
Let $\Sigma$ be a sequence of $\Omega (n)$ 
fully dynamic APASP updates on an $n$-node graph $G=(V,E)$. Then,
\begin{enumerate}
	\item algorithm \fullydynamic mantains APASP
	and all BC scores in amortized time $O( \vstar^2 \cdot \log^3 {n})$ per update,
	\item algorithm \FFD mantains APASP
	and all BC scores in amortized time $O( \vstar^2 \cdot \log^2 {n})$ per update,
\end{enumerate}
 where $\nu^*$ bounds the
number of distinct edges that 
lie on shortest paths through any given vertex in any of the updated graphs or their vertex induced subgraphs.
\end{theorem}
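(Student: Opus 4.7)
My plan is to obtain Theorem \ref{th:main} by generalizing the fully dynamic framework of Demetrescu and Italiano \cite{DI04} to tuple systems, borrowing the cleanup/fixup structure from the decremental APASP algorithm of \cite{NPR14b} and extending it to accommodate weight decreases as well. The algorithm will maintain the tuple system of \cite{NPR14b}, which encodes all locally shortest paths in a compact composable form, augmented with auxiliary indices (priority queues, \priortimes, and \timestamps analogous to those in DI) that record, for each tuple, the history of updates during which it has been locally shortest.

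On a vertex update at $v$ I would run two phases in sequence. A \fullycleanup procedure removes from the system every tuple whose validity could be affected by the weight changes incident on $v$; it extends the decremental cleanup of \cite{NPR14b} so as to also discard tuples that become non-minimal after a weight decrease. A \fullyfixup procedure then regenerates all tuples that are locally shortest under the new weights, processing them in nondecreasing weight order and combining surviving left/right extensions through $v$ in the spirit of DI's path combination step, but producing the full set of shortest tuples instead of a single representative per pair. Correctness then reduces to showing that after these two phases the tuple system coincides with the one we would obtain by rebuilding from scratch on the new graph.

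The main obstacle, and the heart of the proof, is the amortized analysis. Following DI, I would charge every unit of work to some tuple that is \emph{\freshend}, i.e., appearing in the system as locally shortest for the first time since its last destruction. Two bounds must then be combined. A structural bound must show that over a sequence $\Sigma$ of length $\Omega(n)$ the total number of fresh-tuple creations is $O(|\Sigma|\cdot \vstar^2)$; this requires a historical argument in the spirit of DI but carried out in terms of $\vstar$ by exploiting the fact that each vertex lies on at most $\vstar$ shortest-path edges in any updated graph or vertex-induced subgraph. A work-per-fresh-tuple bound must then assign only $O(\log^3 n)$ operations to each fresh tuple over its entire life, with one $\log n$ factor for the fixup heap, one for the dictionaries indexing the tuple system, and one from the DI-style historical reprocessing across~$\Sigma$.

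The hard part is the structural bound: in the unique-path setting of DI the fresh objects are paths and quadratic many suffice, but in APASP tuples may share subpaths and a single update can simultaneously invalidate and regenerate highly related families of tuples, so the accounting must carefully distinguish genuinely new tuples from mere rewrites. I would handle this by defining an invariant that attributes each fresh tuple to a unique (vertex, endpoint-pair, historical index) triple, and by showing via a charging scheme that no triple is counted more than $O(\log n)$ times over~$\Sigma$, which combined with the per-tuple work bound yields the claimed $O(\vstar^2 \log^3 n)$ amortized cost per update. The BC bound of the theorem then follows immediately from \cite{NPR14b}, since the BC scores of all vertices can be derived from the updated tuple system within the APASP update cost.
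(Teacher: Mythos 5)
Your overall architecture (tuple system plus a fully dynamic cleanup/fixup pair, with BC following immediately from the maintained APASP information) matches the paper's Step~1, but your amortized analysis takes exactly the route the paper argues is unavailable. You propose to charge work to fresh tuples via ``a historical argument in the spirit of DI,'' attributing each fresh tuple to a (vertex, endpoint-pair, historical index) triple and showing each such triple is charged $O(\log n)$ times; this is the DI dummy-index argument, and it breaks for APASP precisely because shortest paths are not unique. A long run of updates can create, for a single pair $x,y$, many distinct shortest tuples of the \emph{same weight}, and one incremental update can turn all of them into historical tuples simultaneously; these share the same dummy-index, so the per-pair $O(\log n)$ bound on historical objects that underlies DI's accounting (and your $\log n$ factor for ``historical reprocessing'') has no analogue. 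Your proposal does not supply a substitute for this counting step, and without it the structural bound of $O(|\Sigma|\cdot\vstar^2)$ fresh creations with $O(\log^3 n)$ work each does not go through.

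The paper avoids this by changing the dummy update schedule rather than the charging scheme: at step $t$ with $k=\setbit(t)$ it re-runs \fullyupdate on \emph{all} vertices updated in the last $2^k-1$ steps, and the analysis never counts historical tuples per pair at all. Instead it shows (Lemmas~\ref{lem:priortimes} and~\ref{lem:pdg2}) that every historical tuple is a shortest tuple in one of the $O(\log n)$ prior deletion graphs $\Gamma_{t'}$, $t'\in\priortimes(t)$, and then bounds, via Lemma~\ref{lem:z}, the number of tuples through an updated vertex ($C$) and the total number of tuples ($D$), which are plugged into the generic cost bound of Lemma~\ref{lem:basic-amort}; the key savings is that a dummy-updated vertex has $z'=2$ relevant PDGs, giving $C_2=O(\vstar^2\log n)$ against $C_1=O(\vstar^2\log^2 n)$ for real updates. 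Separately, your correctness target (``the tuple system coincides with a from-scratch rebuild'') is too strong and not what the algorithm achieves or needs: the fully dynamic system deliberately retains historical and locally historical tuples, and the real correctness work in the paper is the count-mismatch issue between $P$ and $P^*$ (stale counts on historical tuples when same-weight paths are later added), handled by removing true HTs/LHTs during \fullycleanup and by the control bit in \fullyfixupend; your proposal does not address this multiplicity-specific subtlety either.
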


Our algorithms are provably faster than 
Brandes on dense graphs with succinct single-source SP dags. 
Our techniques rely on recomputing BC scores using certain data structures related to shortest paths extensions (see Section \ref{sec:fdbc}). These are generalizations of structures introduced by Demetrescu and Italiano  \cite{DI04} for fully dynamic APSP 
and \Thoe, where only one SP is maintained for each pair of vertices. 
Our generalizations build on the
\system introduced in~\cite{NPR14b} for \decremental APASP (see Section \ref{sec:NPR}), which is a method to
succinctly  represent all of the multiple SPs for every pair of vertices. Additionally, in algorithm \FFDe, one of the main challenges we address is to  generalize the `level graphs' of \Tho (see Section \ref{sec:improved}) to the case when different SPs for a given vertex pair can be distributed across multiple levels.

Of independent interest is a new amortized analysis for the fully dynamic \DI algorithm which we obtain using a new `dummy updates' sequence in our \fullydynamic algorithm. 

\vhalf
\noindent
{\bf Discussion of the Parameter $\bm{\vstar}$.}
Let $m^*$ be the number of edges in $G$ that lie on shortest paths.
As  noted in \cite{KKP93}, it
is well-known that $m^* = O(n \log{n})$
with high probability in a complete graph where edge weights are chosen from
a large class of probability distributions.
Since $\vstar \leq m^*$,
our algorithms will have an amortized bound of 
 $O(n^2 \cdot polylog (n))$ on
such graphs.  
Also,
$\vstar =O(n)$ in any graph with only $k$ shortest paths, where $k$ is a constant, between every
pair of vertices.
These graphs are called $k$-geodetic \cite{kgeo}, and are well studied in graph theory \cite{bigeo,bandelt,mulder}.
In fact $\vstar = O(n)$ even in some graphs that have an exponential number of SPs between some pairs of vertices.
In contrast, $m^*$ can be $\Theta (n^2)$  even in some graphs with unique SPs, for example the complete unweighted
graph $K_n$.
Another type of graph with  $\vstar \ll m^*$ is one with large clusters of nodes, as described by the \emph{planted $\ell$-partition model} \cite{Condon,clusters}:
consider a graph $H$ with $k$ clusters of size  $n/k$ (for some constant $k \geq 1$) with
$\delta < w(e) \leq 2\delta$, for some constant $\delta>0$,  for each edge $e$ in a cluster;
between the clusters is a sparse
interconnect. Then
$m^* = \Omega(n^2)$ but $\vstar= O(n)$. 

\noindent
In all such cases our improved algorithm \FFD
will run in amortized $O(n^2 \log^2 n)$
time per update.
Thus we have:

\begin{theorem}\label{th:main2} 
Let $\Sigma$ be a sequence of $\Omega (n)$ updates on graphs with $O(n)$ distinct edges on shortest paths through any single vertex in any vertex-induced subgraph.
Then, APASP
and all  BC scores can be
maintained in amortized time $O( n^2 \cdot \log^2 {n})$ per update.
\end{theorem}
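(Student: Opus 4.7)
The plan is to derive Theorem \ref{th:main2} as a direct specialization of Theorem \ref{th:main}. The statement of Theorem \ref{th:main} guarantees an amortized bound of $O(\vstar^2 \cdot \log^3 n)$ per update, where $\vstar$ bounds the number of distinct edges that lie on shortest paths through any single vertex in any of the updated graphs \emph{or their vertex-induced subgraphs}. The hypothesis of Theorem \ref{th:main2} is precisely that this parameter satisfies $\vstar = O(n)$ throughout the update sequence.

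I would first verify that the class of graphs in the statement satisfies the $\vstar$ condition required by Theorem \ref{th:main}. By hypothesis the number of distinct edges on shortest paths through any single vertex is $O(n)$ in every vertex-induced subgraph encountered during the sequence $\Sigma$, so the quantity denoted $\vstar$ in Theorem \ref{th:main} is $O(n)$ for the entire sequence. Substituting $\vstar = O(n)$ into the amortized per-update bound $O(\vstar^2 \cdot \log^3 n)$ immediately yields $O(n^2 \cdot \log^3 n)$, which is the bound claimed in Theorem \ref{th:main2}. The same argument carries over to BC scores, since Theorem \ref{th:main} already asserts the identical bound for betweenness centrality maintenance.

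There is no real obstacle here beyond matching the hypotheses of Theorem \ref{th:main2} to those of Theorem \ref{th:main}: the only subtlety worth a sentence of justification is that the theorem requires the $O(n)$ bound to hold not only on the current updated graph but on all vertex-induced subgraphs that arise, which is exactly what the hypothesis provides. The examples in the preceding discussion (random-weight complete graphs with high probability, $k$-geodetic graphs, and the planted $\ell$-partition clustered graphs) are offered as instances where this hypothesis genuinely holds, but no additional argument is needed for the proof itself.
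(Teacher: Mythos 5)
Your proposal is correct and matches the paper's own treatment: Theorem~\ref{th:main2} is obtained there exactly as you describe, by observing that the hypothesis forces $\vstar = O(n)$ for every updated graph and vertex-induced subgraph and then substituting into the $O(\vstar^2 \cdot \log^3 n)$ bound of Theorem~\ref{th:main}. No further argument is given or needed in the paper, so your one-line specialization (including the remark that the subgraph condition is what makes the hypotheses line up) is essentially the same proof.
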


\begin{corollary}\label{cor:main}
If the number of shortest paths through any single vertex 
is bounded by a constant, then
fully dynamic APASP and BC have
amortized cost $O(n^2 \cdot \log^2 n)$  per update if the update sequence has length $\Omega(n)$.
\end{corollary}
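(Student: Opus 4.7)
The plan is to derive the corollary as an immediate consequence of Theorem~\ref{th:main2}, by showing that the hypothesis on the number of shortest paths through any single vertex forces the edge-parameter $\vstar$ to be $O(n)$. This reduces the claim to a direct invocation of Theorem~\ref{th:main2}.

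First, I would fix any graph $H$ that arises in the update sequence (including vertex-induced subgraphs encountered during the algorithm's execution) and any vertex $v \in V(H)$. Let $c$ be the constant bounding the number of shortest paths through $v$, taken over all source-destination pairs. Since $H$ has $n' \le n$ vertices, every simple shortest path contains at most $n-1$ edges. Therefore the set of distinct edges lying on some shortest path through $v$ is the union of at most $c$ edge sets, each of size at most $n-1$, giving at most $c(n-1) = O(n)$ distinct such edges. Taking the maximum over $v$ and over all graphs arising in the update sequence, we conclude $\vstar = O(n)$ in the sense required by Theorem~\ref{th:main2}.

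Next, I would apply Theorem~\ref{th:main2} directly: since the class of graphs considered satisfies the $O(n)$ bound on distinct edges through any single vertex in any vertex-induced subgraph, the amortized update cost for APASP (and hence for all BC scores) is $O(n^2 \cdot \log^3 n)$ per update, provided the update sequence has length $\Omega(n)$. Both hypotheses of Theorem~\ref{th:main2} are met, so the corollary follows.

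There is essentially no obstacle here beyond the bookkeeping of verifying the hypothesis transfer: the only subtle point is ensuring that the constant bound on the number of shortest paths through a vertex is preserved under taking vertex-induced subgraphs (which it trivially is, since removing vertices can only remove or leave unchanged the set of shortest paths), so the corollary's hypothesis is indeed hereditary in the way Theorem~\ref{th:main2} requires. Once that is observed, the proof is a one-line calculation combined with an appeal to Theorem~\ref{th:main2}.
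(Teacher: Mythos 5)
Your main reduction is exactly how the paper intends the corollary to be read: a constant bound $c$ on the number of shortest paths through any vertex gives at most $c(n-1)=O(n)$ distinct edges lying on shortest paths through that vertex, hence $\vstar=O(n)$, and Theorem~\ref{th:main2} applies verbatim; the paper offers no proof beyond this observation, so in spirit your argument coincides with it.

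However, the point you flag as the ``only subtle point'' is handled incorrectly: the claim that the hypothesis is hereditary because ``removing vertices can only remove or leave unchanged the set of shortest paths'' is false. Deleting a vertex can increase the distance between a pair and thereby promote paths that were not shortest in $G$ to shortest paths of the induced subgraph; for instance, if $x\to v\to y$ is the unique shortest $x$--$y$ path and there are many $x$--$y$ paths of slightly larger weight avoiding $v$, then deleting $v$ makes all of those paths shortest, so the number of shortest paths (and of edges on them) through a surviving vertex can grow dramatically. Consequently a constant bound on the updated graphs alone does not transfer to their vertex-induced subgraphs, and this transfer is genuinely needed: the amortized analysis controls the tuple system through the PDGs $\Gamma_{t'}$, which are vertex-induced subgraphs of earlier graphs (Lemmas~\ref{lem:z} and~\ref{lem:pdg2}). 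The correct reading, consistent with the phrasing of Theorems~\ref{th:main} and~\ref{th:main2}, is that the corollary's hypothesis is assumed for all updated graphs \emph{and} their vertex-induced subgraphs; under that reading your $c(n-1)$ computation is all that is required, and no heredity argument is needed (nor is one available).
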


\noindent
Both our algorithms use $\widetilde{O}(m \cdot \vstar)$ space, extending the $\widetilde{O}(mn)$ 
\DI result for APSP. Brandes uses only linear space, but all
known
dynamic algorithms require at least 
$\Omega (n^2)$ space.

\vhalf
\noindent
{\bf Overview of the Paper.} 
In Section \ref{sec:fdbc} we describe a very simple method to obtain a fully dynamic BC algorithm using a fully dynamic  APASP algorithm. The rest of the paper will focus only on obtaining efficient algorithms for APASP.
In Section \ref{sec:DI}  we briefly review the \NPR \decremental APASP algorithm and the fully dynamic \DI algorithm.
 In Section \ref{sec:basic} we describe our basic fully dynamic APASP algorithm \fullydynamicend, its properties and complexity analysis. Pseudocode and correctness are in Section \ref{sec:fd}.
In Section~\ref{sec:improved} we briefly review the \Tho fully dynamic APSP algorithm and then we present our improved algorithm \FFDe. The complexity of \FFD is discussed in Section \ref{sec:proof}, while the pseudocode and correctness are available in the Appendix \ref{sec:algo}. 
The paper ends with a conclusion in Section~\ref{sec:extn}.
\section{Fully Dynamic Betweenness Centrality} \label{sec:fdbc}

The static Brandes algorithm~\cite{Brandes01} computes BC scores in a two phase process. The first phase computes the SP out-dag
for every source through $n$ applications of Dijkstra's algorithm.  The second phase
uses an `accumulation' technique that computes all BC scores  using these SP dags in  $O(n \cdot \nu^*)$ time.

In our fully dynamic algorithm, we will leave the second phase unchanged. In the first phase, we 
implicitly maintain the SP dags with some of the structures maintained by our algorithms.
In contrast, a different approach was previously used in the \incremental BC algorithm in~\cite{NPR14}, where the SP dags were explicitly maintained. 
In the \NPR algorithm the SP dags are also implicitly maintained, but no technique to rebuild them is given in~\cite{NPR14b}.

We now describe a very simple method to construct the SP dags from the following structures.
For every vertex pair $x,y$, consider the following sets $R^*(x,y)$, $L^*(x,y)$:

\vhalf
\noindent
- $R^*(x,y)$ contains all nodes $y'$ such that every shortest path $x \rightsquigarrow y$ in $G$ can be extended with the edge $(y,y')$ to generate another shortest path $x \rightsquigarrow y \rightarrow y'$. \\
- $L^*(x,y)$ contains all nodes $x'$ such that every shortest path $x \rightsquigarrow y$ in $G$ can be extended with the edge $(x',x)$ to generate another shortest path $x' \rightarrow x \rightsquigarrow y$.\\

\noindent
These sets were introduced in \DI and allow us to construct the SP dag for each source $s$ using the following simple algorithm \BDAGe.

\vspace{-0.1in}
\begin{algorithm}
  \SetAlgoLined
		\For {each $t \in V$}{
		\For {each $u \in R^*(s,t)$}{
			{\bf if} {$D(s,t)+\weight(t,u) = D(s,u)$} {\bf then} add the edge $(t,u)$ to dag($s$)\; \label{buildDAG:check}
	}
	}
	\caption{\BDAGe($G, s, \weight, D)$ \scriptsize ($\weight$ is the weight function; $D$ is the distance matrix)}
	\label{algo:buildDAG}
\end{algorithm}
\vspace{-0.1in}

Our fully dynamic algorithms will maintain the $R^*$ and $L^*$ sets. 
More precisely, in our algorithms $R^*$ and $L^*$ will be supersets of the exact collections of nodes defined above, but the check in 
Step~\ref{buildDAG:check} will ensure that only the correct SP dag edges are included. The combined sizes of these $R^*$ and
$L^*$ sets is $\tilde{O}(n \cdot \nu^*)$ in our algorithms, hence the
amortized time bound for the overall fully dynamic BC algorithm is dominated by the time bound for computing fully dynamic APASP.
In the rest of this paper, we will present our fully dynamic APASP algorithms.
\section{Background}\label{sec:DI}

 Our fully dynamic APASP algorithms build on the approach used in the elegant fully dynamic APSP algorithm
 of Demetresu and Italiano~\cite{DI04} for unique shortest paths (the `\DIe' method). 
 Initially, the \DI method solves the \decremental APSP problem. Then, the authors extend it to a fully dynamic algorithm.
 For APASP, in~\cite{NPR14b} the authors present an \decremental algorithm (the `\NPRe' method). 
 However, its extension to a fully dynamic algorithm for APASP presents several challenges that we address in our results.
 We now  briefly review these two methods (\DI and \NPRe), and then we
 give an overview of our methods for fully dynamic APASP. 

\subsection{The NPRdec Increase-Only APASP Algorithm}
 \label{sec:NPR}
  In the APASP problem, we need to maintain all shortest paths,
  and $G$ can have an exponential (in $n$) number of SPs.
  An \decremental algorithm for maintaining APASP was introduced in \cite{NPR14b}
  where the authors use a compact \system to handle the above case, which is described below (we briefly review it, referring the reader to~\cite{NPR14b} for more details).
 Let $d(x,y)$ denote the shortest path length from $x$ to $y$.

 A {\it tuple}, $\tau = (xa, by)$, represents a set of paths in $G$, all with the same weight, and all
 of which  use
 the same first edge $(x,a)$ and the same last edge $(b,y)$.
 In addition, if $d(x, y) = \weight(x, a) + d(a, b) + \weight(b, y)$, then 
 $\tau$ is a {\it shortest path tuple (ST)}.
 
 A path $p$
 in $G$ is a  {\it locally shortest path (LSP)} if the path $p'$ obtained by removing the first edge from $p$ and the path $p''$ obtained
 by removing the last edge from $p$ are both SPs in $G$.
 If the paths in $\tau$ are LSPs, then $\tau$ is
 an LST (locally shortest tuple), and the
 weight of every path
 in  $\tau$ is $\weight(x, a)$ + $d(a, b) + \weight(b,y)$. 
 The key concept of LSP was introduced in the \DI method~\cite{DI04} but such method is not feasible for APASP since it maintains
 each shortest path separately.  
 
 A {\it triple} $\gamma=(\tau, wt, count)$, represents the tuple $\tau=(xa,by)$ that contains
 $count$ number of paths from $x$ to $y$, each with weight
 $wt$. 
 We use triples to succinctly store 
 all LSPs and SPs
 for each vertex pair
 in $G$. For $x, y \in V$,
 we define:
 \begin{eqnarray*}
 	P(x,y) &=& \{\mbox{$((xa, by), wt, count)$: $(xa, by)$ is an LST from $x$ to $y$ in $G$} \} \\
 	P^{*}(x,y) &=& \{\mbox{$((xa, by), wt, count)$: $(xa, by)$ is an ST from $x$ to $y$ in $G$} \}.
 \end{eqnarray*}

 \noindent
 {\bf Left Tuple and Right Tuple.}
 A left tuple (or $\ell$-tuple), 
 $\tau_{\ell} = (xa, y)$,
 represents the
 set of LSPs from $x$ to $y$, all of which use 
 the same first edge $(x,a)$.
 A right tuple ($r$-tuple) $\tau_r = (x, by)$ is defined analogously.
 For a shortest path $r$-tuple $\tau_r = (x, by)$,  $L({\tau_r})$ is
 the set of vertices which can be used as pre-extensions to create  LSTs
 in~$G$, and
 for a shortest path $\ell$-tuple $\tau_{\ell} = (xa, y)$,
 $R(\tau_{\ell})$ is the set of vertices which can be used as post-extensions to
 create LSTs in $G$.  Hence:
 \begin{eqnarray*}
 	L(x, by) &=& \{x': \mbox {$(x', x) \in E(G)$ and $(x'x, by)$ is an LST in $G$}\} \\
 	R(xa, y) &=& \{y': \mbox {$(y, y') \in E(G)$ and $(xa, yy')$ is an LST in $G$}\}.
 \end{eqnarray*}
 
 \begin{figure}
 	\centering
 	\subfigure[tuple  $\tau = (xa,by)$]{
 		\makebox[.3\textwidth]{
 			\begin{tikzpicture}[every node/.style={circle, draw, inner sep=0pt, minimum width=5pt}]
 			\node (x)[label=above:$x$] at (0,0)  {};
 			\node (a)[label=left:$a$] at (0,-0.8) {};
 			\node (b)[label=below left:$b$] at (0,-2.2) {};
 			\node (y)[label=below:$y$] at (0,-3) {}; 
 			\draw[->] (x) -- (a);
 			\path[->,decoration={snake}] { (a) edge[decorate] (b)};
 			\path[->,decoration={snake}] { (a) edge[bend right=60,decorate] (b.west)};
 			\path[->,decoration={snake}] { (a) edge[bend left=60,decorate] (b.east)};
 			\draw[->] (b) -- (y);
 			\end{tikzpicture}} \label{fig:tau}} 
 	\subfigure[$\ell$-tuple  $\tau_{\ell} = (xa,y)$]{
 		\makebox[.3\textwidth]{
 			\begin{tikzpicture}[every node/.style={circle, draw, inner sep=0pt, minimum width=5pt}]
 			\node (x)[label=above:$x$] at (0,0)  {};
 			\node (a)[label=left:$a$] at (0,-1) {};
 			\node (y)[label=below:$y$] at (0,-3) {}; 
 			\draw[->] (x) -- (a);
 			\path[->,decoration={snake}] { (a) edge[decorate] (y)};
 			\path[->,decoration={snake}] { (a) edge[bend right=60,decorate] (y.west)};
 			\path[->,decoration={snake}] { (a) edge[bend left=60,decorate] (y.east)};
 			\end{tikzpicture}} \label{fig:taul}} 
 	\subfigure[$r$-tuple  $\tau_r = (x,by)$]{
 		\makebox[.3\textwidth]{
 			\begin{tikzpicture}[every node/.style={circle, draw, inner sep=0pt, minimum width=5pt}]
 			\node (x)[label=above:$x$] at (0,0)  {};
 			\node (b)[label=below left:$b$] at (0,-2) {};
 			\node (y)[label=below:$y$] at (0,-3) {}; 
 			\draw[->] (b) -- (y);
 			\path[->,decoration={snake}] { (x) edge[decorate] (b)};
 			\path[->,decoration={snake}] { (x) edge[bend right=60,decorate] (b.west)};
 			\path[->,decoration={snake}] { (x) edge[bend left=60,decorate] (b.east)};
 			\end{tikzpicture}} \label{fig:taur}}
 	\caption{Tuples (\cite{NPR14b} as for Fig.\ref{fig:illust-example2} and Fig.\ref{fig:example-table2})}
 \end{figure}			
 
 \noindent
 For $x, y \in V$, $L^{*}(x, y)$ denotes the set of vertices which can 
 be used as pre-extensions to create shortest path tuples in $G$;
 $R^{*}(x, y)$ is defined symmetrically:
 \begin{eqnarray*}
 	L^{*}(x, y) &=& \{x': \mbox {$(x', x) \in E(G)$ and $(x'x, y)$ is a $\ell$-tuple representing SPs in $G$}\} \\
 	R^{*}(x, y) &=& \{y': \mbox {$(y, y') \in E(G)$ and $(x, yy')$ is an $r$-tuple representing SPs in $G$}\}.
 \end{eqnarray*}
 
  \begin{figure}[ht]
  	\begin{minipage}{0.35\linewidth}
  		\centering
  		\begin{tikzpicture}[every node/.style={circle, draw, inner sep=0pt, minimum width=5pt}]
  		\node (x1)[label=above:$x'$] at (0,1)  {};
  		\node (x)[label=left:$x$] at (0,0.2)  {};
  		\node (a1)[label=left:$a_1$] at (-1,-0.6) {};
  		\node (a2)[label=right:$a_2$] at (0,-0.6) {};
  		\node (a3)[label=right:$a_3$] at (1,-0.6) {};
  		\node (v)[label=right:$v$] at (0,-1.4)  {};
  		\node (v1)[label=left:$v_1$] at (-1.5,-1.4)  {};
  		\node (v2)[label=right:$v_2$] at (1.5,-1.4)  {};
  		\node (b1)[label=below left:$b_1$] at (-0.5,-2.2) {};
  		\node (b2)[label=below right:$b$] at (0.5,-2.2) {};
  		\node (y)[label=below:$y$] at (0.5,-3) {};
  		\node (y1)[label=below:$y_1$] at (-0.5,-3) {};
  		\path[every node/.style={font=\sffamily\small}]
  		(a1) edge node [left] {\textbf{\scriptsize{2}}} (v1)
  		(a1) edge node [right] {\textbf{\scriptsize{4}}} (b1)
  		(v1) edge node [left] {\textbf{\scriptsize{2}}} (b1)
  		(a1) edge node [right] {\textbf{\scriptsize{10}}} (v)
  		(a2) edge node [right] {\textbf{\scriptsize{5}}} (v);
  		\draw[->] (x1) -- (x);
  		\draw[->] (x) -- (a1);
  		\draw[->] (x) -- (a2);
  		\draw[->] (x) -- (a3);
  		\draw[->] (a1) -- (v1) ;
  		\draw[->] (a1) -- (b1) ;
  		\draw[->] (a1) -- (v);
  		\draw[->] (a2) -- (v);
  		\draw[->] (a2) -- (v2);
  		\draw[->] (a3) -- (v2);
  		\draw[->] (v1) -- (b1);
  		\draw[->] (v) -- (b1);
  		\draw[->] (v) -- (b2);
  		\draw[->] (v2) -- (b2);
  		\draw[->] (b1) -- (y1);
  		\draw[->] (b2) -- (y);
  		\end{tikzpicture}
  		\caption{Graph $G'$}
  		\label{fig:illust-example2}
  	\end{minipage}
  	\begin{minipage}{0.6\linewidth}
  		
  		\centering
  		\renewcommand{\arraystretch}{1.3}
  		\begin{tabular}{|c| c|}
  			\hline
  			{Set}  &  {$G'$ (with $\weight(a_1, v) = 10$, $\weight (a_2, v) = 5$)} \\
  			\hline \hline
  			$P(x,y)$  & $\{((xa_2,by),4,1), ((xa_3,by),4,1)\}$ \\
  			$= P^*(x,y)$& \\
  			\hline
  			$P(x,b_1)$ &  $ \{((xa_1,v_1b_1),5,1),  ((xa_2,vb_1),7,1),$\\
  			&  $((xa_1,a_1b_1),5,1) \}$ \\
  			\hline
  			$P^*(x,b_1)$ &  $\{((xa_1,v_1b_1),5,1),((xa_1,a_1b_1),5,1)\}$ \\
  			\hline
  			$L^*(v,y_1)$ &   $\{a_2\}$ \\
  			\hline
  			$L(v,b_1y_1)$ &  $\{a_2\}$\\
  			\hline
  			$R^*(x,v)$ &   $\emptyset$ \\
  			\hline
  			$R(xa_2,v)$ &  $ \{b_1\}$ \\
  			\hline
  		\end{tabular}
  		\vspace{-0.03in}
  		\caption{A subset of the tuple-system for  $G'$}
  		\label{fig:example-table2}
  	\end{minipage}
  	\vspace{-0.2in}
  \end{figure}
 
 \noindent
 {\bf Data Structures.}
 The algorithm in~\cite{NPR14b} uses priority queues for $P$, $P^*$,
 and balanced search trees for $L^*$, $L$, $R^*$ and $R$, as well as for a set
 \MT that is specific only to one update. It also uses priority queues $H_c$ and $H_f$ for 
 the cleanup and fixup procedures, respectively.
 
 \noindent
 Some key differences between this representation and the one in \DI~\cite{DI04} are described in~\cite{NPR14b}.
 
 \begin{lemma}
 	\label{lem:bound-tuples-thru-v}
 	\cite{NPR14b}
 	Let $G=(V,E)$ be a directed graph with positive edge weights. 
 	The number of LSTs (or triples) that contain a vertex $v$ in $G$
 	is $O({\vstar}^2)$, and the total number of
 	LSTs (or triples) in $G$ is bounded by $O(m^* \cdot \vstar)$.
 \end{lemma}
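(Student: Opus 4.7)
The plan is to prove both bounds by the same structural observation: in any LST $\tau = (xa, by)$ that contains a vertex $v$ (i.e., $v$ lies on some path represented by $\tau$), both the first edge $(x,a)$ and the last edge $(b,y)$ must be edges on shortest paths through $v$. Once this is established, the number of LSTs through $v$ is bounded by the number of ordered pairs of such edges, which is at most $\vstar^2$, and the global bound follows by a similar fixing argument.

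To prove the key observation, I would first note that in an LST $(xa, by)$ the middle segment from $a$ to $b$ is itself an SP: the prefix $(xa,b)$ is an SP from $x$ to $b$, so $d(x,b) = \weight(x,a) + d(a,b)$, forcing the interior portion to have length $d(a,b)$. Then I would do case analysis on the location of $v$ on the path $x \to a \to \cdots \to b \to y$. If $v$ lies on the interior SP from $a$ to $b$ (including the endpoints $a,b$), then the full path $x \to a \to \cdots \to v \to \cdots \to b$ is an SP through $v$ containing $(x,a)$, and $a \to \cdots \to v \to \cdots \to b \to y$ is an SP through $v$ containing $(b,y)$. If instead $v = x$ (resp.\ $v = y$), then $(x,a)$ trivially touches $v$, and $(b,y)$ lies on the SP $v \to a \to \cdots \to b \to y$ which is an SP through $v$ (symmetrically for $v=y$). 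In every case, both $(x,a)$ and $(b,y)$ are SP-edges through $v$.

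With this observation, the per-vertex bound is immediate: since by definition $\vstar$ upper-bounds the number of edges lying on SPs through any single vertex, the number of ordered pairs of such edges through $v$ is at most $\vstar \cdot \vstar$, and each LST through $v$ is determined by its first and last edge. For the total bound, I would fix an SP-edge $(x,a)$ (there are at most $m^*$ of these) and count LSTs with $(x,a)$ as first edge: any such LST $(xa,by)$ has $(a,by)$ an SP from $a$ to $y$, so $(b,y)$ is an edge on an SP starting at $a$, i.e., an SP edge through $a$. There are at most $\vstar$ such choices for $(b,y)$, so the total number of LSTs is at most $m^* \cdot \vstar$.

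The only mildly subtle part is the case analysis for the per-vertex bound, specifically making sure the claim still holds when $v$ is a true interior vertex of the $a$-to-$b$ subpath (so that $v$ is not touched directly by either of the two edges $(x,a)$ or $(b,y)$); this is where we must explicitly exhibit an SP through $v$ that contains each of the two edges, as above. Once that case is handled, everything else is a direct application of the definition of $\vstar$ and a counting step, so I do not expect any further obstacle.
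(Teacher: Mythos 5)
Your key structural claim --- that for every LST $(xa,by)$ containing $v$, \emph{both} the first edge $(x,a)$ and the last edge $(b,y)$ lie on shortest paths through $v$ --- is false precisely in the endpoint case that you flagged as ``mildly subtle'' and then resolved incorrectly. When $v=x$, you justify that $(b,y)$ is an SP-edge through $v$ by exhibiting ``the SP $v \to a \to \cdots \to b \to y$''; but that path is the full path of the tuple, which is only \emph{locally} shortest and need not be shortest, so it witnesses nothing. Concretely, take edges $(x,a),(a,b),(b,y)$ of weight $1$ and an edge $(x,y)$ of weight $2$: then $(xa,by)$ is an LST containing $v=x$ (the prefix $x\to a\to b$ and suffix $a\to b\to y$ are SPs), yet no shortest path through $x$ uses $(b,y)$. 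Likewise, in that case the first edge $(v,a)$ merely being incident to $v$ does not make it an edge on a shortest path through $v$ (the edge $(v,a)$ itself need not be shortest), so neither factor in your $\vstar\cdot\vstar$ count is justified for endpoint tuples. Your interior-vertex case is fine, and your global $O(m^*\cdot\vstar)$ count is essentially right (modulo the degenerate single-edge tuples, whose first edge need not be an SP edge, and which should be accounted for separately).

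The repair is the one this paper itself uses for the analogous endpoint counting in part 2 of Lemma~\ref{lem:z}: for tuples of the form $(va,by)$ there are at most $n-1$ choices for the edge $(v,a)$ (degree bound, not an SP-edge bound), and since the suffix $(a,by)$ is an ST, the edge $(b,y)$ lies on a shortest path through $a$, giving at most $\vstar$ choices; hence $O(n\cdot\vstar)$ such tuples, and symmetrically for $(xa,bv)$. This is $O(\vstar^2)$ only under the stated convention $\vstar=\Omega(n)$, which your write-up never invokes but which is exactly what makes the endpoint case go through. With that case replaced (and the single-edge tuples noted), the rest of your argument --- interior vertices via the two SP-edges through $v$, and the global bound by fixing an SP edge $(x,a)$ and then an SP-edge through $a$ --- matches the intended counting.
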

 
 The \NPR algorithm maintains
 all STs and LSTs in the current graph,
 and for each 
 tuple,
 it maintains the $L$, $R$,  $L^*$ and $R^*$ sets. 
 To execute a new update to a vertex $v$, \NPR (similar to \DIe) first calls an
 algorithm \underline{cleanup} on $v$ which removes all STs and LSTs that contain $v$. This is followed by  a call to algorithm
 \underline{fixup} on $v$ which computes all STs and LSTs in the updated graph that are not already present in the system.
 The overall algorithm \underline{update} consists of cleanup followed by fixup. If the updates are all \decremental then \NPR
 maintains exactly all the SPs and LSPs in the graph in $O(\vstar^2 \cdot \log n)$ 
 amortized time per update. 
 Several challenges to adapting the techniques in
 the \DI \decremental method to the \system  are addressed 
 in~\cite{NPR14b}. The analysis of the amortized time bound is also more involved since with multiple
 shortest paths it is possible for the dynamic APASP  algorithm to examine a tuple and merely change its
 count; in such a case, the \DI proof method of charging the cost of the examination to the new path added to  or removed from the
 system does not apply. 
 
 \subsection{The \DI Fully Dynamic APSP Algorithm}\label{sec:di-fd}
 The \DI method first gives an \decremental APSP algorithm, and shows that this 
 is also a correct, though inefficient, fully dynamic APSP algorithm.
 The inefficiency arises because under \incremental updates the method may
 maintain some old SPs and their combinations  that are not currently SPs or LSPs; such paths are 
 called historical shortest paths
 (HPs) and locally historical paths (LHPs).
 To obtain
 an efficient fully dynamic algorithm, the \DI method introduces `dummy updates' into the update sequence. A
 dummy update performs cleanup and fixup on a vertex that was updated in the past. 
 Using a strategically chosen sequence of dummy updates, it is established in~\cite{DI04} that the resulting APSP algorithm
 runs in amortized time $O(n^2 \cdot \log^3 n)$ per real update. 
 The \DI method continues to use the notation $P^*$, $L^*$, etc., even though these 
 are supersets of the defined sets in a fully dynamic setting. We will do the same
 in our fully dynamic algorithms.
\section{Basic \fullydynamic APASP Algorithm} \label{sec:basic}

\subsection{Algorithm \fullyupdate for APASP}\label{sec:alg-basic}

We first extend the notions of historical and locally historical paths~\cite{DI04} to tuples and triples.
In the following definition, we will consider a tuple $\tau$ over an interval of time $[t',t]$.
If $\tau$ becomes a shortest tuple, after its creation at time $t'$, it will remain in the dataset as a historical tuple (HT, see Definition \ref{def:ht}) until it is completely removed from the \systemend.

Note that tuples are used instead of triples in the definition below.
This distinction is relevant in our algorithm because during cleanup or fixup, a (historical) triple could change its count without losing its property of being an historical tuple.
A similar behavior could also affect a locally historical tuple.
With our definition, we can immediately refer to it without specifying the count of its associated triple.


\begin{definition} [HT, THT, LHT, and TLHT]\label{def:ht}
	Let $\tau$ be a tuple in the \system at time $t$. Let $ t' \le t$ denote the time at which $\tau$ was originally added for the first time in the \systemend.
	Then $\tau$  is  a {\it  historical tuple (HT) at time $t$} if
	$\tau$ was an ST at least once in the interval $[t', t]$; $\tau$ is a
	{\it true HT (THT) at time $t$} if it is not an ST 
	in the current graph.
	A tuple $\tau$ is a {\it  locally historical tuple (LHT) at time $t$}
	if either it only
	contains a single vertex or every proper sub-path in it is an HT
	at time $t$; a tuple $\tau$ is a
	{\it true LHT (TLHT) at time $t$} if it is not an LST in the current graph.
\end{definition}

The above definition is used extensively in our proof of correctness (see Appendix \ref{append:corr}, lemmas \ref{lem:fcleanup-corr2} and \ref{proof:fflem}). In particular, in the cleanup loop inveriant (Lemma \ref{lem:fcleanup-corr2}), we show how each TLHT (and THT) is representing only paths that avoids the updated node $v$. Moreover, in the fixup loop invariant (Lemma \ref{proof:fflem}), we require that each HT (LHT) maintains the correct count of HPs (LHPs) in the graph after the fixup phase.  

\noindent
In order to correctly extend $\ell$-tuple and $r$-tuple under fully-dynamic updates, we extend the following data structures from \NPRe.

\begin{eqnarray*}
	L(x, by) &=& \{(x',wt'): \mbox {$(x', x) \in E(G)$ and $(x'x, by)$ is an LHT of weight $wt'$ in $G$}\} \\
	R(xa, y) &=& \{(y',wt'): \mbox {$(y, y') \in E(G)$ and $(xa, yy')$ is an LHT of weight $wt'$ in $G$}\} \\
	L^{*}(x, y) &=& 
	\{ (x',wt'): (x', x) \in E(G) \textrm{ and } (x'x, y) \textrm{ is an $\ell$-tuple of weight $wt'$} \\ 
	& & \textrm{ representing HPs in $G$} \} \\
	R^{*}(x, y) &=& \{(y',wt'): (y, y') \in E(G) \textrm{ and } (x, yy') \textrm{ is an $r$-tuple of weight $wt'$ } \\
	& & \textrm{ representing HPs in $G$} \} \\
\end{eqnarray*}

\noindent
Note that $L^*$, $R^*$, $L$ and $R$ are now kept in a stack order, thus giving priority to the nodes that are inserted more recently. 
Another important data structure introduced for the cleanup phase is the \emph{combined history} $\CH(\gamma)$ of a triple $\gamma$ (see the dedicated paragraph near the end of this section).
Additional data structures will be suggested in the description of the algorithm (see Section \ref{sec:fd}), but they are only implementation oriented. 

Similar to the \DI algorithm, mentioned in Section \ref{sec:di-fd}, we need a technique to clean our data structures from old unused tuples.
However, if we try to apply the \DI dummy sequence to APASP, we are faced with the issue that a new ST for $x,y$ (with the same weight)
could be created at each update in a long sequence of successive updates. Then, a \incremental 
update could transform all of these STs into HTs. If this happens,  then 
several HTs for
$x,y$, all with the same weight, could have the same dummy-index
(in \DI only one HP can be present for this entire collection due to unique SPs).
Thus, the \DI approach of obtaining an $O(\log n)$ bound for the number of HPs for each
vertex pair does not work for HTs in our \systemend.

Our new algorithm uses a different dummy sequence, and a completely different
analysis that obtains an $O(\log n)$ bound for the number of different `PDGs' (a PDG
is a type of
derived graph defined in Section~\ref{sec:alg-main}) that can contain the HTs. Our
new dummy sequence
is inspired by the `level graph' method introduced in \Tho~\cite{Thorup04} (see Section \ref{sec:improved}, where we generalize this approach to multiple shortest paths in our \FFD algorithm) to
improve the 
amortized bound for fully dynamic APSP to $O(n^2 \cdot  \log^2 n)$, saving a log factor over \DIe. 
This
method is complex because it maintains $O(\log n)$ levels of data structures for
suitable `level graphs'. 
Our algorithm \fullydynamic does not maintain these
level graphs. Instead, \fullydynamic  performs exactly like the fully dynamic
algorithm in \DIe, except that it uses this alternate dummy update sequence, and it
calls \fullyupdate for APASP instead of the \DI update algorithm for APSP.
Our change in the update sequence requires a completely new proof of the amortized
bound which we present in Section~\ref{sec:alg-main}. 
We consider this to be a contribution of independent interest: If we replace 
\fullyupdate by the \DI update algorithm in \fullydynamicend, we get a new fully dynamic APSP
algorithm which is as simple as \DIe, with a new analysis.

\fullyupdate is given in the two-line Algorithm~\ref{algo:fupdate}. It
calls \fullycleanup and \fullyfixup in sequence, on the updated vertex $v$ (see Section \ref{sec:fd} for the complete pseudocode).
This is similar to the update
procedure in the  \decremental and
fully dynamic algorithms for unique paths in \DI~\cite{DI04} and in the \decremental algorithm 
in~\cite{NPR14b}. 

The cleanup procedure removes from the \system all HPs and LHPs containing $v$ by decrementing the count of triples which represent them. 
Thus, each TLHT (and THT) containing the updated vertex $v$ is updated in the \system with its correct count even if some of its represented paths avoid $v$.
To achieve this property for THTs, we use the new data structure $\CH$ during the cleanup phase.
\fullycleanup works by repeatedly extracting triples from a heap $H_c$, generated as extensions of the updated node $v$ using the data structures of the \systemend.

The fixup phase adds to the \system a superset of LSPs generated in the graph by the update: if a new LSP discovered during the fixup phase is of the form $x \rightarrow a \rightsquigarrow b \rightarrow y$ and weight $wt$, \fullyfixup will increase the count of the tuple $\tau = ((xa,by),wt)$ by one (creating $\tau$ itself if not in the \systemend). Also in this case, triples are extracted from a heap $H_f$, added to the system, and finally extended to candidate triples to be processed in future steps.  

\begin{algorithm}[ht]
	\SetAlgoLined
	\fullycleanupend$(v)$\;
	\fullyfixupend$(v,\weight')$\;
	set update-num($v$);
	\caption{\fullyupdateend$(v, \weight')$}
	\label{algo:fupdate}
\end{algorithm}

\noindent
In \fullyupdateend, the sets $P^*(x,y)$ and $P(x,y)$ will contain HTs (including all STs) and LHTs
(including all LSTs), respectively,  from $x$ to $y$.
 It was observed in \cite{DI04} that the \decremental algorithm they presented for the 
 unique SP case is a correct
algorithm when \incremental updates are interleaved with \decremental
ones. 
However here, in contrast to \DIe, the \decremental  APASP algorithm in~\cite{NPR14b} 
(the \NPR algorithm)
 needs to be refined before it
becomes correct for a fully dynamic
sequence, and this is due to the presence of multiple shortest paths.
In Section~\ref{sec:alg-basic}
we describe algorithm \fullyupdate which is a correct fully dynamic APASP algorithm that
 maintains  a superset of
all STs and LSTs in the
current graph. 
This algorithm is very similar to \NPR but contains several changes to ensure correctness under fully dynamic updates.
However, \fullyupdate is not very efficient since it may add, remove, and examine a large number 
of tuples. This is similar to
 the \DI \decremental algorithm when used as a fully dynamic algorithm without any additional features.

We first provide an example where \NPR is not correct as an APASP algorithm.
Consider, for instance, a THT
 $\tau = (xa,by)$ with weight $wt$ which is currently an LST. Using the \NPR 
 algorithm~\cite{NPR14b}, since $\tau$ is a THT it will be present
 in $P^*(x,y)$ (but not as an ST). Suppose a new
set of paths represented by a 
 new triple $\tau'= ((xa,by),wt,count')$, with
 the same weight $wt$, is added to the count of this tuple $\tau$. We cannot simply add $count'$ to $\tau$ in $P^*(x,y)$ because its extensions were performed using the old count, and
 if $\tau$ is restored as an ST in $P^*(x,y)$, these extensions will not have the correct count. (With unique SPs
 this situation can never occur.)
 
 Our solution for the above case
 is to have $\tau$ in $P(x,y)$ with the larger correct count (thus including $count'$), and 
 to leave the corresponding $\tau$ in
 $P^*(x,y)$ with its original count. Should $\tau$ later be restored as an ST 
 then the difference in counts between $\tau$ in $P^* (x,y)$ and the corresponding
 $\tau$ in $P(x,y)$ will trigger left and right extensions of $\tau$ with the correct count
 even though $\tau$ is currently  in $P^*(x,y)$ (see point [\ref{fact:fbeta}] in Section \ref{sec:fuan}).
 
 There remains another serious scenario that could occur. During \fullycleanup starting from the current updated vertex
  $v$, we may reach a 
   triple $\gamma = (\tau, wt, count)$ in $P$ through say, a left extension, while the triple in $P^*$ for $\tau$
   with weight $wt$ is $\gamma' = (\tau, wt, count')$, with $count' < count$ (the extreme case being that
   $count'=0$, in which case there is no $\gamma'$ in $P^*$).
   This is an indication that the paths in $\gamma - \gamma'$ were formed after $\gamma'$ became a THT.
   Moreover, the number of paths going through $v$ in $\gamma' \in P^*$ could be different from the number of paths going through $v$ in $\gamma \in P$, posing a challenging dilemma on the correct number of paths to be removed from $\gamma' \in P^*$.   
   We address this situation in \fullycleanup by using the $\CH(\gamma)$ data structure for the triple $\gamma$ in cleanup.
   
   \subsubsection{The \texorpdfstring{$\CH$}{CH} Data Structure} \label{sec:ch}
   The \emph{combined history} $\CH(\gamma)$ of a triple $\gamma$ is an array of updates, and for each update $t$ contains the number of paths that joined $\gamma$ during the $t$-th update. 
   This new data structure is designed to surgically remove paths from LHT and THT. In fact, it could be the case where some paths exist only as LHT while they are not present in the corresponding THT located in $P^*$. The example in Appendix \ref{append:example} shows how completely removing THT from $P^*$ (instead of decrementing them with the help of $\CH$) increases the complexity of the algorithm to a $\Theta(n^3)$ bound, and produces incorrect counts for the tuples.

   To efficiently implement $\CH$ we also maintain the update number associated to each node. The structure $\CH$ is initialized with update-num($v$) that is the last update in which $v$ was updated \emph{before} this cleanup phase, associated with value 1 and representing the trivial tuple $(vv,vv)$.
   The $\CH$ data structure is associated, exclusively during the cleanup phase, with the tiples built by the algorithm that will be deleted from the graph.
   More specifically, a THT $\gamma' = ((xa,by),wt, count')$ is decremented during cleanup only by $count$ paths, where $count$ is the number of paths, going through the updated node $v$, generated during the updates at times $t \leq \textrm{update-num}(\gamma')$.
   To perform this step efficiently, we can check the last time $t$ when $\gamma'$ was updated, and remove from it only the paths in $\CH(\gamma)$ that were created before or during update $t$. This technique guarantees that only the paths going through $v$ but truly represented by $\gamma'$ are removed from $P^*$. 
   More details on the implementation are given in the next section containing the description of algorithm \fullydynamicend.
   
 
 \subsection{Pseudocode of \fullydynamic Algorithm} \label{sec:fd}
Here we give the full pseudocode for \fullycleanup (Algorithm \ref{algo:fcleanup}) and \fullyfixup (Algorithm \ref{algo:ffixup}). They are similar to the corresponding pseudocode
cleanup and fixup in~\cite{NPR14b}, and we have marked the steps changed from~\cite{NPR14b}
with a $\bullet$ at the end of the line. 
A description of the new features of the algorithms is given below, while a detailed description is available in \cite{NPR14b}.
The other steps in  Algorithm~\ref{algo:ffixup} are described in~\cite{NPR14b}, as are the
two parameters paths$(\gamma, v)$, which
gives  the number of paths  containing the node $v$
that are represented by the triple $\gamma$, and update-num$(\gamma)$, which is a timestamp that indicates the last update in which the triple $\gamma$ is involved.
Then, the correctness of the algorithms is established.


The only changes
from~\cite{NPR14b} in Algorithm~\ref{algo:fcleanup}
are in steps \ref{process-cleanup:left-extend-unmarked}, and \ref{process-cleanup:new3} where we decrement any THT we encounter during cleanup, while extending from the updated node $v$, using the new data structure $\CH$. 
More specifically, in step \ref{process-cleanup:left-extend-unmarked}, we use the weight associated with the extension to avoid the creation of cycles. An implementation to achieve this in \fullycleanup is the following: for a left extension $x'$, let $ldc$ be the set of paths of the form $(x'x, \times b)$ and weight $w' - \weight(b,y)$ removed from $P^*$ during this cleanup phase (note that these paths can pe maintained in a temporary data structure of size $O(mn)$). Similarly, let $rdc$ be a the set of paths of the form $(x \times, by)$ and weight $w' - \weight(x',x)$ removed from $P^*$ during this cleanup. If the updated node $v \neq y$, we set $fcount = \max(ldc,rdc)$, alias we only count all the locally historical paths of $(x'x, by)$ that are formed by a valid $l$-tuple and $r$-tuple both in $P^*$ at the beginning of the cleanup phase and now removed. Otherwise, if $v=y$ we do not overwrite $fcount$. Note that, since no cycle path is ever created during the fixup phase and placed in $P^*$, with the above implementation, \fullycleanup does not delete any real path by mistake.
In step \ref{process-cleanup:left-extend-unmarked}, we deal with the delicate task of correctly decrementing an HT from the tuple-system. As discussed in Section \ref{sec:alg-basic}, when we deal with a THT $\gamma''$  in $P^*$ we could only delete a subset of paths identified by the cleanup algorithm for its respecting LHT $\gamma$ in $P$ (note that if $\gamma''$ is a ST then we trivially subtract the same number $fcount$ of paths substracted from $\gamma$, since they have the some count). To address this issue, we use the new $\CH(\gamma')$ data structure (in the algorithm $\gamma'$ is the triple to be deleted from the \systemend): let $t=\textrm{update-num}(\gamma'')$ be the most recent update in which $\gamma''$ was updated in $P^*$; in $\CH(\gamma')$ we sum all the paths up to time $t$, we call this value $fcount''$. Finally, we decrement $fcount''$ paths from $\gamma'' \in P^*$. Now, we only need to address how to efficiently build $\CH$ for a given triple $\gamma'$. Let us assume that $\gamma'$ is a left extension to node $x'$ for a set of $k$ triples of the form $x\times,by$. Each one of the non-extended triples has its own $\CH_j$ data structure built from a previous cleanup iteration, with $j \leq k$. Then, for a given $\CH_j$, for each pair $(updnum, count)$  in $\CH_j$ we add the pair $(\max(\textrm{update-num}(x'), updnum), count)$ to $\CH(\gamma')$. Note that, if such pair is already present in $\CH(\gamma')$ with $count'$, we increment it by $count$. The complexity analysis for this implementation is given in Corollary \ref{lem:cor1}.

Algorithm~\ref{algo:ffixup} introduces new features to achieve correctness and efficiency.
We observe that we may 
revert an HT from, say, $x$ to $y$, back to an ST during an update, and this happens only if  the shortest path
distance from  $x$ to $y$ increases. This condition translates into the new check in Step \ref{ffixup:main1} of
Algorithm~\ref{algo:ffixup}. Here  we proceed as in 
\cite{NPR14b}
keeping in mind that an LHT extracted from $P$ as an ST (Step \ref{fixup:phase3-addfromP-begin}, Algorithm~\ref{algo:ffixup})
may or may not be in $P^*$. If the LHT is  not in
$P^*$ (Step~\ref{ffixup:old1}, Algorithm~\ref{algo:ffixup}) we add the triple to the \system as in 
\cite{NPR14b}.
If the LHT is already in $P^*$ with a different count (Step~\ref{ffixup:new1}, Algorithm~\ref{algo:ffixup}), we replace the count of the triple in $P^*$ with $count'$ from the triple in $P$ and we add the triple to $S$ (Step~\ref{ffixup:new2}, Algorithm~\ref{algo:ffixup}). 
In step \ref{fixup:phase3-addfromP-begin}, Algorithm~\ref{algo:ffixup} we check the bit $\cb$ associated with the triple $\gamma$. Since $P$ is a priority queue, we will process only the triples in $P$ with min-key $[wt,0]$, so we
avoid examining the triples that are already in $P^*$ with a correct count.
We set $\cb$ to 1 for any triple added to or updated in $P^*$ with the correct count
(Steps \ref{ffixup:b1} and \ref{ffixup:b2}, Algorithm~\ref{algo:ffixup}). Also, for an LHT updated in $P$ and not $P^*$, 
we set $\cb=0$ (Step \ref{ffixup:b3}, Algorithm~\ref{algo:ffixup}).
Finally, we use the $L^*$ and $R^*$ stacks to generate only LHT that are not cycles. We pop the extensions from the stack in reversed order of update time. This is because the newest extensions always refers to STs, while older extensions may refer to THTs. Whenever, we encounter an extension that does not satisfy the distance check at step \ref{ffixup:new}, we know that the extensions left in the stack are associated to tuples that are even older than the current one; thus we can skip them.
A possible implementation to avoid cycles is the following: for each triple $\gamma' = (x'x,by)$ that \fullyfixup generates with an extension $(x',wt')$ to the left, we check if $P^*(x',b)$ contains a triple of the form $(x'x, \times b)$ and weight $w < wt'$. If this is the case, we do not extend to $x'$ because it will create a cycle. A symmetric check is applied to the right extensions.  

\begin{algorithm}[ht]
	\SetAlgoLined
	\SetAlgoNoEnd
	$H_c \leftarrow \emptyset$; \MT $\leftarrow \emptyset$\; \label{cleanup:initHcMT}
	$\gamma \leftarrow ((vv,vv), 0, 1)$; add $\gamma$ to $H_c$\; \label{cleanup:init}
	\While {$H_c \neq \emptyset$ }{ \label{cleanup:while}
		extract in $S$ all the triples with min-key $[wt,x,y]$ from $H_c$\; \label{cleanup:extract-set}
		\For {every $b$ such that $(x\times,by) \in S$}{ \label{process-cleanup:left-extend-start}
			let $fcount'$ be the number of deleted paths of the form $((xa_i,by),wt)$ \; \label{process-cleanup:left-extend-fcount}
			\For {every $(x',wt') \in L(x,by)$ such that $((x'x,by),wt')$ is an \LHT not in \MTend}{ \label{process-cleanup:left-extend-unmarked}
				$\gamma' \leftarrow ((x'x,by),wt', fcount')$\; 
				add $\gamma'$ to $H_c$\; \label{process-cleanup:create-left}				
					remove $\gamma'$ in $P(x',y)$ // decrements its count in $P$ by $fcount'$\;  \label{process-cleanup:left-extend-removeP}
					\eIf {a triple for $((x'x, by),wt')$ exists in  $P(x',y)$}{
						insert  $((x'x,by),wt')$ in \MT\; \label{process-cleanup:left-extend-markR}
					}{
					delete $(x',wt')$ from $L(x,by)$ and delete $(y,wt')$ from $R(x'x,b)$\; \label{process-cleanup:left-extend-removeL}
				}
		{
		}
		\If {a triple $\gamma''$ for $((x'x, by),wt')$ exists in  $P^*(x',y)$ with $count$ paths}{  \label{process-cleanup:left-extend-checkSP}
			let $fcount''$ be the number of deleted paths of the form $((x'x, by),wt')$  created during an update smaller (older) than update-num$(\gamma'')$ $\bullet$\ \newline // this step can be efficiently computed using our $CH$ data strcuture; \label{process-cleanup:new3}

			remove $\gamma'$ in $P^*(x',y)$ by decrementing $fcount''$ from $count$\; \label{process-cleanup:new4}
		\textbf{if} $P^*(x,y)$ doesn't contain triples of weight $wt$ \textbf{then} delete $(x',wt')$ from $L^{*}(x, y)$\; \label{process-cleanup:left-extend-removeLstar}
		\textbf{if} $P^*(x',b)$ doesn't contain triples of weight $wt' -\weight(b,y)$ \textbf{then} delete $(y,wt')$ from $R^{*}(x',b)$\; \label{process-cleanup:left-extend-removeRstar}
	}
}
} \label{process-cleanup:left-extend-end}   
perform symmetric steps \ref{process-cleanup:left-extend-start} -- \ref{process-cleanup:left-extend-end} for right extensions\;  \label{cleanup:rextend}
}
\caption{\fullycleanupend$(v)$}
\label{algo:fcleanup}
\end{algorithm}

\begin{algorithm}[ht]
	\SetAlgoLined
	\SetAlgoNoEnd
	$H_f \leftarrow \emptyset$; \MT $\leftarrow \emptyset$ \;
	\For {each edge incident on $v$ } {
		create a triple $\gamma$; set paths$(\gamma, v) = 1$; set update-num$(\gamma)$; add $\gamma$ to $H_f$ and to $P$\; \label{fixup:init1}
	}
	\For {each $x, y \in V$}{
		add a min-key triple from $P(x, y)$ to $H_f$\; \label{fixup:init2}
	}
	\While {$H_f \neq \emptyset$} { \label{fixup:phase3-begin}
		extract in $S'$ all triples with min-key $[wt,x,y]$ from $H_f$; $S \leftarrow \emptyset$\; \label{fixup:phase3-extract1}
		\If {$S'$ is the first extracted set from $H_f$ for $x,y$}{  \label{fixup:phase3-first-ext}
			\eIf { $P^*(x,y)$ increased min-weight after cleanup $\bullet$}{ \label{ffixup:main1}
				\For {each $\gamma' \in P(x,y)$ with min-key $[wt,0]$ $\bullet$}{ \label{fixup:phase3-addfromP-begin}
					let $\gamma' = ((xa', b'y), wt, count')$\;
					\{Next step check if $\gamma'$ is completely missing from $P^*$\}\;
					\uIf {$\gamma'$ is not in $P^*(x,y)$}{ \label{ffixup:old1}                	
						add $\gamma'$ in $P^{*}(x,y)$ and $S$; add $(x,wt)$ to $L^{*}(a',y)$ and $(y,wt)$ to $R^{*}(x,b')$\; \label{ffixup:phase3-add2LRStar2}
						\{Next step check if $\gamma'$ is in $P^*$ with a different count\}\;
					} \uElseIf{$\gamma'$ is in $P(x,y)$ and $P^*(x,y)$ with different counts $\bullet$}{ { \label{ffixup:new1} 
						replace the count of $\gamma'$ in $P^{*}(x,y)$ with $count'$ and add $\gamma'$ to $S$ $\bullet$\; \label{ffixup:new2} 
					}}
					set $\cb$ for $\gamma' \in P(x,y)$ to $1$ $\bullet$\; \label{ffixup:b1}
				} \label{fixup:phase3-addfromP-end}
			}{
			\For {each $\gamma' \in S'$ containing a path through $v$} { \label{fixup:phase3-addfromX-begin}
				let $\gamma' = ((xa', b'y), wt, count')$\;
				add $\gamma'$ with paths$(\gamma',v)$ in $P^{*}(x,y)$ and $S$; add $(x,wt)$ to $L^{*}(a',y)$ and $(y,wt)$ to $R^{*}(x,b')$\; \label{fixup:phase3-add2LRStar1}
				set $\cb$ for $\gamma' \in P(x,y)$ to $1$ $\bullet$\; \label{ffixup:b2}
			} \label{fixup:phase3-addfromX-end}
		} \label{ffixup:phase3-main-check-end}
		\For {every $b$ such that $(x \times,by) \in S$}{ \label{fixup:startleft}
			let $fcount'  = \sum_{i} ct_i$ such that $((xa_i, by), wt, ct_i ) \in S$\;
			\For {every $(x',wt')$ in $L^{*}(x,b)$ such that $((x'x,by),wt')$ is an \LHTe}{ \label{ffixup:new}
				\If {$((x'x, by),wt') \notin$ \MT}{
					$wt' \leftarrow wt+\weight(x',x)$; $\gamma' \leftarrow ((x'x,by),wt', fcount')$\;
					set update-num$(\gamma')$; paths$(\gamma',v)\leftarrow \sum_{\gamma=(x\times,by)} \textrm{paths}(\gamma,v)$; add $\gamma'$ to $H_f$\; \label{fixup:add1}
					\eIf {a triple for $((x'x, by),wt')$ exists in $P(x',y)$}{
						add $\gamma'$ with paths$(\gamma', v)$ in $P(x',y)$;  
					}{
					add $\gamma'$ to $P(x',y)$; add $(x',wt')$ to $L(x,by)$ and $(y,wt')$ to $R(x'x,b)$\; 
				}
				set $\cb$ for $\gamma' \in P(x',y)$ to $0$ $\bullet$\; \label{ffixup:b3}
				add $((x'x, by),wt')$ to \MTend\;
			}
		}
	}		\label{fixup:endleft}
	perform symmetric steps \ref{fixup:startleft} -- \ref{fixup:endleft} for right extensions\;
}
} \label{fixup:phase3-end}
\caption{\fullyfixupend$(v, \weight')$}
\label{algo:ffixup}
\end{algorithm}

 Recall that \fullyupdate is simply an execution of \fullycleanup
  followed by \fullyfixupend . The correctness of this algorithm is argued by
  noting that our method ensures that when an ST in $P^*$ is processed during \fullyfixup
  without further extensions, it has the correct weight and count and 
  all of its extensions have been performed with that count; every ST and LST is generated
  starting with singleton edges, min-weight tuples from the $P$ sets, and correct STs  
  from the $P^*$ sets,
  hence the counts of the tuples identified as STs and LSTs are maintained correctly.
  
 \subsection{Analysis and Properties of \fullyupdate} \label{sec:fuan}
 As discussed in the previous sections, both \fullycleanup and \fullyfixup are similar to the corresponding pseudocode
 cleanup and fixup in~\cite{NPR14b}, but they require some important changes to ensure correctness and efficiency.
 In this section, we highlight several new components that will be relevant for proving the properties of \fullyupdate presented in this section.
 
 \paragraph{New Components relative to \NPRe} 
 \begin{enumerate} [label=NC.{\arabic*}]
 	\item \label{fact:crem}
 	Algorithm \fullycleanup decrements the \emph{count} of THTs using the new data structure $\CH$. More specifically, a THT $\gamma' = ((xa,by),wt, count')$ is decremented only by $count$ paths, where $count$ is the number of paths, going through the updated node $v$, in $\CH(\gamma)$ and generated during the updates at times $t \leq \textrm{update-num}(\gamma')$. The structure $\CH$ is initialized with update-num($v$) (i.e. the last update on $v$ before the current update) associated with the value 1 representing the trivial tuple $(vv,vv)$.
 	\item \label{fact:fbeta}
 	In algorithm \fullyfixupend, a triple $\gamma$ is now inserted in $P$ with a key $[wt,\cb]$, instead of just $wt$. Here
 	$\cb$ is a control bit that is set during the fixup phase and indicates if $\gamma$ is present in $P^*$ with the correct count ($\cb=1$), or if $\gamma$ has the correct count only in $P$ ($\cb=0$). In the latter case $\gamma$ could be also present in $P^*$ but with a wrong count.\\
 	Before processing a triple $\gamma$, we check the bit $\cb$ associated with it. In fact, since $P$ is a priority queue, we will process only the triples in $P$ with min-key $[wt,0]$, thus avoiding the triples that are already in $P^*$ with a correct count.
 \end{enumerate}
 
 Note that, since a non-shortest HT (THT) always has a larger distance than its corresponding ST (by definition), the following observation follows immediately. 
 
 \begin{observation}
  	Reverting a THT from, say $x$ to $y$, back to an ST during an update happens only if the shortest path distance from  $x$ to $y$ increases.
 \end{observation}
 
  We now establish some basic properties of algorithm \fullyupdate based on the high-level
  description we have given above. We start with a general bound on the running time.
  
  \begin{lemma}  \label{lem:basic-amort}
  Consider a sequence of $r$ calls to \fullyupdate
   on a graph with $n$ vertices.  Let $C$ be the maximum number of tuples in the \system 
  that can contain a path through a given vertex, and
  let $D$ be the maximum number of tuples that can be in the \system at any time.
  Then \fullyupdate executes the $r$ updates in 
    $O((r \cdot (n^2 +C ) + D)\cdot \log n)$ time.
  \end{lemma}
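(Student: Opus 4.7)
The plan is to follow the informal outline already sketched in the excerpt, making each piece precise through a charging argument that separates work into a \emph{per-call} component (cleanup and the ``existing-tuple scan'' part of fixup) and a \emph{global} component (triples that are created, moved, or destroyed across the whole update sequence).

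First, I would bound the work of a single \fullycleanup call. The key structural observation is that every triple examined during cleanup represents at least one path that passes through the updated vertex $v$: cleanup starts at $v$ and only propagates along triples whose current representative path goes through $v$ (including the THT/TLHT cases, which are flagged precisely because of their relationship to $v$'s current or past status). Hence each cleanup call touches at most $O(C)$ triples, and with $O(\log n)$ cost per priority-queue or balanced-search-tree operation, a cleanup runs in $O(C \log n)$ time. Summed over the $r$ updates this contributes $O(r \cdot C \cdot \log n)$, and also accounts for the total number of triples removed from the \system across the whole sequence, which is $O(r \cdot C)$.

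Next, and this is the step I expect to be the main obstacle, I would bound the work of a single \fullyfixup call, partitioning the triples it examines into three classes: (i) triples that are \emph{newly created} during this fixup, (ii) triples that already existed and are \emph{moved from $P$ to $P^*$} (their count, or the associated $L^*, R^*$ structures, being revised), and (iii) triples that already existed and merely have their count increased. Following the NPR analysis, I would argue that any triple in class (iii) contains a path through the updated vertex $v$ (otherwise it would not be re-examined by fixup), so class (iii) contributes at most $O(C)$ examined triples per call. For class (ii), I would argue via the fixup priority queue $H_f$ that for each ordered pair $(x,y)$ only a bounded (indeed $O(1)$) number of such ``promotion'' events are triggered before the minimum-weight shortest path tuple is settled, yielding $O(n^2)$ examined triples per call for classes (ii). Classes (i) and (ii) together are then charged \emph{globally}, not per call.

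Finally, I would assemble the global bound. Since \fullyfixup never removes triples, the total number of triples ever added across all $r$ fixup calls is at most the total number removed by all cleanup calls, plus the $D$ triples that survive to the end: that is, $O(r \cdot C + D)$. Combining: the per-call contribution is $O((n^2 + C) \log n)$ from scans of existing triples for each of the $r$ calls, and the global contribution is $O((r \cdot C + D)\log n)$ from all created/moved/removed triples; adding and simplifying yields the claimed $O((r \cdot (n^2 + C) + D)\cdot \log n)$ total. The main difficulty, as mentioned, is justifying the $O(n^2)$ term for class (ii), which requires an argument specific to the \system that the promotion of a tuple from $P$ to $P^*$ (unlike a mere count update in class (iii)) happens $O(1)$ times per pair per fixup; the rest of the proof is routine accounting once this structural property of \fullyfixup is established.
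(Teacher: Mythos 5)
Your overall skeleton matches the paper's accounting: \fullycleanup touches only triples with a path through the updated vertex $v$ (so $O(C)$ per call), the cost of \fullyfixup splits into scans of existing triples plus newly created ones, and the total number of triples ever created is bounded by the $O(r\cdot C)$ removed by all cleanups plus the $D$ that survive, which after the $O(\log n)$ data-structure factor yields the claimed bound. The problem is in your treatment of the \emph{existing} triples examined by \fullyfixupend, where the proposal has two genuine gaps.

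First, your class (ii) bound fails as stated. When $P^*(x,y)$ has increased min-weight after cleanup, Algorithm~\ref{algo:ffixup} (Steps~\ref{ffixup:main1}--\ref{fixup:phase3-addfromP-end}) processes \emph{every} min-key triple of $P(x,y)$ with control bit $\cb=0$, and the number of such triples for a single pair is not $O(1)$: it can be as large as the number of distinct first-edge/last-edge combinations realizing that weight. So there is no per-call $O(n^2)$ bound on these ``promotions.'' The paper instead charges these examinations globally: a triple that is an old LHT newly become an HT is charged to the update at which it was created as a new LHT, and a restored THT whose count changed is charged to the earlier update whose extensions changed that count. Crucially, the bit $\cb$ guarantees that each existing triple is examined this way only $O(1)$ times (once $\cb=1$ it is skipped by Step~\ref{fixup:phase3-addfromP-begin}); your proposal establishes no such mechanism, and without it a long-lived triple sitting at min-weight in $P(x,y)$ could be rescanned in many successive fixups, so the global charge of ``(i) and (ii)'' to creations would not go through. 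Second, the $n^2$ per-call term does not come from promotions at all: it comes from seeding $H_f$ with one candidate min-key triple from each $P(x,y)$ (Step~\ref{fixup:init2}), i.e., from old LHTs that remain LHTs, contain no path through $v$, and are neither created, promoted, nor count-changed. These are the paper's Type-4 triples, and they fall into none of your three classes, so your classification does not cover all triples the algorithm touches.
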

  
  \begin{proof}
  	We bound the cost of \fullyupdate by classifying every triple $\gamma$ that is actively processed by \fullyfixupend, as one of the disjoint types below.
  	Note that, if a triple is deleted or decremented its cost will be charged on the \fullycleanup bound. Any other behaviour of a triple (other than the types below) is not relevant for the correctness of our algorithm.
  	
  	\begin{itemize}
  		\item {\bf Type-0 (contains-v):} $\gamma$  represents  at least one path containing vertex $v$.
  		\item {\bf Type-1 (new-LHT):} $\gamma$ was not an LHT before the update but is an LHT after the update,
  		and no path in $\gamma$ contains $v$.
  		\item {\bf Type-2 (new-ST-old-LHT):} 
  		$\gamma$ is an ST after the update,
  		and $\gamma$ was an LHT but not an HT before the update, and no path in $\gamma$ contains $v$.
  		\item {\bf Type-3 (renew-ST):} $\gamma$  was a THT  before the update and it is restored as a ST after the update, and no path in $\gamma$ contains $v$.
  		\item {\bf Type-4 (new-LHT-old-LHT):} $\gamma$  was an LHT  before the update and continues
  		to be an LHT after the update,
  		and no path in $\gamma$ contains $v$.
  	\end{itemize}
  	
  	\noindent
  	The number of Type-0 triples, processed by \fullyfixup is at most $C$.
  	
  	The number of Type-1 triples, processed by \fullyfixup is addressed by amortizing over the entire update sequence as described in the paragraph below.
  	
  	For a \mbox{Type-2} triple processed by \fullyfixupend, we observe that after 
  	such a triple becomes a ST, it is not removed from $P^*$ unless a real or dummy update is triggered on a vertex that lies in it.
  	But in such an update this would be counted as a Type-0 triple. 
  	Further, each such Type-2 triple is examined only a constant number of times \fullyfixupend, because after they are inserted into $P^*$ the bit $\cb$ associated changes to 1 and they will not be processed again by fixup, unless the number of paths they represent is changed (point [\ref{fact:fbeta}]). Hence we charge each access to a Type-2 triple to the step in which it was created as a Type-1 triple.
  	
  	For Type-3 triples, we distinguish two cases: if $\gamma$ didn't change its count in $P$ after it became a THT then its flag is $\cb=1$ and it is present in $P^*$ with the correct count (point [\ref{fact:fbeta}]). Thus the fixup algorithm will not process it. If $\gamma$ changed its count in $P$ while it was a THT then its flag is $\cb=0$ and we can charge the processing of $\gamma$ (if extracted from $H_f$) to the sub-triple $\gamma'$ generated from the updated node that increased the count of $\gamma$; in other words there was an LST $\gamma'$, created in a previous update, whose extensions added to the count of $\gamma$.
  	Observe that, triples in $P^*$ that are not 
  	placed initially in $H_f$ and have $\cb=1$ in $P$ (no additional path was added to that triple) are not examined in any step of fixup (point [\ref{fact:fbeta}]), so no additional Type-3 triples are examined.
  	
  	For Type-4, we note that for any $x, y$ we add exactly one candidate
  	min-key triple from $P(x,y)$ to $H_f$, hence initially there are at most $n^2$ such triples in $H_f$, any of which could be Type-4.
  	Moreover, we never process an old
  	LHT which is not a new ST so no additional Type-4 triples are examined during fixup.
  	Thus the number of triples examined by a call to fixup is
  	$C$ plus $X$, where $X$ is the number of {\em new} triples fixup adds to the tuple system.
  	(This includes an $O(1)$ credit placed on each new LHT for a possible later conversion to an HT.)
  	
  	Let $r$ be the number of updates in the update sequence.  Since triples are removed only in cleanup, at most $O(r \cdot C)$ triples are removed (or decremented) by the cleanups. 
  	There can be at most $D$ triples remaining at the end of the sequence, hence the total number of new triples added by all fixups in the update sequence is
  	$O(r \cdot C + D)$. 
  	Since the time taken to access a triple is $O(\log n)$ due to the data structure operations, and we examine at least $n^2$ triples at each round, the total time spent by fixup over 
  	$r$ updates is $O((r \cdot (n^2 + C) + D)\cdot \log n)$.
  \end{proof}

  In Section~\ref{sec:alg-main}, we will use  the algorithm \fullyupdate within algorithm
  \fullydynamicend , 
   that performs a special
  sequence of `dummy' updates, to obtain a fully dynamic APASP algorithm with
  an $O(\vstar^2 \log^3 n)$ amortized cost per update; we obtain this amortized bound by establishing suitable
  upper bounds on the parameters $C$ and $D$ in Lemma~\ref{lem:basic-amort}
  when algorithm \fullydynamic is used. We conclude this section with
  some lemmas that will be used in the analysis of the amortized time bound of
   Algorithm~\ref{algo:fully-main}. 
  
  \begin{lemma}\label{lem:lhts}
  At each step $t$, the \system for Algorithm \fullyupdate maintains a subset of
HTs and LHTs  that includes all STs and LSTs for step $t$.
Further,  for every LHT  triple $((xa, by), wt, count)$  in step $t$,
there are HTs $(a*,by)$ and 
 $(xa,*b)$  with weights $wt- w(x,a)$ and $wt - w(b,y)$ respectively, in
 that step.
\end{lemma}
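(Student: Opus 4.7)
The plan is to prove both parts of the lemma by induction on the update step $t$. The base case, before any updates have been performed, is immediate since the \system is essentially empty and the two claims hold vacuously. For the inductive step, I assume both invariants hold at step $t-1$ and analyze the effect of the call to \fullyupdate at step $t$, which consists of a \fullycleanup followed by a \fullyfixupend .

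I would first handle \fullycleanupend . Since cleanup only removes tuples and never inserts, every surviving tuple was an HT or LHT at step $t-1$; because cleanup removes exactly those tuples containing the updated vertex $v$ together with every encountered THT/TLHT, the survivors are still HTs or LHTs at step $t$, which preserves the subset property of part~1. For part~2, I would establish a cascading property: if an LHT $((xa,by),wt,count)$ survives cleanup, then so do its required sub-tuples $(a*,by)$ and $(xa,*b)$. If a sub-tuple is removed because it contains $v$, then $(xa,by)$ also contains $v$ and is removed in the same pass. If instead a sub-tuple is removed because the traversal identified it as a THT/TLHT, then $(xa,by)$ no longer has both proper sub-paths as HTs, so it is itself a TLHT and would be reached and removed during the same cleanup traversal.

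Next I would handle \fullyfixupend . The procedure processes tuples in non-decreasing order of weight via its priority queue $H_f$. For part~1, the high-level correctness argument given earlier in the section already shows that fixup reconstructs every current ST and LST by extending existing valid HTs/LHTs and singleton edges; thus, together with the survivors of cleanup, the \system continues to contain all STs and LSTs of the updated graph. For part~2, consider an LHT triple $((xa,by),wt,count)$ that fixup adds or whose count it updates. The two required sub-tuples have strictly smaller weights $wt-w(x,a)$ and $wt-w(b,y)$, so by the weight-ordered processing of $H_f$ they have either already been finalized earlier in this fixup phase or survived cleanup unchanged, and are therefore present in the \system as HTs at the moment $(xa,by)$ is added.

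The main obstacle I anticipate is the cascading argument within cleanup: verifying from the pseudocode in Appendix~\ref{sec:fully-update} that the traversal truly reaches every tuple whose sub-tuple support has been removed, so that no LHT is left stranded without its required sub-tuple HTs. This requires a careful reading of the THT/TLHT removal conditions and the extension-based propagation of cleanup from $v$ outward. Once that is in place, the fixup half is almost immediate from the monotone weight ordering, and the two halves combine to yield the inductive step.
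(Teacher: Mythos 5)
Your plan is correct and follows essentially the same route as the paper's proof: part 1 is delegated to the cleanup/fixup correctness lemmas, and part 2 rests on exactly the two facts the paper uses, namely that \fullyfixup creates an LHT only when its constituent HTs are present, and that \fullycleanupend 's extension-driven traversal cascades, so that if all constituents of the form $(a*,by)$ (or $(xa,*b)$) were removed, the LHT itself would be decremented to zero or removed as a TLHT in the same pass. Your induction-over-steps packaging, as opposed to the paper's direct argument from the creation step $t'$ plus a hypothetical removal step $t''$, is only a cosmetic difference.
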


\begin{proof}
	For the first part see correctness in Appendix \ref{append:corr} (Lemmas \ref{lem:fcleanup-corr2} and \ref{proof:fflem}). For the second part, if a triple $\gamma = ((xa, by), wt, count)$ is present in step $t$, then $\gamma$ was generated during \fullyfixup of some step $t' \leq t$. By the construction of our algorithm, at the end of step $t'$, the \system contains at least one HT of the form $(a*,by)$ and one of the form $(xa,*b)$  with weights $wt- w(x,a)$ and $wt - w(b,y)$ respectively. W.l.o.g. suppose that the set $S$ of all the HTs of the form $(a*,by)$ and weight $wt- w(x,a)$ are removed during \fullycleanup at some step $t'' \leq t$, then since these HTs are the right constituents of $\gamma$, when $S$ is left extended to $x$ in \fullycleanup then exactly $count$ paths will be removed from $\gamma$ making the triple disappear from the \systemend. 
	Thus at step $t$, at least one HT of the form $(a*,by)$ with weight $wt- w(x,a)$ must be in $P^*$. Similarly, there must be an HT of the form $(xa,*b)$ with weight $wt- w(b,y)$ in $P^*$ at step $t$.
\end{proof}

\begin{lemma}\label{lem:tht}
If \fullyupdate is called on vertex $v$ at step $t$, then at the end of step~$t$ any TLHT $(xa, by)$ in the \system{} that contains a path through $v$, has the vertex $v$ as one of the endpoints $x$ or $y$.
\end{lemma}

\begin{proof}
	By Lemma~\ref{lem:lhts},
	any  LHT in the \system is formed
	by combining two HTs that are in the \systemend.
	Now consider the TLHT $(xa,by)$ that contains $v$. Since by definition of
	TLHT, $(xa,by)$ is not an LST in the current graph, assume w.l.o.g. that $(xa,b)$ is the subtuple
	that is not an ST when an end edge is deleted from $(xa,by)$. But this is not possible since any
	tuple HT $(xa, *b)$ that contains $v$ must be an ST.  The lemma follows.
\end{proof}

\begin{lemma}\label{lem:z}
Let $G$ be a graph after a sequence of calls to \fullyupdateend , and
suppose every HT in the \system
 is an ST in one of $z$ different graphs $H_1, \cdots , H_z$, and
every LHT is formed from these HTs. If $n$ and $m$ bound the number of vertices and edges,
respectively, in any of these graphs, and if
$\nu^*$ bounds the maximum number of edges that lie on shortest paths through any given vertex in any of the these graphs, then:

\begin{enumerate}
\item The number of LHTs  in $G$'s \system is at most $O(z \cdot m \cdot \nu^*)$.

\item The number of LHTs that contain the newly updated vertex $v$ in $G$ is $O( z \cdot {\nu^*}^2)$.

\item Let $u$ be any vertex in $G$, and suppose that the HTs that contain $u$ lie in $z' \leq z$ of the
graphs $H_1, \cdots , H_z$. Then, the number of LHTs that contain the vertex $u$  is
 $O( (z + z'^2)  \cdot {\nu^*}^2)$.
\end{enumerate}
\end{lemma}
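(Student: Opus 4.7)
The plan is to use Lemma~\ref{lem:lhts} to decompose each LHT $(xa, by)$ in $G$'s \system into its two HT halves: a left HT $(xa,b)$ and a right HT $(a,by)$, each of which is an ST in one of the graphs $H_1, \ldots, H_z$. Counting LHTs then reduces to counting compatible pairings of such halves.

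For part 1, I would charge each LHT to the graph $H_i$ in which its left HT is an ST. Using an argument parallel to the one behind Lemma~\ref{lem:bound-tuples-thru-v}, the number of ST left-tuples in any $H_i$ is $O(m \cdot \nu^*)$, and each such left HT extends on the right by only as many edges as can match a corresponding right HT with compatible weight. Summing $O(m \cdot \nu^*)$ over the $z$ graphs gives the total bound $O(z \cdot m \cdot \nu^*)$. Part 2 is then the special case $z' = 1$ of part 3: Lemma~\ref{lem:tht} ensures that no THT in the \system contains a path through the just-updated $v$, so every HT through $v$ is an ST in the current graph $G$, which acts as the single relevant graph (hence $z' = 1$), giving $O((z+1) \cdot {\nu^*}^2) = O(z \cdot {\nu^*}^2)$.

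For part 3, I would split on the position of $u$ in the LHT $(xa,by)$. In Case A, $u$ lies on the $a$-to-$b$ subpath (possibly coinciding with $a$ or $b$), so both halves contain $u$ and are STs in some of the $z'$ graphs with HTs through $u$. For each ordered pair $(H_i, H_j)$ from these $z'$ graphs, the LHT's middle path uses an SP-edge incident to $u$ drawn from each of $H_i$ and $H_j$; since by definition of $\nu^*$ each $H_i$ has at most $\nu^*$ SP-edges through $u$, an argument modelled on Lemma~\ref{lem:bound-tuples-thru-v} yields $O({\nu^*}^2)$ LHTs per pair, summing to $O(z'^2 \cdot {\nu^*}^2)$. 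In Case B, $u$ is an endpoint of the LHT (WLOG $u = x$), so only the left HT must contain $u$ (ST in one of $z'$ graphs) while the right HT may be an ST in any of the $z$ graphs; for each choice of right-HT graph $H_j$, the resulting LHTs are counted by $O({\nu^*}^2)$ via Lemma~\ref{lem:bound-tuples-thru-v} applied to $H_j$ augmented with $u$'s current incident edges, yielding $O(z \cdot {\nu^*}^2)$ in total. Combining the two cases gives $O((z + z'^2) \cdot {\nu^*}^2)$.

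The main obstacle will be rigorously establishing the per-pair $O({\nu^*}^2)$ bound in Case A and the per-graph $O({\nu^*}^2)$ bound in Case B: each requires arguing that an LHT through $u$ is essentially determined by a pair of SP-edges incident to $u$ drawn from the two contributing graphs, together with the observation that matching weights pin down the middle subpath so that the count does not blow up when halves are drawn from different $H_i$'s. A secondary subtlety is handling LHT endpoints that coincide with $a$ or $b$ at the boundary between Case A and Case B, which I would absorb into Case A by a straightforward endpoint inspection.
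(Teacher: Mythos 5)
Your overall architecture is the paper's: decompose each LHT into its two HT halves via Lemma~\ref{lem:lhts}, and in part~3 split on whether $u$ is internal or an endpoint, aiming for $O(z'^2\cdot{\nu^*}^2)$ and $O(z\cdot{\nu^*}^2)$. But the step you yourself flag as the main obstacle is resolved the wrong way. An LHT $(xa,by)$ through $u$ is \emph{not} determined, even essentially, by a pair of shortest-path edges incident to $u$: a tuple is identified by its first edge, last edge and weight, and many distinct LHTs can traverse $u$ via the same two edges at $u$, so a count organized around incident edges is not injective and yields no bound. The correct pinning --- the one implicit in the paper's terse $O(z'^2\cdot{\nu^*}^2)$ --- is via the tuple's \emph{first and last edges}: when $u$ is internal, both halves contain $u$, so the first edge $(x,a)$ lies on a shortest path through $u$ in the graph $H_i$ certifying the left half (at most $\nu^*$ choices), the last edge $(b,y)$ lies on a shortest path through $u$ in the graph $H_j$ certifying the right half (at most $\nu^*$ choices), and the weight is then $\weight(x,a)$ plus the ST weight of $(a,by)$ in $H_j$, so each ordered pair $(H_i,H_j)$ contributes at most ${\nu^*}^2$ LHTs. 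Rewriting your Case~A this way closes the gap. In Case~B the ``augmented graph'' device is both unnecessary and unlicensed by the hypotheses (which bound $\nu^*$ only for the $H_i$'s); just count $n-1$ choices for the edge $(u,a)$ times $z\cdot\nu^*$ choices for the $r$-tuple $(a,by)$, as the paper does, and use $\nu^*=\Omega(n)$.

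Part~1 as sketched also does not add up: bounding the ST left-tuples of $H_i$ by $O(m\cdot\nu^*)$ and multiplying by ``as many edges as can match'' gives no per-graph $O(m\cdot\nu^*)$ bound, since one left HT may admit many weight-compatible right extensions. Either use the paper's count --- $m$ choices for the first edge $(x,a)$, then $z\cdot\nu^*$ choices for $(b,y)$ together with the weight, because the right half $(a,by)$ must be an ST in some $H_j$ and hence $(b,y)$ lies on a shortest path through $a$ in $H_j$ --- or, if you keep your left-graph charging, pick the last edge $(b,y)$ first ($m$ choices) and note that $(x,a)$ then lies on a shortest path through $b$ in $H_i$ ($\nu^*$ choices), with the weight again determined. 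Your reduction of part~2 to part~3 with $z'=1$ (via Lemma~\ref{lem:tht}, taking the current graph as the single certifying graph) is fine and parallels the paper, which instead bounds the internal-$v$ LHTs directly as LSTs of the current graph using Lemma~\ref{lem:bound-tuples-thru-v} and handles endpoint LHTs by the same $n\cdot z\cdot\nu^*$ count as above.
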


\begin{proof}
	For part 1, we bound the number of LHTs $(xa,by)$ (across all weights) that can exist in
	$G$. The edge $(x,a)$ can be chosen in $m$ ways, and once we fix $(x,a)$, the 
	$r$-tuple $(a,by)$ must be an ST in one of the $H_j$. Since $(b,y)$ must lie on  
	a shortest path through
	$a$ in the graph $H_i$ that contains the $r$-tuple $(a,by)$ of that weight, the number of different choices
	for $(b,y)$ that will then uniquely determine the tuple $(xa, by)$, together with its weight, is
	$z \cdot \nu^*$. Hence the number of LHTs in $G$ `s \system is $O(z \cdot m \cdot \nu^*)$.
	
	For part 2, the number of LHTs that contain $v$ as an internal vertex is simply the number of
	LSTs in the current graph by Lemma~\ref{lem:tht}, and using
	Lemma~\ref{lem:bound-tuples-thru-v}, this is  $O( \vstar^2)$. We now
	bound the number of LHTs $(va, by)$. There are $n-1$ choices for the edge $(v,a)$ and
	$z \cdot \nu^*$ choices for the $r$-tuple $(a, by)$, hence the total number of such tuples 
	is $O(z \cdot n \cdot \nu^*)$. The same bound holds for LHTs of the form $(xa, bv)$. Since
	$\nu^* = \Omega (n)$, the result in part 2 follows.
	
	For part 3, we observe that each LHT that contains $u$ as an internal vertex must be composed
	of two HTs, each of which is an ST in one of the $z'$ graphs that contain $v$. Thus, there are
	$O(z'^2 \cdot \vstar^2)$ such tuples. For an LHT, say $\tau=(ua,by)$,  that contains $u$ as an end vertex, 
	the analysis remains the same as above in part 2: there are $n-1$ choices for the edge $(u,a)$ and
	$z \cdot \nu^*$ choices for the $r$-tuple $(a, by)$, hence the total number of such tuples 
	is $O(z \cdot n \cdot \nu^*)$. This gives the desired result.
\end{proof}
\subsection{The Overall Algorithm \fullydynamic}\label{sec:alg-main}

Algorithm \fullyupdate in Section~\ref{sec:alg-basic} is a correct fully dynamic algorithm for
APASP,  but it is not a very efficient algorithm, since $C$ and $D$ in
Lemma~\ref{lem:basic-amort} could be very large.
In particular, by generalizing the counterxample in \DI (for maintaining all the historical paths), we can show that storing all the THTs increases the space used by our algorithm to $\widetilde{O}(n \cdot m \cdot \vstar)$, with a worst-case amortized complexity of $\Omega(n\cdot \vstar^2)$. To address this issue we introduce some additional `dummy' updates into the sequence of update operations as described below. Dummy updates to improve performance were first used in \DI for fully dynamic APSP.
However, that dummy update sequence presents several limitations for APASP. We now briefly describe the \DI dummy upates,and then we introduce our new method.

In the \DI method for `dummy updates',
a vertex updated at time $t$ is also given a `dummy' update at steps $t + 2^i$, for each $i \geq 0$ (this update is performed along with the real update at step $t  + 2^i$).
The effect of
a dummy update on a vertex $v$ is to remove any HP or LHP that contains $v$, thereby streamlining the
collection of paths maintained.
A useful property when SPs are unique (as in \DIe) is that each HP in $P^*(x,y)$, for a given pair $x,y$, will have a different
weight.  An $O(\log n)$ bound on the number of HPs in a $P^*(x,y)$ is established in \DI 
as follows.
Let the current time step be $t$,
and consider an HP $\tau$ last updated at $t'<t$. Let us denote the smallest $i$ such that $t'+ 2^i > t$ 
as the dummy-index for  $\tau$. By observing that 
different HPs for $x,y$ must have different dummy-indices, it follows that their number is $O(\log t)$, which is $O(\log n)$
since the data structure is reconstructed after $O(n)$ updates.

Note that, by generalizing the counterxample (for maintaining all the historical paths) in \DIe, storing all the THTs increases the space used by our algorithm to $\widetilde{O}(n \cdot m \cdot \vstar)$, with a worst-case amortized complexity of $\Omega(n\cdot \vstar^2)$.
To avoid this scenario, our algorithm uses a refined update sequence, tailored for multiple shortest paths, as described in Section \ref{sec:pdg}.

We now present our overall fully dynamic algorithm
for APASP.
As in~\cite{DI04,Thorup04} we build up the \system for the initial $n$-node graph $G=(V,E)$ with $n$ inserts starting
with the empty graph (and hence $n$ \incremental updates),  and we then
perform the first $n$ updates in the given update sequence $\Sigma$. After these $2n$ updates, we
reset all data structures and start afresh. 

Consider a graph  $G= (V,E)$ with weight function $\weight$ in which a \incremental or \decremental update is
applied to a vertex $u$. Let $\weight'$ be the weight function after the update, hence the only changes to
the edge weights occur on edges incident to $u$. 
Algorithm \ref{algo:fully-main} gives the overall fully dynamic algorithm for the $t$-th update to a vertex $v$ with
the new weight function $\weight'$. 
This algorithm applies \fullyupdate to vertex $v$ with the new weight function,
thus it
  will be correct if we executed only the first step, but not necessarily efficient. To obtain an efficient algorithm we execute
`dummy updates' on a sequence $\mathcal{N}$ of the most recently updated vertices as specified in Steps 2-5. 
The length of this sequence of vertices is 
determined by the position $k$ of the least significant bit set to 1 in the bit representation  
$B=b_{r-1} \cdots b_0$ of
$t$.
We denote $k$ by $\setbit(t)$.

\begin{algorithm}
  \SetAlgoLined
  \fullyupdateend$(v,\weight')$\;
  $k \leftarrow \setbit (t)$ (i.e., if the bit representation of $t$ is $b_{r-1} \cdots b_0$, then $b_k$ is  the least significant bit with value 1)\;
  $\NODES \leftarrow$ set of vertices updated at steps $t-1, \cdots, t-(2^k -1)$\;  \label{algostep:nodes}
  \For{each $u \in \NODES$ in decreasing order of update time}{
  	\fullyupdateend $(u,\weight')$   $~~$ (dummy updates)\;
  }	
\caption{\fullydynamicend($G, v, \weight',t$)}
\label{algo:fully-main}
\end{algorithm}

\noindent
{\bf Properties of  $\mathcal{N}$.}
Consider the current update step $t$, with its
   bit representation  $B = b_{r-1} \cdots b_0$  and with
$\setbit (t) = k$. We say that index $i$ is a {\it time-stamp} for $t$ if
 $b_i=1$ (so  $r-1 \geq i \geq k$), and for each 
 such time-stamp $i$, we let $time_t (i)$ be the earlier update
step $t'$ whose bit representation has zeros in positions $b_i -1, \cdots , 0$, and
which matches $B$ in positions  $b_{r-1} \cdots b_{i}$.  In other words,
$time _t(i) = t'$, where $t'$ has bit representation $b_{r-1} \cdots b_{i+1} b_i 0 \cdots 0$. 
 We do not define $time_t(i)$ if $b_i=0$. We define  $Prior\mh times (t)$ be the set of $time_t(i)$
where $i$ is a time-stamp for $t$, and we also include in $Prior\mh times (t)$ the initial time $t_0 = 0$.
Note that $|\priortimes (t)| \leq r + 1 = O(\log n)$, since $r \leq \log (2n)$.

 Let $G_t$ be the graph after  the $t$-th update is applied,
$t \geq 1$, with the initial graph being $G_0$ (at time $t_0 = 0$). Thus, 
the input graph to Algorithm~\ref{algo:fully-main} is $G_{t-1}$, and the updated
graph is $G_t$.

\begin{lemma}\label{lem:priortimes}
For every vertex $v$ in $G_t$, the step $t_v$ of the most recent update to $v$ is in $\priortimes(t)$.
\end{lemma}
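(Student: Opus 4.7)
The plan is to prove this by strong induction on $t$, using the fact that Algorithm~\ref{algo:fully-main} refreshes $t_u$ values in a pattern that mirrors the bit structure of $t$. The base case $t=1$ is immediate: $\setbit(1)=0$, so $\priortimes(1)=\{time_1(0)\}=\{1\}$, and the unique vertex of $G_1$ has $t_v=1$.

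For the inductive step at time $t$, let $k=\setbit(t)$ and trace how the algorithm changes update times. The initial call \fullyupdateend$(v,\weight')$ sets $t_v=t$, and the loop over $\NODES$ in Steps~3--5 issues a dummy update (and hence resets $t_u$ to $t$) for exactly those vertices whose pre-step-$t$ update time lay in $\{t-(2^k-1),\dots,t-1\}$. Every other vertex $u$ must retain the update time it had at the end of step $t-2^k$, since no step in $\{t-2^k+1,\dots,t\}$ touched it; in particular, its current value satisfies $t_u\le t-2^k$. Applying the inductive hypothesis at step $t-2^k$ (which is nonnegative because $t\ge 2^k$, with $t-2^k=0$ handled vacuously since the empty graph has no vertices) yields $t_u\in\priortimes(t-2^k)$. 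Thus after step $t$, every vertex of $G_t$ has $t_u\in\{t\}\cup\priortimes(t-2^k)$.

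It remains to verify the set inclusion $\{t\}\cup\priortimes(t-2^k)\subseteq\priortimes(t)$. Writing $t=b_{r-1}\cdots b_{k+1}\,1\,0\cdots 0$, the value $t-2^k$ has bit representation $b_{r-1}\cdots b_{k+1}\,0\,0\cdots 0$, so its time-stamps are exactly the positions $i>k$ with $b_i=1$, and for each such $i$ one checks straight from the definition that $time_{t-2^k}(i)=time_t(i)$. Meanwhile $b_k=1$ is itself a time-stamp of $t$, with $time_t(k)=t$, placing $t$ in $\priortimes(t)$. The main obstacle is really just lining up this bit-level correspondence between the recursion step $t\to t-2^k$ and the combinatorial structure of $\priortimes$; conceptually the argument is a direct induction that mirrors the binary-counter schedule of dummy updates, and no deeper difficulty arises.
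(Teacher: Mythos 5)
Your proof is correct and rests on the same structural fact the paper's proof uses -- that the dummy window at an update step $s$ reaches back exactly to $s-2^{\setbit(s)}$, the next-smaller element of $\priortimes(s)$ -- with you packaging it as a strong induction on $t$ via the identity $\priortimes(t)=\{t\}\cup\priortimes(t-2^{k})$, while the paper simply unrolls the same observation over the intervals $(t_{u-1},t_u]$ determined by the time-stamps of $t$; this is essentially the same argument. Like the paper's own proof, yours implicitly reads Step~\ref{algostep:nodes} of Algorithm~\ref{algo:fully-main} as including vertices that received dummy updates in the window (so that a vertex whose most recent touch lies in $\{t-2^{k}+1,\dots,t-1\}$ is refreshed at step $t$), which is the intended semantics, so no gap arises.
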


\begin{proof}
Let the  bit representation of $t$  be $B = b_{r-1} \cdots b_0$, let $\setbit (t) =i$,
and let $j_1 > j_2 > \cdots > j_s=i$ be the time-stamps for $t$.
Let $t_u= time_t(b_{j_u})$, thus the bit representation of $t_u$ is the same as
$B$, with all bits in positions less than $j_u$ set to zero; let $t_0=1$. 
Then, we observe that  during the execution of Algorithm~\ref{algo:fully-main}  for the
$t_u$-th update, the vertex for update $t_u$ will be updated in Step 1, and the
vertices updated in steps $t_{u-1} +1, \cdots , t_u -1$ will be updated in 
Steps 4-5 of Algorithm~\ref{algo:fully-main}. Hence  all vertices updated in $[t_{u-1}+1, t_u]$ are more recently updated in step $t_u$. Thus the most recent update step for every vertex in
$G_t$ is one of the $O(\log n)$ steps in  $\priortimes(t)$.
\end{proof}

\subsection{Analysis of Algorithm \fullydynamic}\label{sec:anal}

The analysis in this section incorporates many elements from \Tho algorithm~\cite{Thorup04}.
However, these 
are present only in the analysis, and
 the only component of that rather complicated algorithm that we use is the
form of the dummy update sequence in Step~\ref{algostep:nodes} of 
Algorithm~\ref{algo:fully-main}.
 Our
Algorithm~\ref{algo:fully-main} is about as simple as the \DI algorithm when specialized to unique shortest paths since
in that case it suffices to use the update algorithm in \DI instead of the more elaborate \fullyupdate we use here.  

\subsubsection{The Prior Deletion Graph (PDG)}\label{sec:pdg}

Let $t '< t$ be two update steps, and let $W$ be the set of vertices that are updated in the
interval of steps $[t'+1, t]$. We define the
{\it prior deletion graph (PDG)} $\Gamma_{t',t}$ as the induced subgraph of $G_{t'}$ on the vertex set 
$V(G_{t'}) -W$. 
If $t$ is the current update step, then we simply use $\Gamma_{t'}$ instead of $\Gamma_{t',t}$.

We say that a path $p$ 
{\it is present in both $G_{t'}$ and $G_t$} if no call to \fullyupdate is made on
any vertex in $p$ during the update steps in the interval $[t'+1,t]$.
 
\begin{lemma}\label{lem:pdg}
Let tuple $\tau$ represent a collection of paths  in $G_{t'}$.Then,
\begin{enumerate}
\item If $\tau$ is an ST in $G_{t'}$ then $\tau$ continues to be an ST in every PDG $\Gamma_{t',t}$ with
$t \geq t'$ in which $\tau$ is present.

\item For any $\hat{t} \geq t'$, if $\tau$ is an ST in $G_{\hat{t}}$ then  $\tau$ is an ST  in every PDG $\Gamma_{t',t''}$, $ t''\geq \hat{t}$,  in which $\tau$ is present.
\end{enumerate}
\end{lemma}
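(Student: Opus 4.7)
The plan is to derive both parts from two elementary facts about the PDG: (i) $\Gamma_{t',t}$ (resp. $\Gamma_{t',t''}$) is an \emph{induced} subgraph of $G_{t'}$ with identical edge weights, so every path in the PDG is also a path in $G_{t'}$ of the same weight; and (ii) the hypothesis ``$\tau$ is present in $\Gamma_{t',t''}$'' means that no call to \fullyupdate touched any vertex of $\tau$ during $[t'+1,t'']$, so the edges of $\tau$ retain their weights in every intermediate graph. Together these will let me establish equality of $d_{\Gamma}(x,y)$ with the tuple weight $wt(\tau)$ by a simple two-sided inequality.

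For part 1, I would first note that since $\Gamma_{t',t}$ is an induced subgraph of $G_{t'}$, any $x$-to-$y$ path in $\Gamma_{t',t}$ is an $x$-to-$y$ path in $G_{t'}$ of the same weight, giving
\[
d_{\Gamma_{t',t}}(x,y) \;\geq\; d_{G_{t'}}(x,y) \;=\; wt(\tau),
\]
where the last equality uses that $\tau$ is an ST in $G_{t'}$. For the reverse inequality, since $\tau$ is present in $\Gamma_{t',t}$, at least one of its paths $p$ survives in $\Gamma_{t',t}$ with $w(p)=wt(\tau)$, so $d_{\Gamma_{t',t}}(x,y) \leq wt(\tau)$. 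Equality forces $\tau$ to remain an ST in $\Gamma_{t',t}$.

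For part 2, the extra ingredient is the invariance of edge weights along $\tau$ and along any surviving path. Concretely, because $\tau$ is present in $\Gamma_{t',t''}$, no vertex of $\tau$ is updated in $[t'+1,t'']$ and, since $\hat{t}\le t''$, also not in $[t'+1,\hat{t}]$; hence the weight of $\tau$ is the same in $G_{t'}$, $G_{\hat{t}}$, and $\Gamma_{t',t''}$, and equals $d_{G_{\hat{t}}}(x,y)$ because $\tau$ is an ST in $G_{\hat{t}}$. For the lower bound on $d_{\Gamma_{t',t''}}(x,y)$, I would observe that any $x$-to-$y$ path $p$ in $\Gamma_{t',t''}$ uses only vertices untouched in $[t'+1,t'']$, hence is also an unaltered $x$-to-$y$ path in $G_{\hat{t}}$ of the same weight, giving $w(p)\geq d_{G_{\hat{t}}}(x,y)=wt(\tau)$. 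Combined with the surviving path from $\tau$ witnessing the upper bound $d_{\Gamma_{t',t''}}(x,y)\leq wt(\tau)$, this forces equality and shows $\tau$ is an ST in $\Gamma_{t',t''}$.

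The only real subtlety — and the step I would write out most carefully — is the bookkeeping that ties ``$\tau$ is present in $\Gamma_{t',t''}$'' to two conclusions at once: that the edges of $\tau$ have unchanged weight across the whole interval $[t',t'']$, and that every competing path in $\Gamma_{t',t''}$ is simultaneously a valid path of the same weight in $G_{\hat{t}}$. Once this invariance is nailed down, both parts collapse to the standard observation that removing vertices from a graph cannot decrease shortest-path distances, and everything else is a one-line sandwich argument.
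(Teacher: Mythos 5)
Your proposal is correct and takes essentially the same approach as the paper: its part 2 rests on the same key observation, namely that $\Gamma_{t',t''}$ restricted to the surviving vertices coincides with the corresponding induced subgraph of $G_{\hat t}$ (the paper phrases this as viewing $\Gamma_{t',\hat t}$ as obtained from $G_{\hat t}$ by deleting the vertices updated in $[t'+1,\hat t]$, none of which lie on $\tau$), which is exactly your claim that every path of the PDG is an unaltered path of $G_{\hat t}$ of the same weight. The only cosmetic difference is that you establish the persistence of an ST under vertex deletions avoiding $\tau$ by an explicit two-sided distance inequality, whereas the paper simply cites this monotonicity fact from the earlier NPR/DI work.
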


\begin{proof}
As observed in~\cite{NPR14b} (and in~\cite{DI04} for unique shortest paths), an ST in a graph remains an ST
after a weight increase on any edge that is not on it. This establishes the first part since an increase-only updates does not affect the weight of existing STs that avoid the updated node. 
For the second
part, we observe that $\Gamma_{t',\hat{t}}$ can be viewed as being obtained from $G_{\hat{t}}$ by deleting the vertices updated in 
$[t'+1,\hat{t}]$. 
Since $\tau$ is an ST in $G_{\hat{t}}$, it continues to be an ST in the graph $\Gamma_{t',\hat{t}}$, which can be obtained
from $G_{\hat{t}}$ through a sequence of \decremental updates that do not change the weight of any edge on $\tau$. Finally since $\tau$ is an ST in $\Gamma_{t',\hat{t}}$, it must be an ST in any $\Gamma_{t',t''}$ in which it appears, for $t''>\hat{t}$.
\end{proof}

\noindent
{\bf PDGs for  Update $t$:}
We will associate with the current update step $t$, the set of PDGs
 $\Gamma_{t'}$, for $t' \in Prior\mh times(t)$.
These PDGs are similar to
the {\it level graphs} maintained in \Tho algorithm~\cite{Thorup04} (our \FFD algorithm in Section \ref{sec:improved} generalizes this approach to multiple shortest paths), but we choose to give them a different name since
we use them here only to analyze the performance of our algorithm.

\begin{lemma}\label{lem:pdg2}
Consider a sequence of fully dynamic updates performed using Algorithm~\ref{algo:fully-main}.
Let the current update step be $t$.
Then, each HT in the \system{} for
 $G_t$ is an ST in at least one of the $\Gamma_{t'}$, where $t'  \in Prior\mh times(t)$.
 Further  $z=O( \log n)$ in Lemma~\ref{lem:z} for $G_t$.
\end{lemma}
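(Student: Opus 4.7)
The plan is to handle the two assertions in order. For the bound $z = O(\log n)$: the paper has already observed that $|\priortimes(t)| \leq r = O(\log n)$ since $r \leq \log(2n)$, so the family of PDGs $\Gamma_{t'}$ for $t' \in \priortimes(t)$ has at most $O(\log n)$ members. Hence we may take $z = O(\log n)$ when instantiating Lemma~\ref{lem:z} for $G_t$, provided the first claim holds.

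For the main claim, I would fix an HT $\tau$ in the \system for $G_t$ and let $t_\tau \le t$ be the most recent step at which some vertex $v^*$ on a path in $\tau$ was updated. By the definition of HT, there exists a step $s \in [t_\tau, t]$ at which $\tau$ is an ST in $G_s$. It then suffices to verify two things: (i) $t_\tau \in \priortimes(t)$, and (ii) $\tau$ is an ST in $\Gamma_{t_\tau}$. For (i), note that $v^*$ lies in $G_t$ (if $v^*$ had been deleted in $[t_\tau+1,t]$ the tuple $\tau$ could not still be in the \systemend), so Lemma~\ref{lem:priortimes} applied to $v^*$ yields that the most recent update of $v^*$, namely $t_\tau$, lies in $\priortimes(t)$. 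For (ii), observe that $\Gamma_{t_\tau} = \Gamma_{t_\tau,t}$ is obtained from $G_{t_\tau}$ by removing only vertices updated in $[t_\tau+1,t]$; by the maximality of $t_\tau$ no vertex on any path in $\tau$ is updated in this interval, so $\tau$ is present in $\Gamma_{t_\tau}$. Applying Lemma~\ref{lem:pdg}~part~2 with $t' = t_\tau$, $\hat{t}=s$, $t''=t$ then gives that $\tau$ is an ST in $\Gamma_{t_\tau}$, as required.

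The main obstacle is modest and lies in the careful bookkeeping of the interlocking definitions: one must justify that the witness vertex $v^*$ still lies in $G_t$ (so that Lemma~\ref{lem:priortimes} applies to it), and that the interval $[t_\tau,t]$ appearing in the HT definition matches the interval used to construct $\Gamma_{t_\tau,t}$ (so that $\tau$ is present there and Lemma~\ref{lem:pdg}~part~2 can be invoked). Both concerns reduce to the defining property of $t_\tau$ as the latest update step affecting any vertex on a path in $\tau$, combined with the fact that HTs currently in the \system cannot contain deleted vertices.
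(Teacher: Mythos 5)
Your proposal is correct and follows essentially the same route as the paper's own proof: identify the most recently updated vertex on $\tau$ with update step $t_\tau$, use Lemma~\ref{lem:priortimes} to place $t_\tau$ in $\priortimes(t)$, apply part~2 of Lemma~\ref{lem:pdg} (with $t'=t_\tau$, $\hat{t}$ the step where $\tau$ is an ST, $t''=t$) to conclude $\tau$ is an ST in $\Gamma_{t_\tau}$, and then invoke $|\priortimes(t)|=O(\log n)$ for the bound on $z$. Your write-up is in fact slightly more careful than the paper's in spelling out why $\tau$ is present in $\Gamma_{t_\tau}$ and why the witness vertex still lies in $G_t$, but these are elaborations of the same argument, not a different approach.
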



\begin{proof}
	Consider an HT $\tau= (xa,by)$ in $G_t$. Let the most recently updated vertex in
	$\tau$ be $v$, and let its update step be  $t_v \leq t$. 
	By definition of HT, $\tau$ is an ST
	in some $t' \leq t$. If $t' < t_v$ then trivially $\tau$ is an ST in $\Gamma_{t_v}$.
	Otherwise if $t'$ in $[t_v,t]$ then, by part 2 of Lemma~\ref{lem:pdg}, using $\hat{t} = t' = t_v$ and $t''=t$, 
	$\tau$ is an ST in $\Gamma_{t_v}$. By Lemma~\ref{lem:priortimes}, $t_v \in \priortimes (t)$.
	Finally, since  $|\priortimes (t)| \leq \log n$ for any $t$, $z= O( \log n)$ in 
	Lemma~\ref{lem:z}.
\end{proof}

\noindent
Recall that $\CH$ is the combined history of a triple $\gamma$ (as defined in Section \ref{sec:alg-basic}). Here, we bound its space complexity.
\begin{corollary}  \label{lem:cor1}
The size of $\CH$ is $O(\log n)$ for each THT processed during cleanup. Thus the additional processing time for a triple in cleanup is $O(\log n)$ and, by Lemma \ref{lem:z} part 2, the overall space of the $\CH$ structures is $\widetilde{O}(\vstar^2)$.
\end{corollary}

\begin{proof}
	From lemma \ref{lem:pdg2}, a single THT $\gamma$ can exist in at most $z= O( \log n)$ PDG (with different counts). Moreover, all the paths joining $\gamma$ in a specific PDG $\Gamma_t$ can only be generated by the node updated at time $t$. Thus, the combined history $\CH(\gamma)$ contains at most $O(\log n)$ entries.   
\end{proof}

We will use the above lemma to obtain our amortized time bound in
Lemma~\ref{lem:amort-main} in the next section. It is not clear that a similar result can be obtained with
the \DI dummy update sequence.

\subsubsection{Amortized Cost of Algorithm \fullydynamic}\label{sec:amortized}

We will now bound the amortized cost of an update in  a sequence $\Sigma$  of $n$ real updates on an initial
$n$-node graph $G=(V,E)$. As mentioned earlier, in our method this will translate to a sequence $\Sigma'$ of
$2n$ real (as opposed to dummy) updates: there is  
an initial sequence of $n$ updates, starting with the empty
graph, which inserts each of the $n$ vertices in $G$ along with  incident edges that have not yet been
inserted. Following this is the sequence of $n$ real updates in $\Sigma$.  Each of these $2n$ updates in
$\Sigma'$ will make a call to Algorithm~\ref{algo:fully-main}. 
As before, let $\nu^*$ be a bound on the number of edges 
that lie on shortest paths through any given vertex 
in any of $G_t$.  The following lemma establishes our main Theorem~\ref{th:main}.

\begin{lemma}\label{lem:amort-main}
Algorithm~\ref{algo:fully-main} executes a sequence $\Sigma$ of $n$ real updates on an $n$-node graph in
$O(\vstar^2 \cdot \log^3 n)$ amortized time per update.
\end{lemma}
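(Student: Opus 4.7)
The plan is to apply Lemma~\ref{lem:basic-amort} to the augmented sequence $\Sigma'$ of $2n$ real updates---the $n$ vertex-insertions that build $G$ from the empty graph, followed by the $n$ updates of $\Sigma$---instantiating the parameters $r$, $C$, and $D$ through Lemmas~\ref{lem:pdg2} and~\ref{lem:z}, with a sharper per-call accounting of the cleanup work that saves one logarithmic factor over the naive plug-in.

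I would first count the \fullyupdateend{} invocations: Algorithm~\ref{algo:fully-main} performs $2^{\setbit(t)}$ calls at step $t$, and since at most $n/2^k$ values of $t\in[1,2n]$ have $\setbit(t)=k$, the total is $r=\sum_k 2^k\cdot(n/2^k)=O(n\log n)$. Next I would invoke Lemma~\ref{lem:pdg2} to obtain $z=O(\log n)$ PDGs that collectively realize every HT in the current \system{} as an ST; Lemma~\ref{lem:z}(1) then gives $D=O(m\cdot \log n\cdot \vstar)$, and Lemma~\ref{lem:z}(3) provides the per-call estimate $X_u(t)=O((z+z'^2)\vstar^2)$, where $X_u(t)$ counts the LHTs containing the updated vertex $u$ at the start of the call and $z'=|\{t'\in\priortimes(t):t'\ge t_u^{\mathrm{prev}}\}|$, with $t_u^{\mathrm{prev}}$ the most recent prior \fullyupdateend{} step on $u$.

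The main obstacle is to bound $\sum_{\text{calls}}z'^2$ by $O(n\log^2 n)$; the worst-case substitution $z'=O(\log n)$ at each of the $O(n\log n)$ calls would give only $O(n\log^3 n)$ and lose a $\log n$ factor. I expect to save it by arguing that $z'=1$ at every \emph{dummy} call. Indeed, if the call on $u$ at step $t$ is a dummy, the schedule in Steps~\ref{algostep:nodes}--5 forces $t_u^{\mathrm{prev}}>t-2^{\setbit(t)}$; writing $k=\setbit(t)$, the bit at position $k$ of $t$ is $1$ and the lower $k$ bits are $0$, so for any $j>k$ with $b_j=1$ we have $t\bmod 2^j\ge 2^k$ and thus $time_t(j)\le t-2^k<t_u^{\mathrm{prev}}$, leaving only $j=k$ with $time_t(k)=t$, i.e.\ $z'=1$. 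Then the $2n$ real calls contribute at most $2n\cdot\log^2 n$ to $\sum z'^2$, the $O(n\log n)$ dummy calls contribute $O(n\log n)$, and $\sum z\le r\cdot O(\log n)=O(n\log^2 n)$, so altogether $\sum_{\text{calls}}X_u=O(\vstar^2\cdot n\log^2 n)$.

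Finally, replacing $r\cdot C$ by $\sum X_u$ in the bound of Lemma~\ref{lem:basic-amort} gives total time $O((\sum X_u+r\cdot n^2+D)\log n)=O(n\log^3 n\cdot \vstar^2+n^3\log^2 n+m\log^2 n\cdot \vstar)$. Using $\vstar=\Omega(n)$ (hence $n^2\le \vstar^2$) and $m\le n\vstar$ (hence $m\cdot \vstar\le n\vstar^2$), the latter two terms are absorbed into $O(n\log^2 n\cdot \vstar^2)$, leaving the total at $O(n\log^3 n\cdot \vstar^2)$. Dividing by the $n$ real updates of $\Sigma$ yields the claimed $O(\vstar^2\cdot \log^3 n)$ amortized cost per update.
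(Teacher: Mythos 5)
Your overall route is the same as the paper's: count the $O(n\log n)$ total \fullyupdateend{} calls, take $z=O(\log n)$ from Lemma~\ref{lem:pdg2}, get $D=O(m\cdot\vstar\cdot\log n)$ from Lemma~\ref{lem:z}(1), bound the per-call work via Lemma~\ref{lem:z}(3) with $z'=O(\log n)$ for real calls but $z'=O(1)$ for dummy calls (this asymmetry is exactly the paper's $C_1$ versus $C_2$ split, and it is indeed where the extra log factor is saved), and then feed these into Lemma~\ref{lem:basic-amort} in its per-call form and amortize over the $n$ real updates of $\Sigma$. The final assembly and arithmetic match the paper.

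The one step that is genuinely wrong is your claim that $z'=1$ at every dummy call, and its justification via $z'=|\{t'\in\priortimes(t):t'\ge t_u^{\mathrm{prev}}\}|$. This implicitly applies Lemma~\ref{lem:pdg2} in the middle of step $t$ to the vertex $u$ that has \emph{not yet} been re-updated, but Lemma~\ref{lem:priortimes} (and hence Lemma~\ref{lem:pdg2}) only guarantees that most-recent-update times lie in $\priortimes(t)$ once step $t$'s calls have been performed; at the moment of the dummy call, $u$'s most recent update time is $t_u^{\mathrm{prev}}\in(t-2^{\setbit(t)},t)$, which in general is \emph{not} in $\priortimes(t)$ (e.g.\ $t=4$, $t_u^{\mathrm{prev}}=3$, $\priortimes(4)=\{4\}$). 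Consequently the HTs through $u$ that have survived untouched since $u$'s last update --- precisely the THTs that the dummy update is there to purge --- are STs only in the PDG $\Gamma_{t_u^{\mathrm{prev}}}$, a graph outside your $\priortimes(t)$ family, and they need not be STs in $G_t$; so $z'=1$ is false. The correct statement is the paper's: since every vertex updated in $(t_u^{\mathrm{prev}},t)$ has already been re-updated at step $t$ before $u$'s turn, any HT through $u$ is an ST either in $\Gamma_{t_u^{\mathrm{prev}}}$ or in the current graph $G_t=\Gamma_t$, i.e.\ $z'\le 2$ (with the covering family enlarged to include $\Gamma_{t_u^{\mathrm{prev}}}$). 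Fortunately this error is harmless to your bound: Lemma~\ref{lem:z}(3) gives $O((z+z'^2)\cdot\vstar^{2})$ and you retain the additive $z=O(\log n)$ term, so a dummy call still costs $O(\vstar^2\log n)$ whether $z'$ is $1$ or $2$, and your totals $\sum z'^2=O(n\log^2 n)$, $\sum X_u=O(\vstar^2 n\log^2 n)$ and the final $O(\vstar^2\log^3 n)$ amortized bound all stand once the dummy-call justification is replaced by the argument above.
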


\begin{proof}
Let $n'=2n$, and as described above, let $\Sigma'$ be the sequence of $n'$ calls to 
Algorithm~\ref{algo:fully-main} 
used  to execute the $n$ updates in $\Sigma$.
We first observe that 
Algorithm~\ref{algo:fully-main} performs $O(n' \log n')$ dummy updates when executing these $n'$ calls.
This is because there are $n'/2^k$ real updates for update steps $t$ with $set\mh bit(t)=k$, and each such update is
accompanied by $2^k-1$ dummy updates. So, across all real updates there are $O(n')$
dummy updates for each position of $set\mh bit$, adding up to $O(n' \log n')$ in total, across all $set\mh bit$ positions.

We now use Lemma~\ref{lem:basic-amort} to bound the time needed to execute the $d= n'\log n'$ dummy updates and $n'$ real updates.
We need to bound the parameters $C$ and $D$ in Lemma~\ref{lem:basic-amort}. We first consider $D$.
For this we will use part 1 of Lemma~\ref{lem:z}.
By Lemma~\ref{lem:pdg2}, $z= O(\log n)$ for any $t$. Hence by  Lemma~\ref{lem:z}
the maximum number of tuples that can remain at the end of the update sequence is 
$D=O(m \cdot \vstar \cdot \log n)$.

Now we bound the parameter $C$, for which we will obtain separate bounds, $C_1$ for the real updates,
and $C_2$ for the dummy updates.
For each real update we have $z=z'= O(\log n)$ in part 3 of Lemma~\ref{lem:z}, hence 
the number of tuples  that contain a path
through the updated vertex is $O(\vstar^2 \cdot \log^2 n)$, thus $C_1 = O(\vstar^2 \cdot \log^2 n)$. 

Now consider a dummy update on a vertex $u$ in Step 5 of Algorithm~\ref{algo:fully-main}, and let $u$ be in some
level $i$ (note the $i$ must be less than the current level $k$). At the time
this call is made, all vertices that were updated after $u$ was last updated in the graph (i.e., after $time (i)$) are now in the newest
level $k$. Thus, any LHT that contains $u$ lies  in either $\Gamma_{time(i)}$ or in $G_t = \Gamma_t$. Hence $z'=2$  in part 3
of Lemma~\ref{lem:z} for any vertex $u$ undergoing a dummy update. Thus $C_2 = O(\vstar^2 \cdot \log n)$.

Hence the total time taken by Algorithm~\ref{algo:fully-main} for its $d= n'\log n'$ dummy updates  and the $n'$ real updates is,
by Lemma~\ref{lem:basic-amort},  $O(~(n' \cdot (n^2 + C_1)  + (n' \log n') \cdot (n^2 + C_2) + D) \cdot \log n)
= O(n' \cdot \vstar^2 \cdot \log^2 n + (n' \cdot \log n')  \cdot \vstar^2 \cdot \log n + D \log n) =  O(n \cdot \vstar^2 \cdot \log^3 n)$ 
(the $n^2$ and $D$ terms are  dropped
since $\vstar = \Omega (n)$, 
hence $m=O(n\cdot \vstar)$).
 
It follows that the amortized cost of each of the $n$ updates in $\Sigma$ 
 is $O(\frac{1}{n} \cdot n' \cdot \vstar^2 \cdot \log^3 n) = O(\vstar^2 \cdot  \log^3 n)$.
\end{proof}

\section{A Faster Fully Dynamic APASP Algorithm: \FFD } \label{sec:improved}
In this section we present our improved algorithm \FFD for computing APASP, which uses new techniques data structures (see Section \ref{sec:newfeat}). Our approach is based on the results obtained in \cite{Thorup04}, for which we give a brief introduction in the following section. We then discuss the new data structures involved in our result (Sections \ref{sec:lvl-tuple-system} and \ref{sec:newfeat}), new features not present in \cite{Thorup04}, that arise from having multiple SPs (Sections \ref{sec:ffeatures}), and an overall description of the \FFD algorithm and its new components (Sections \ref{sec:ffdalgo}).  

\subsection{The \Tho Fully Dynamic APSP Algorithm}
In \cite{Thorup04}, Thorup improves by a logarithmic factor over \DI (for unique shortest paths) by using a \emph{level system} of \decremental graphs. 
The shortest paths and locally shortest paths are generated level by level leading to a different complexity analysis from \DIe. When a node is removed from the current graph, it is also removed from every older level graph that contains it. 
The implementation of the \Tho APSP algorithm is not fully specified in \cite{Thorup04}. For our \FFD algorithm, we present generalizations of the data structures sketched in \Tho together with new data structures required to achieve efficiency (see Section \ref{subsec:fully-impl} for a summary of these data structures).     

\subsection{Data Structures for Algorithm  \FFDe} \label{subsec:fully-impl}
Our  algorithm  \FFD requires several data structures. Some of these are already present in \NPR and \Thoe, while others are newly defined or generalized from earlier ones. We will use components from our basic algorithm such as the abstract representation of the level system using PDGs (see Section \ref{sec:lvl-tuple-system}) and the flag bit $\beta$ for a triple, the 
 \MT scheme introduced in \NPR (see \cite{NPR14b} for more details), and the maintenance of level  graphs from \Thoe.

In the following sections, we describe all data structures used by our algorithm. 
In Table II we summarize the structures we use, including those inherited from \cite{NPR14b,Thorup04}.
The new components we introduce in this paper to achieve efficiency for fully dynamic APASP, are described in section \ref{sec:newfeat} and listed in Table II, Part D.

\subsubsection{A Level System for Centered Tuples} \label{sec:lvl-tuple-system}
Algorithm \FFD  uses the PDGs defined in \ref{sec:pdg} as real data structures similar to \Tho for APSP. This is done in order to generate a smaller superset of LSTs than \fullydynamicend, and it is the key to achieving the improved efficiency. Here we describe the level system and the data structures we use in \FFDe, with special attention to
the new elements we introduce.

As in~\cite{DI04,Thorup04} we build up the \system for the initial $n$-node graph $G=(V,E)$ with $n$ insert updates (starting
with the empty graph), and we then
perform $n$ updates according to the update sequence $\Sigma$. After $2n$ updates, we
reset all data structures and start afresh.

Our level system is a generalization of \Tho to fully dynamic APASP.
For an update at step $t$, let $k$ be the position of the least significant bit with value 1 in the binary representation of $t$. Then
the $t$-th update activates level $k$, and deactivates all levels $j<k$ by folding these levels into level $k$.
These levels are considered implicitly in our basic result, and using the same notation, we will say that $time(k)=t$, and $level(t)=k$; moreover $G_t$ indicates the graph after the $t$-th update.
Note that the largest level created before we start afresh is $r= \log 2n$.

\paragraph{\bf Centering vertices and tuples/triples} As in \Thoe, each \emph{vertex $v$ is centered} in level $k=level(t)$, where $t$ is the most recent step in which $v$ was updated.
A path $p$ in a tuple is centered in level $k'=level(t')$, where $t'$ is the most recent step in which $p$ entered the tuple system (within some tuple) or was modified by a vertex update. 
Hence, in contrast to \Thoe, a triple can represent paths centered in different levels. Thus,
for a triple $\gamma = ((xa, by), wt, count)$ we maintain an array $\CA_{\gamma}$ where 
$$\CA_{\gamma}[i] = \textrm{number of paths represented by $\gamma$ that are all centered in level $i$}$$

\noindent
It follows that $\sum_i{\CA_{\gamma}[i]}=count$. 
The \emph{level center of the triple $\gamma$} is the smallest (i.e., most recent level) $i$ such that $\CA_{\gamma}[i] \neq 0$.

\paragraph{\bf Level graphs (PDGs)}
In our basic result, the PDGs (introduced in \ref{sec:pdg}) are used only in the analysis, and are not maintained by the algorithm. Here, in \FFDe, we will maintain a set of local data
structures for each PDG that is relevant to the current graph; also, in a small change of notation, we will denote a level graph for
time $t'\leq t$ as $\Gamma_{k'}$, where $k' = level(t')$ rather than the our previous notation of $\Gamma_{t'}$. These graphs are similar to the level graphs in \Thoe. 
As in \Thoe, 
  only certain information for $\levelgraph_k$ is explicitly maintained in its local data structures: the STs centered in level $k$ plus all the extensions that can generate STs in $\levelgraph_k$. The data structures used by our algorithm to maintain triples are Global and Local, which we now describe.

\begin{table}[ht] 			
	\begin{center}
		\begin{tabular}{|c|c|c|c|}
			\hline 
			\textbf{Notation} & \textbf{Data Structure} &  \multicolumn{2}{|c|} {\textbf{Appears}} \\ 
			\hline
			\hline 
			\multicolumn{4}{|c|}{\textbf{Part A :: Global Data Structures} {\scriptsize (for each pair of nodes $(x, y)$)}} \\ 
			\hline 
			\hline 
			$P(x,y)$ & all (centered) \LHT s from $x$ to $y$ with weight as key &  \multicolumn{2}{|c|}{ \multirow{2}{*}{\parbox[c]{3cm}{(\cite{DI04} for paths, \cite{NPR14b} for LSTs)}}} \\ 
			\cline{1-2} 
			$P^{*}(x,y)$ & all (centered) \HT s from $x$ to $y$ with weight as key & \multicolumn{2}{|c|}{ } \\ 
			\hline 
			$L(x,by)$ & $\{x' : (x'x, by)$ denotes a (centered) \LHTe $\}$ & \multicolumn{2}{|c|}{ (\cite{NPR14b} for LSTs) } \\ 
			\hline 
			$R(xa, y)$ & $\{y' : (xa, yy')$ denotes a (centered) \LHTe $\}$ & \multicolumn{2}{|c|}{ (\cite{NPR14b} for LSTs) }  \\ 
			\hline 
			\MT & global dictionary for Marking scheme &  \multicolumn{2}{|c|}{ \cite{NPR14b}}\\ 
			\hline 
			\hline
			\multicolumn{4}{|c|}{\textbf{Part B :: Local Data Structures} {\scriptsize (for each active level $i$, for each pair of nodes $(x,y)$)}} \\  
			\hline 
			\hline
			$P^{*}_i(x,y)$ & STs from $x$ to $y$ centered in level $i$ & \multicolumn{2}{|c|}{  \multirow{5}{*}{\parbox[c]{2.5cm}{(sketched in \cite{Thorup04} for paths)}}}\\ 
			\cline{1-2} 
			$L_i^{*}(x,y)$ & $\{x' : (x'x, y)$ denotes an $\ell$-tuple for SPs centered in level $i \}$ & \multicolumn{2}{|c|}{ } \\ 
			\cline{1-2} 
			$R_i^{*}(x,y)$ &  $\{y' : (x, yy')$ denotes an $r$-tuple for SPs centered in level $i \}$ & \multicolumn{2}{|c|}{ } \\ 
			\cline{1-2} 
			$LC_i^{*}(x,y)$ & the subset $\{ x' \in L_i^{*}(x,y) : x' \textrm{ is centered in level } i \}$  & \multicolumn{2}{|c|}{ }\\ 
			\cline{1-2} 
			$RC_i^{*}(x,y)$ &  the subset $\{ y' \in R_i^{*}(x,y) : y' \textrm{ is centered in level } i \}$ & \multicolumn{2}{|c|}{ }\\ 
			\hline
			$dict_i$ & dictionary of pointers from local STs to global $P$ and $P^*$ &  \multicolumn{2}{|c|}{new} \\
			\hline 
			\hline
			\multicolumn{4}{|c|}{\textbf{Part C :: Inherited Data Structures}} \\  
			\hline 
			\hline
			$\beta(\gamma)$ & flag bit for the (centered) triple $\gamma$ & \multicolumn{2}{|c|}{ basic algorithm}\\ 
			\hline
			$level(t)$ & level activated during $t$-th update &  \multicolumn{2}{|c|}{ basic algorithm}\\ 
			\hline 
			$time(k)$ & most recent update in which level $k$ is activated & \multicolumn{2}{|c|}{ basic algorithm}\\ 
			\hline 
			$\mathcal{N}$ & nodes (centered in levels) deactivated in the current step & \multicolumn{2}{|c|}{ basic algorithm }\\ 
			\hline 
			$\Gamma_k$ & level graph (PDG) created during $time(k)$-th update &  \multicolumn{2}{|c|}{ (\cite{Thorup04} for paths)}\\ 
			\hline
			\hline
			\multicolumn{4}{|c|}{\textbf{Part D :: New Data Structures}} \\  
			\hline 
			\hline
			$C_\gamma$ & distribution of all paths in triple $\gamma$ among active levels & \multicolumn{2}{|c|}{new} \\ 
			\hline
			$\DMs(x,y)$ & linked-list containing the history of distances from $x$ to $y$ & \multicolumn{2}{|c|}{new}\\
			\hline
			$\LN(x,y,wt)$ & the set $\{b : \exists \, (xa,by)$ of weight $wt$ in $P(x,y)\}$ & \multicolumn{2}{|c|}{new}\\
			\hline
			$\RN(x,y,wt)$ & the set $\{a : \exists \, (xa,by)$ of weight $wt$ in $P(x,y)\}$ & \multicolumn{2}{|c|}{new}\\
			\hline  
		\end{tabular}
		\label{table:dsA}
		\caption{Notation summary.}
	\end{center}
\end{table}

\paragraph{\bf Global Structures}
The global data structures are $P^*$, $P$, $L$ and $R$ (see Table II, Part~A).
\begin{itemize}
	\item The structures $P^*(x,y)$ and $P(x,y)$ will contain HTs (including all STs) and LHTs (including all LSTs), respectively,  from $x$ to $y$. They are priority queues with the weights of the triple and a flag bit $\beta$ as key. For a triple $\gamma$ in $P$, the flag bit $\beta(\gamma)=0$ if the triple $\gamma$ is in $P$ but not in $P^*$, and $\beta(\gamma)=1$ if the triple $\gamma$ is in $P$ and $P^*$.
	\item The structure $L(x,by)$ ($R(xa,y)$) is the set of all left (right) extension vertices that generate a centered LHT in the \systemend.
\end{itemize}

\paragraph{\bf Local Structures}
The local data structures we introduce in this paper are $L^*_i, R^*_i, LC^*_i$ and $RC^*_i$ (see Table II, Part~B). These are generalization of the data structures sketched in \Tho for unique SPs in the graph. For every pair of nodes $(x,y)$:

\begin{itemize}
	\item The structure $P^*_i(x,y)$ contains the set of STs from $x$ to $y$ centered in $\levelgraph_i$. It is implemented as a set. 
	\item The structure $L^*_i(x,y)$ ($R^*_i(x,y)$) contains all left (right) extensions that generate a shortest $\ell$-tuple ($r$-tuple) centered in level $i$. It is implemented as a balanced search tree.
	\item The structure $LC^*_i(x,y)$ ($RC^*_i(x,y)$) contains left (right) extensions centered in level $i$ that generate a shortest $\ell$-tuple ($r$-tuple) centered in level $i$. It is implemented as a balanced search tree.
	\item  A dictionary $\dict_i$, contains STs in $P^*_i$ using the key $[x, y, a, b]$ and two pointers stored along with each ST. The two pointers refer to the location in $P(x,y)$ and $P^*(x,y)$
	of the triple of the form $(x,a,b,y)$ contained in $P^*_i(x,y)$.
\end{itemize}

\noindent
In order to recompute BC scores (see Section \ref{sec:fdbc}) we will consider $R^*(x,y)=\bigcup_i{R_i^*(x,y)}$ and similarly $L^*(x,y)=\bigcup_i{L_i^*(x,y)}$. 

\subsubsection{New Structures} \label{sec:newfeat}
We introduce two completely new data structures, not used in previous results \cite{DI04,Thorup04,NPR14b}, which are essential to achieve efficiency for our \FFD algorithm. Both are needed to address the Partial Extension Problem (PEP) which does not occur in \Tho for fully dynamic APSP (see section \ref{sec:ffeatures}).

\paragraph{\bf Distance History Matrix}
 The \emph{distance history matrix} is a matrix $\DMs$ where each entry is a pointer to a linked list: for each $x, y \in V$, the linked list $\DMs(x,y)$ contains the sequence of different pairs $(wt,k)$, where each one represents an SP weight $wt$ from $x$ to $y$, along with the most recent level $k$ in which the weight $wt$ was the shortest distance from $x$ to $y$ in the graph $\levelgraph_k$. 
 The pair with weight $wt$ in $\DMs(x,y)$ is double-linked to every triple from $x$ to $y$ with weight $wt$ in the system. Precisely, when a new triple $\gamma$ from $x$ to $y$ of weight $wt$ is inserted in the algorithm, a link is formed between $\gamma$ and the pair $(wt,k)$ in $\DMs(x,y)$. With this structure, the \FFD algorithm can quickly check if there are still triples of a specific weight in the \systemend, especially for example when we need to remove a given weight from $\DMs$.
 Note that the size of each linked list is $O(\log n)$.

\paragraph{\bf Historical Extension (HE) Sets RN and LN}
Another important type of structure we introduce are the sets $RN$ and $LN$. These structures are crucial to select efficiently the set of restored historical tuples that need to be extended (see Appendix \ref{ffixup-desc}). $RN(x,y,wt)$ ($LN(x,y,wt)$ works symmetrically) contains all nodes $b$ such that there exists at least one tuple of the form $(x \times,by)$ and weight $wt$ in $P(x,y)$. Similarly to $\DMs$, every time a new triple $\gamma$ of this form is inserted in the \systemend, a double link is created between $\gamma$ and the occurrence of $b$ in $RN(x,y,wt)$ in order to quickly access the triple when needed.

\noindent
The total space used by $\DMs$, $\RN$ and $\LN$ is $O(n^2 \log n)$. This is dominated by the overall space used by the algorithm to maintain all the triples in the \system across all levels (see Lemma \ref{lemma:count2}, Section \ref{sec:proof}).
\subsection{New Features in Algorithm \FFDe} \label{sec:ffeatures}
In this section, we discuss two challenges that arise when we attempt to generalize the level graph method used in \Tho (for APSP with unique SPs) to a fully dynamic APASP algorithm. Both are addressed by the algorithms in Appendix \ref{sec:algo}.

\noindent
\paragraph{\bf The bit $\beta$ feature}
The control-bit $\beta$ is introduced (and only briefly described) in our basic result to avoid the processing of untouched historical triples.
Here, we elaborate on this technique in more details and we also describe how it helps in the more complex setting of the level \systemend.

Consider figure \ref{fig:b}.
The ST $\gamma=((xa,by),wt,count)$ is created in level $k$ (Fig. \ref{fig:bk}). At $time(k)$, we have $\gamma \in P^*$ and also $\gamma \in P$ with $\beta(\gamma)=1$. In a more recent level $j < k$, a shorter triple $\gamma'=((xv,vy),wt', count')$, with weight $wt' < wt$, that goes trough an updated vertex $v$ is generated (Fig. \ref{fig:bj}). Thus at $time(j)$, we have $\gamma' \in P^*$ and also $\gamma' \in P$ with $\beta(\gamma')=1$; but $\gamma$ still appears in both $P^*$ and $P$ as a historical triple. Finally, a new LST $\gamma''=((xa',b'y),wt,count'')$, with the same weight as $\gamma$, is generated in level $i<j$ (Fig. \ref{fig:bi}). Note that $\gamma''$ is only in $P$ with $\beta(\gamma'')=0$ and not in $P^*$, as is the case of every LST that is not an ST. When an increase-only update removes $v$ and the triple $\gamma'$, the algorithm needs to restore all the triples with shortest weight $wt$. But while $\gamma$ is historical and does not require any additional extension, $\gamma''$ is only present in $P$ and needs to be processed. Our \FFD algorithm achieve this by checking the bit $\beta$ associated to each of these triples. The algorithm will extract and process all the triples with $\beta=0$ from $P(x,y)$.
These guarantees that a triple only present in $P$, or present in $P$ and $P^*$ with different counts is never missed by the algorithm.
\begin{figure}
	\centering
	\subfigure[level $k$]{
		\makebox[.3\textwidth]{
			\begin{tikzpicture}[every node/.style={circle, draw, inner sep=0pt, minimum width=5pt}]
			\node (x)[label=above:$x$] at (0,0)  {};
			\node (a)[label=left:$a$] at (0,-0.8) {};
			\node (b)[label=below left:$b$] at (0,-2.2) {};
			\node (y)[label=below:$y$] at (0,-3) {}; 
			\draw[->] (x) -- (a);
			\path[->,decoration={snake}] { (a) edge[decorate] (b)};
			\path[->,decoration={snake}] { (a) edge[bend right=60,decorate] (b.west)};
			\path[->,decoration={snake}] { (a) edge[bend left=60,decorate] (b.east)};
			\draw[->] (b) -- (y);
			\end{tikzpicture}} \label{fig:bk}} 
	\subfigure[level $j<k$]{
		\makebox[.3\textwidth]{
			\begin{tikzpicture}[every node/.style={circle, draw, inner sep=0pt, minimum width=5pt}]
			\node (x)[label=above:$x$] at (0,0)  {};
			\node (v)[label=right:$v$] at (1,-1.5) {};
			\node (y)[label=below:$y$] at (0,-3) {}; 
			\draw[->] (x) -- (v);
			\draw[->] (v) -- (y);
			\end{tikzpicture}} \label{fig:bj}} 
	\subfigure[level $i<j$]{
		\makebox[.3\textwidth]{
			\begin{tikzpicture}[every node/.style={circle, draw, inner sep=0pt, minimum width=5pt}]
			\node (x)[label=above:$x$] at (0,0)  {};
			\node (a1)[label=above left:$a'$] at (-1,-0.8) {};
			\node (b1)[label=below left:$b'$] at (-1,-2.2) {};
			\node (y)[label=below:$y$] at (0,-3) {}; 
			\draw[->] (x) -- (a1);
			\path[->,decoration={snake}] { (a1) edge[bend right=60,decorate] (b1.west)};
			\path[->,decoration={snake}] { (a1) edge[bend left=60,decorate] (b1.east)};
			\draw[->] (b1) -- (y);
			\end{tikzpicture}} \label{fig:bi}}
	\caption{The bit $\beta$ feature}
	\label{fig:b}
\end{figure} 

\noindent
\paragraph{\bf The partial extension problem (PEP)}
Consider the update sequence described below and illustrated in figure \ref{fig:ll}. 
Here the STs $\gamma=((xa,by),wt,count)$ and $\hat{\gamma}=((xa,cy),wt,count')$ 
are created in level $k$ (Fig. \ref{fig:llk}). Later, a left-extension to $x'$ generates the STs $\gamma'=((x'x,by),wt',count)$ and $\hat{\gamma}'=((x'x,cy),wt',count')$ in level $j<k$ (Fig. \ref{fig:llj}). Note that $\gamma$, $\hat{\gamma}$, $\gamma'$ and $\hat{\gamma}'$ are all present in $P^*$ and $P$ at $time(j)$. In a more recent level $i<j$, a decrease-only update on $v$ generates a shorter triple $\gamma_s=((xv,vy),wt_s, count_s)$ from $x$ to $y$, with $wt_s < wt$ going through $v$. 
In the same level, the triple $\gamma_s$ is also extended to $x'$ generating a triple $\gamma'' = ((x'x,vy),wt'',count_s)$ shorter than $\gamma'$ and $\hat{\gamma}'$ (Fig. \ref{fig:lli}). 
Thus at $time(i)$, $\gamma$, $\hat{\gamma}$, $\gamma'$ and $\hat{\gamma}'$ remain in $P^*$ as historical triples. 
Then, in level $h<i$, an update on $x''$ inserts the edges $(x'',x)$ and $(x'',c)$. This update generates an ST $\gamma'''=((x''c,cy))$ (shorter than $(x''x,vy)$) and also inserts $x'' \in LC_h^*(x,b)$) since $(x'',x)$ is on a shortest path from $x''$ to $b$; but it should not generate the triple of the form $(x''x,by)$ because $b$ is not on a shortest path from $x$ to $y$ at $time(h)$ (Fig. \ref{fig:llh}). 

\begin{figure}
	\centering
	\subfigure[level $k$]{
		\makebox[.20\textwidth]{
			\begin{tikzpicture}[every node/.style={circle, draw, inner sep=0pt, minimum width=5pt}]
			\node (x)[label=above:$x$] at (0,0)  {};
			\node (a)[label=left:$a$] at (0,-0.8) {};
			\node (b)[label=below left:$b$] at (0,-2.2) {};
			\node (c)[label=left:$c$] at (-1,-2.2) {};
			\node (y)[label=below:$y$] at (0,-3) {}; 
			\draw[->] (x) -- (a);
			\path[->,decoration={snake}] { (a) edge[decorate] (b)};
			\path[->,decoration={snake}] { (a) edge[decorate] (c)};
			\path[->,decoration={snake}] { (a) edge[bend left=60,decorate] (b.east)};
			\draw[->] (b) -- (y);
			\draw[->] (c) -- (y);
			\end{tikzpicture}} \label{fig:llk}} 
	\subfigure[level $j < k$]{
		\makebox[.20\textwidth]{
			\begin{tikzpicture}[every node/.style={circle, draw, inner sep=0pt, minimum width=5pt}]
			\node (x1)[label=above:$x'$] at (-0.5,0.8)  {};
			\node (x)[label=above right:$x$] at (0,0)  {};
			\node (a)[label=left:$a$] at (0,-0.8) {};
			\node (b)[label=below left:$b$] at (0,-2.2) {};
			\node (c)[label=left:$c$] at (-1,-2.2) {};
			\node (y)[label=below:$y$] at (0,-3) {}; 
			\draw[->] (x1) -- (x);
			\draw[->] (x) -- (a);
			\path[->,decoration={snake}] { (a) edge[decorate] (b)};
			\path[->,decoration={snake}] { (a) edge[decorate] (c)};
			\path[->,decoration={snake}] { (a) edge[bend left=60,decorate] (b.east)};
			\draw[->] (b) -- (y);
			\draw[->] (c) -- (y);
			\end{tikzpicture}} \label{fig:llj}} 
	\subfigure[level $i < j$]{
		\makebox[.20\textwidth]{
			\begin{tikzpicture}[every node/.style={circle, draw, inner sep=0pt, minimum width=5pt}]
			\node (x1)[label=above:$x'$] at (-0.5,0.8)  {};
			\node (x)[label=above right:$x$] at (0,0)  {};
			\node (v)[label=right:$v$] at (1,-1.5) {};
			\node (y)[label=below:$y$] at (0,-3) {}; 
			\draw[->] (x1) -- (x);
			\draw[->] (x) -- (v);
			\draw[->] (v) -- (y);
			\end{tikzpicture}} \label{fig:lli}}
	\subfigure[level $h < i$]{
		\makebox[.20\textwidth]{
			\begin{tikzpicture}[every node/.style={circle, draw, inner sep=0pt, minimum width=5pt}]
			\node (x2)[label=above:$x''$] at (0.5,0.8)  {};
			\node (x)[label=above left:$x$] at (0,0)  {};
			\node (a)[label=left:$a$] at (0,-0.8) {};
			\node (v)[label=right:$v$] at (1,-1.5) {};
			\node (b)[label=below left:$b$] at (0,-2.2) {};
			\node (c)[label=left:$c$] at (-1,-2.2) {};
			\node (y)[label=below:$y$] at (0,-3) {};
			\draw[->] (x2) -- (x);
			\path[->] { (x2) edge[bend right=45] (c)}; 
			\draw[->] (x) -- (a);
			\path[->,decoration={snake}] { (a) edge[decorate] (b)};
			\path[->,decoration={snake}] { (a) edge[bend left=60,decorate] (b.east)};
			\draw[->] (x) -- (v);
			\draw[->] (c) -- (y);
			\end{tikzpicture}} \label{fig:llh}} 
	\caption{PEP instance (only centered STs are kept in each level) -- all edge weights are unitary}
	\label{fig:ll}
\end{figure} 

When an increase-only update removes $v$ and the shortest triple ${\gamma}_s$ from $x$ to $y$, the algorithm needs to restore all  historical triples with shortest weights from $x$ to $y$. When $\gamma$ and $\hat{\gamma}$ are restored, we need to perform suitable left extensions as follows. An extension to $x''$ is needed only for $\gamma$: in fact $\hat{\gamma}$ should not be extended to $x''$ because the $\ell$-tuple of the newly generated tuple is not an ST in the graph.  On the other hand, no extension to $x'$ is needed since both $\gamma'$ and 
$\hat{\gamma}'$ will be restored (from HT to ST).  Our algorithm needs to distinguish all of these cases correctly and efficiently.

In order to maintain both correctness and efficiency in this scenario for APASP, we use two new data structures (described in Section \ref{sec:newfeat}): (1) the
historical distance matrix $\DMs$ that allow us to efficiently determine the most recent level graph in which an HT was an ST, 
and (2) the HE sets $\LN$ and $\RN$ that allow us to efficiently identify exactly those new extensions that need to be performed.
The methodology of these data structures is fully discussed in the description of \ffullyfixup (Appendix \ref{ffixup-desc}).
Note that, the PEP doesn't arise in \Tho because of the unique SP assumption: in fact when only a single SP of a given length is present in the graph for each pair of nodes, the algorithm can check for all the $O(n^2)$ paths maintained in each level and decide which one should be extended. Given the presence of multiple SPs in our setting, we cannot afford to look at each tuple in the \system.

\subsection{The \FFD Algorithm} \label{sec:ffdalgo}
We now give an overall summary of the \FFD algorithm. A detailed description of its sub-procedures can be found in Appendix \ref{sec:algo}.

Algorithm \ffullydynamic is similar to our basic fully dynamic algorithm and its overall description is given in Algorithm \ref{algo:ffully-main}.
The main difference is the introduction of the notion of levels as described in Section \ref{subsec:fully-impl}, and their activation/deactivation as in \Thoe.
At the beginning of the $t$-th update (with $k=level(t)$), we first activate the new level $k$ and we perform \ffullyupdate (Alg. \ref{algo:ffupdate}) on the updated node $v$. As in our basic algorithm and shown in Table II - Part C, the set $\NODES$ consists of all vertices centered at these lower deactivated levels. All vertices in $\NODES$  are re-centered at level $k$ during the $t$-th update (Alg. \ref{algo:ffully-main}, Step \ref{fully-main:new2}), 
and `dummy' update operations are performed on each of these vertices. Note that 
$\NODES$ contains the $2^{k}-1$ most recently updated vertices in reverse order of update time (from the most recent to the oldest).
Procedure \ffullyupdate is invoked with the parameter $k$ representing the newly activated level.
Finally, all levels $j < k $  are deactivated (Alg. \ref{algo:ffully-main}, Step \ref{fully-main:new1}).

\begin{algorithm}
	\SetAlgoLined
	\SetAlgoNoEnd
	activate the new level $k$\;
	\ffullyupdateend($v,\weight',k$)\;
	generate the set $\NODES$\; 
	\For {each $u \in \NODES$ in decreasing order of update time}{
		\ffullyupdateend$(u,\weight',k)$ \{dummy updates\}\; \label{fully-main:new2}
	}
	deactivate all levels lower than $k$\; \label{fully-main:new1}
	\caption{\ffullydynamicend($G, v, \weight', k$)}
	\label{algo:ffully-main}
\end{algorithm}

\noindent
\paragraph{\bf \ffullyupdateend}
As in \DIe, \NPR and our basic algorithm, the update of a node occurs in a sequence of two steps: a \emph{cleanup phase} and a \emph{fixup phase}. 
Both \ffullycleanup and \ffullyfixup are more involved algorithms than their counterparts in our basic result, and the resulting algorithm will save a $O(\log n)$ factor over the amortized cost. Note that \ffullyfixup is called with the additional argument $k$ which indicates the current active level.

\begin{algorithm}
	\SetAlgoLined
	\SetAlgoNoEnd
	\ffullycleanupend$(v)$\;
	\ffullyfixupend$(v,\weight',k)$\;
	\caption{\ffullyupdateend$(v,\weight',k)$}
	\label{algo:ffupdate}
\end{algorithm}

We now highlight some new components of our cleanup and fixup algorithms that will be helpful to prove the lemmas in the next section. 

\paragraph{New Components in \ffullycleanup relative to \fullycleanupend} 
\begin{enumerate} [label=FF--C.{\arabic*}]
	\item \label{fact:carray} For each triple $\gamma$ processed by \ffullycleanupend, the array $C_\gamma$ can be updated with the new centers in time $O(z')$, where $z'$ is the number of active levels that contains the updated vertex $v$.
	\item \label{fact:crn} Each triple requires constant time to be linked and unlinked from structures $\DMs$, $\RN$ and $\LN$. 
\end{enumerate}
\paragraph{New Components in \ffullyfixup relative to \fullyfixupend} 
\begin{enumerate} [label=FF--F.{\arabic*}]
	\item \label{fact:farray} For each triple $\gamma$ processed by \ffullyfixupend, the array $C_\gamma$ can be updated with the new centers in time $O(z')$, where $z'$ is the number of active levels that contain the triple~$\gamma$.
	\item \label{fact:frn} Each triple requires constant time to be linked and unlinked from structures $\DMs$, $\RN$ and $\LN$.
	\item \label{fact:fproc} \ffullyfixup only processes a triple $\gamma$ with $\beta = 1$ if it has a centered extensions in some active level younger than then the level in which $\gamma$ was shortest for the last time.  
\end{enumerate}

\noindent
We will establish in Section \ref{sec:proof} that \ffullydynamic
correctly updates the data structures with the amortized bound given in Theorem \ref{th:main}.
 
\section{Complexity of \FFD Algorithm} \label{sec:proof}
In this section we will prove the complexity bounds of our \FFD algorithm. The correctness is addressed in Appendix \ref{sec:proofcorr}.
The complexity analysis is similar to that for our basic algorithm. We highlight the following new elements:

\begin{enumerate}
	\item Every triple created by \ffullyfixup is an LST in the level graph (PDG) in which is centered (see Observation \ref{centerlht} in the Appendix), 
	and by the \decremental properties of level graphs, it will continue to be an LST in that level graph until it is removed. In contrast, our basic algorithm can create LHTs by combining HTs not centered in any PDG. This results in an additional $\Theta(\log n)$ factor in the amortized bound there. 
	
	\item  We can bound the number of LHTs that contain a given vertex $u$ as $O(z' \cdot \vstar^2)$, where $z'$ is the number of active level graphs that contain vertex $u$ and tuples passing through $u$ (by Corollary \ref{lemma:count1}). Given our level \systemend, $z'$ is clearly $O(\log n)$. In our basic result, this bound is $(z + z'^2)$ where $z$ is the number of active PDGs, and $z'$ is the number of PDGs that contain $v$.
		
	\item We can show that the number of accesses to $\RN$ and $\LN$, outside of the newly created tuples, is worst-case $O(n \cdot \nu^*)$ per call to \ffullyupdateend. The overhead given by the level data structures is $O(\log n)$ for each access (see Lemma \ref{lem:amor1}). These structures are not used in our basic algorithm.
\end{enumerate}

\noindent
All the algorithms referenced in the following lemmas are described in Appendix \ref{sec:algo}.

\begin{lemma} \label{lemma:count}
	Let $G$ be a graph after a sequence of calls to \ffullyupdateend. Let $z$ be the number of active level graphs (PDGs), and let $z' \leq z$ be the number of level graphs that contain a given vertex $v$. Suppose that every HT in the \system
	is an ST in some level graphs, and every LHT is an LST in some level graph. If $n$ and $m$ bound the number of vertices and edges,
	respectively, in any of these graphs, and if
	$\nu^*$ bounds the maximum number of distinct edges that lie on shortest paths through any given vertex in any of the these graphs, then:
	
	\begin{enumerate}
		\item The number of LHTs  in $G$'s \system is at most $O(z \cdot m \cdot \nu^*)$.
		
		\item The number of LHTs that contain a vertex $v$ in $G$ is $O( z' \cdot {\nu^*}^2)$.
	\end{enumerate}
\end{lemma}

\begin{proof}
	For part 1, we bound the number of LHTs $(xa,by)$ (across all weights) that can exist in
	$G$. The edge $(x,a)$ can be chosen in $m$ ways, and once we fix $(x,a)$, the 
	$r$-tuple $(a,by)$ must be an ST in one of the $\levelgraph_j$. Since $(b,y)$ must lie on  
	a shortest path through
	$a$ centered in a graph $\levelgraph_i$, that contains the $r$-tuple $(a,by)$ of shortest weight in $\levelgraph_i$, the number of different choices
	for $(b,y)$ that will then uniquely determine the tuple $(xa, by)$, together with its weight, is
	$z \cdot \nu^*$. Hence the number of LHTs in $G$'s \system is $O(z \cdot m \cdot \nu^*)$.
	
	For part 2, the number of LHTs that contain $v$ as an internal vertex is simply the number of
	LSTs across the $z'$ graphs that contains $v$, and this is  $O(z' \cdot \vstar^2)$. We now
	bound the number of LHTs $(va, by)$. There are $n-1$ choices for the edge $(v,a)$ and
	$z' \cdot \nu^*$ choices for the $r$-tuple $(a, by)$, hence the total number of such tuples 
	is $O(z' \cdot n \cdot \nu^*)$. The same bound holds for LHTs of the form $(xa, bv)$. Since
	$\nu^* = \Omega (n)$, the result in part 2 follows.
\end{proof}
 
  \begin{corollary} \label{lemma:count1}
  	At a given time step, let $B$ be the maximum number of tuples in the \system containing a path through
  	a given vertex in a given level graph.
	Then, $B=O(\vstar^2)$.
  \end{corollary}
  
 \begin{lemma}
 	\label{lem:amor}
 	(a) - The cost for an \ffullycleanup call on a node $v$ when $z'$ active levels contain triples through $v$  is $O(z' \cdot \vstar^2 \cdot \log n)$.\\
 	(b) - The cost for a real \ffullycleanup call is $O(\vstar^2 \cdot \log^2 n)$\\
 	(c) - The cost for a dummy \ffullycleanup call is $O(\vstar^2 \cdot \log n)$.
 \end{lemma} 
 \begin{proof}
 	(a) - Since the number of \LHTe s containing the updated vertex $v$, processed by \ffullycleanupend, is bounded by $B$ at each level (by Corollary \ref{lemma:count1}), the total cost is $O(z' \cdot B \log n)$ where $z'$ is the number of active levels that contain triples through $v$.
	The worst-case cost for update an array $C_\gamma$ within an \ffullycleanup phase is $O(z')$ (point [\ref{fact:carray}]).
 	Note that a triple can be processed by a constant number of priority queues among $z'$ different active levels. Moreover, for the structures $\DMs$, $\RN$ and $\LN$ each triple spends a constant time to be unlinked and eventually to update the structures (point [\ref{fact:crn}]). Since, priority queue operations have a $O(\log n)$ cost and the number of triples examined is bounded by $O(z' \cdot \vstar^2)$, the complexity of \ffullycleanup that operates on $z'$ active levels requires at most $O(z' \cdot \vstar^2 \cdot \log n)$.\\
 	(b) - Since the active levels are bounded by $z \leq \log 2n$, the cost for a real \ffullycleanup call is $O(\vstar^2 \cdot \log^2 n)$ (by part (a)).\\
 	(c) - For a dummy cleanup on a vertex $w$, \ffullycleanup only needs to clean the local data structures in level $center(w)$, where $w$ is centered, and in the current level graph. In fact, let $t$ be the current update step; in the dummy cleanup phase, we start with the node $u$  that was updated at time $t-1$ (the most recent update before the current one). The node $u$ received an update in the previous phase, thus it disappeared from all the levels older than $level(t-1)$ and, with it, all the LSTs containing $u$ in these levels. Hence, all the triples containing $u$ in the \system must be LSTs in $level(t-1)$. We have at most $B$ of them and \ffullycleanup spends $O(B \cdot \log n)$ (considering the access to the data structures) to remove them.   
 	Then, the dummy update reinserts $u$ only in the current graph. The next phase moves on the node $u'$ updated at time $t-2$. Again, all the tuples containing $u'$ must be LSTs in $level(t-2)$ and eventually the current graph if they were inserted because of the previous dummy update on $u$.
 	
 	Suppose in fact that there is a tuple $\gamma$ that contains $u'$ in another level (except the current graph). The tuple $\gamma$ cannot be in a level older than $level(t-2)$ because when $u'$ was updated at time $(t-2)$, the cleanup algorithm removed all the tuples containing $u'$ from any level older than $t-2$. Moreover, a tuple containing $u'$ present in a level younger than $level(t-2)$ could appear if and only if it was generated by any update more recent of $t-2$ (in this case only the dummy update on $u$ performed in the current graph). Thus a contradiction.
 	
 	This argument can be recursively applied to  every other node in the sequence: in fact for the node $u''$ updated at time $(t-i)$ all the nodes updated in the interval $[t-i+1,t-1]$ will be already processed by \ffullycleanupend, leaving all the tuples containing $u''$ only in $level(t-i)$ and $t$. It follows that, for a dummy update, $z' = 2$.
 	Thus the cost for a dummy \ffullycleanup call is $O(\vstar^2 \cdot \log n)$ (by part (a)).
 \end{proof}
 
  \begin{lemma}
  	\label{lem:amor1d}
  	The cost for a dummy \ffullyfixup call on a node $v$ is $O(\vstar^2 \cdot \log n)$. 
  \end{lemma} 
  \begin{proof}
  	Consider a dummy \ffullyfixup applied to a vertex $v$ in $\NODES$. We only need
  	to bound the cost for accessing the entries in the $P^*(x,y)$ and the cost of re-adding LSTs containing $v$, previously removed by the dummy \ffullycleanup but still in the current graph after the dummy update.
  	In fact the vertex $v$ is removed by an earlier dummy \ffullycleanupend, and while this removes all the HPs containing the vertex $v$, it does not change any LST centered in any $\levelgraph_j$ that does not contain $v$. Hence these other
  	LSTs will be present in the \system with unchanged weight and count, when dummy \ffullyfixup is applied to $v$.
  	Since for any pair $x,y$, the SP distance will not change after the dummy update, the dummy \ffullyfixup will only insert in the set $S$ triples containing the node $v$ for additional extension.
  	Hence, only the LSTs containing $v$ in the current $level(t)$ graph will be processed and added to the \systemend, and there are at most $B$ of them (by Corollary \ref{lemma:count1}). Thus a dummy \ffullyfixup for any $v$ needs to access $P^*$ for each pair of nodes, and reinsert at most $B$ tuples (containing $v$) in the current graph. Hence the overall complexity for a dummy \ffullyfixup is  $O((n^2 + B) \cdot \log n) = O(\vstar^2 \cdot \log n)$.
  \end{proof}

  We now address the complexity of a real \ffullyfixup call. We first define the concept of a \emph{triple pair} that will be used in lemma \ref{lem:amor1}  to establish the bound for a real \ffullyfixup call.
  Finally, we complete our analysis by presenting a proof of Theorem~\ref{th:main}. 
 
  \begin{definition}
  	If $C_{\gamma}[i] \geq 1$ then $(\gamma, i)$ is a \emph{triple pair} in the \systemend.
  	If $(\gamma,i)$ is not a triple pair in the \system at the start of step $t$ but is a triple pair after the update at time step $t$, then $(\gamma, i)$ is  a \emph{newly created triple pair}  at time step $t$.
  	\end{definition}
  
   \begin{lemma} \label{lemma:count2}
   	At a given time step,
   	let $D$ be
   	the number of triple pairs in the level \systemend.
   	Then,
   	\begin{enumerate}
   		\item The value of $D$ is at most $O(m \cdot \nu^* \cdot \log n)$.
   		\item  The space used is $O(m \cdot \nu^* \cdot \log n)$.
   	\end{enumerate}
   \end{lemma}
   
   \begin{proof}
   	
   	1. Every $C_{\gamma}[i] \geq 1$ represents a distinct LST in $\Gamma_i$, hence the result follows since the number of levels is $O(\log n)$ and the
   	number of LSTs in a graph is $O(\vstar \cdot m^*)$.
   	
   	\noindent
   	2. Since every triple is of size $O(1)$, the memory used by our \FFD algorithm is dominated by $D$, and result follows from 1.
   \end{proof}	 
 
 \begin{lemma}  
 	\label{lem:amor1}
 	The cost for a real \ffullyfixup call is $O  (\vstar^2 \cdot \log^2 n + X  \cdot \log n)$ , where $X$ is the number of newly created triple pairs after the update step.
 \end{lemma}
 \begin{proof}
  Recall that, for the structures $\DMs$, $\RN$ and $\LN$ each triple spends a constant time to be unlinked and eventually to update the structures (fact[\ref{fact:frn}]); moreover, updating an array $C_\gamma$ with the new centers requires only additional $O(z'\leq 2\log n)$ time (fact[\ref{fact:farray}]).
  Thus, a triple is accessed only a constant number of time during \ffullyfixup with a total cost of $O(\log n)$, and it suffices to establish that the
  number of existing triples accessed during the call is $O(\vstar^2 \cdot \log n)$.
  
   There are only $O(n^2)$ accesses to triples to initialize \ffullyfixup since 
 	$O(n^2)$ entries in the global $P^*(x,y)$ structures are accessed to populate $H_f$ (a shortest triple for each pair $(x,y)$). This takes $O(n^2 \cdot \log n)$ time after considering
	the $O(\log n)$ cost per data structure operation. We now address the accesses made in the main loop. We will distinguish two cases and they will be charged to $X$ as follows.
	
	\vhalf
	\noindent
	\underline{1: $\beta(\gamma)=0$} -- Any triple $\gamma$ that is accessed with $\beta (\gamma) = 0$ is an
	 LST at some level $i$ where it is not identified as an ST in $\Gamma_i$. 
	 In this case, if the distance for the endpoints of $\gamma$ did not change, $\gamma$ is added as an ST in level $i$, and will never be removed as an ST for level $i$ until it is removed from the tuple system
	 (due to the fact that $\Gamma_i$ is a purely \decremental graph). Since $\gamma$ with $\beta (\gamma) =0$ is a newly added triple to level $i$, then the pair $(\gamma,i)$ is a newly created triple pair at step $t$. Hence, we can charge $(\gamma,i)$ to $X$ in this call of \ffullyfixupend.
	
	\vhalf
	\noindent
	\underline{2: $\beta(\gamma)=1$} -- We now consider triples accessed that have $\beta = 1$. This is the most nontrivial part of our analysis since even though any
	 such triple $\gamma$ must exist with the same count in every level in both $P$ and $P^*$, we may still need to form some extensions
	 since the triple may have been an HT when extension vertices were updated, and hence these extension may not have been
	 performed. Here is where the $LN$ and $RN$ sets are accessed, and we now analyze the cost of these accesses. (The
	 correctness of the associated steps is analyzed in the Appendix.)
	 
	 Let $j$ be the most recent level in which $\gamma$ was an ST in $G$ and assume we are dealing with left extensions (right extensions are symmetrical). Now that $\gamma$ is restored, the only case in which we need to process it is when there exists a left extension for the $\ell$-tuple of $\gamma$ to a node $x'$ centered in a level $i$ more recent than $j$. In fact, the LST generated by this extension will appear for the first time centered in level $i$, hence the pair $(\gamma,i)$ is a newly created triple pair at step $t$ and we can charge its creation to $X$.
	 We now show how our HE sets efficiently handle this case.
     \ffullyfixup only processes a restored triple $\gamma$ with $\beta(\gamma)=1$ when it has at least one centered extension in some active level younger than the level in which $\gamma$ was shortest for the last time (fact[\ref{fact:fproc}]). We can bound the total computation for these steps as follows: for a given $x$, 
	$RN(x,y,wt)$ contains a node $b$ for every incoming edge to $y$ in one of the SSSP dags (historical and shortest) rooted at $x$. 
	Since we can extend in at most $O(\log n)$ active levels during any update and the size of a single dag is at most $\nu^*$, these steps take time $O(\nu^*\cdot n\log n)$ throughout the entire update computation. 
 	 \end{proof}	 

We can now establish the proof of our main theorem. 
	 
\noindent
{\bf Proof of Theorem~\ref{th:main}.}
Consider a sequence $\Sigma$ of $r= \Omega (n)$ calls to algorithm \FFDe. Recall that the data structure is reconstructed after every $2n$ steps,
so we can assume $r=\Theta (n)$. These $r$ calls to \FFD make $r$ real calls to \ffullyupdateend, and also make additional dummy updates.
As in our basic algorithm,
 	across the $r$ real updates in $\Sigma$,
 	the algorithm performs $O(r \log n)$ dummy updates.
 	This is because $r/2^k$ real updates are performed at level $k$ during the entire computation, and each such update is
 	accompanied by $2^k-1$ dummy updates. So, across all real updates there are $O(r)$
 	dummy updates per level, adding up to $O(r \log n)$ in total, across the $O(\log n)$ levels.
	
	When \ffullycleanup is called on a vertex $v$ for a dummy update, $z'=2$ since $v$ can be present 
	 only in the most recent current level and the level at which it is centered. (This is because every vertex that was centered
	 at a more recent level than $v$ has already been subjected to a dummy update, and hence all of these vertices are now
	 centered in the current level.)
 	Thus, by Lemma \ref{lem:amor}, each \ffullycleanup for a dummy update has cost $O(B \cdot \log n)$. By Lemma \ref{lem:amor1d}, a call to  \ffullyfixup for a dummy update
 	has cost $O(\vstar^2 \cdot \log n)$. Thus the total cost is $O((\vstar^2 \cdot \log n)\cdot r\log n)$ across all dummy updates.
	Also, the number of tuples accessed by all of the dummy update calls to \ffullycleanupend, and hence the number of tuples
	removed by all dummy updates, is $O(r \cdot \vstar^2 \cdot \log n)$.
	
	 For the real calls to \ffullyfixupend, let $X_i$ be the   number of newly added triple pairs in the $i$th real call to \ffullyfixupend. Then by
	Lemma~\ref{lem:amor1}, the cost of this $i$th call is $O(\vstar^2 \cdot \log^2 n + X_i \cdot  \log n)$. Let $X= \sum_{i=1}^r X_i$. Hence the total
	cost for the $r$ real calls to  \ffullyfixup is $O(r \cdot \vstar^2 \cdot \log^2 n + X \cdot \log n)$. We now bound $X$ as follows: $X$ is no more than
the maximum number of triples that can remain in the system after $\Sigma$ is executed, plus the number of tuples $Y$ removed
	from the \systemend. 	Tuples are removed only in calls to \ffullycleanupend. The total number removed by $r \log n$ dummy calls is 
	$O(r \cdot  \log n \cdot \vstar^2)$ (by Lemma \ref{lem:amor}). The total number removed by the $r$ real calls is $O(r \cdot \vstar^2 \cdot \log n)$ (by Lemma \ref{lem:amor}).
	Hence $Y= O(r \cdot \vstar^2 \cdot \log n)$. Clearly the maximum number of triples in the \system is no more than $D$, which
	counts the number of triple pairs, and we have 
	$D= O(m \cdot \vstar \cdot \log n) = O(n^2 \cdot \vstar \cdot \log n)$ (by Lemma \ref{lemma:count2}). Since  $r=\Theta (n)$, we have
	$D=O(r \cdot n \cdot \vstar \cdot \log n)$, and this is dominated by $Y$ since $\vstar = \Omega (n)$.
	Hence the cost of the $r$ calls to \FFD is $O(r \cdot \vstar^2 \cdot \log^2 n)$ (after factoring in the $O(\log n)$ cost per tuple access), and 
		hence the amortized cost of  each call to \ffullyupdate is $O(\vstar^2 \cdot \log^2 n)$.

\section{Conclusion}\label{sec:extn}

We have presented efficient fully dynamic algorithms for APASP.
Our algorithms stores a superset of the STs and LSTs in the current graph in two priority queues $P^*(x,y)$ and
$P(x,y)$ for each vertex pair $x,y$. To generate all shortest paths from all other vertices to $x$, we construct the shortest
path in-dag rooted at $x$ in $O(\nu^*)$ time. Then, the shortest paths ending in $x$ can be enumerated in a traversal
of this dag, starting from $x$. This query time is output-optimal, and takes time proportional to the number of
edges on these paths.

Our basic algorithm, when specialized to fully dynamic APSP (i.e., for unique SPs), is a variant of the 
\DI method \cite{DI04},
and it uses a different `dummy update sequence' from the one in~\DIe, with different properties. 
Our dummy update
sequence is inspired by the updates performed on `level graphs' in~\cite{Thorup04}, though our algorithm
is considerably simpler (but is also slower by a logarithmic factor). As noted in Section~\ref{sec:pdg}, our
analysis is tailored to the dummy update sequence we use, and a different analysis would be needed if
the \DI update sequence is to be used.

Our improved fully dynamic algorithm for APASP runs in amortized $O({ \nu^*}^2 \cdot \log^2n)$ time, which is a log factor faster than the basic result. This algorithm is considerably more involved and adapts the \Tho method~\cite{Thorup04} to the APASP problem by maintaining the PDGs explicitly. An additional complexity in this 
fully dynamic APASP algorithm
(beyond that present in~\Thoe) is the need to maintain several different time-stamps for each tuple in the
data structures, since the component paths in a tuple may have been updated at different time steps.

It would be interesting to investigate if one could
stay with our method here of only performing dummy updates and not maintaining the PDGs explicitly, but still
obtain the improved bound in our \FFD algorithm (and in \Tho for unique shortest paths). This would give a  reasonably simple fully dynamic APASP algorithm
 and it would lead to a 
 simpler  fully dynamic
 APSP algorithm with $O(n^2 \cdot \log^2n)$
 amortized time than \Tho algorithm.
 One appealing approach is to only form LSTs in the current
 graph during the fixup phase (instead of forming all LHTs). It is not difficult to see that this would reduce the cost of the cleanup phase by a
 logarithmic factor. However, if an HT $\tau$ becomes an ST at a later step, we would then 
 have no guarantee that all of its extensions have been generated. Hence, if this approach is
 to succeed, a modified algorithm is needed.\\
 
 \noindent
\textbf{Acknowledgement.} We thank Russell F. McQueeney for working on implementing the \fullydynamic algorithm described in this paper. His implementation work provided useful insights that led to this improved and more readable write-up and pseudocode.

\newpage
\appendix
\section*{APPENDIX}
\section{An example for the necessity of \texorpdfstring{$\CH$}{CH}} \label{append:example}
Here, we present an example (inspired by \cite{DI04}) which shows why the $\CH$ data structure is needed for the complexity and correctness of the algorithm. The graph $G=(V,E)$ in Figure \ref{fig:prof1} is a directed graph (from left to right), where the nodes in layers $L_1$ and $L_2$ form an oriented complete bipartite graph, and similarly the nodes in layers $L_3$ and $L_4$.
When referring to a generic layer $L_i$ we consider any possible node in $L_i$.
All the edges have weight 1 unless specified. The height of the structure is $\Theta(n)$ (where $n=|V|$) as each level requires only a constant number of nodes. 
Initially, the node $c$ (and its incident edges) and the edge $(u,v)$ are not part of the graph. A `bad' update sequence, when $\CH$ is not implemented, is the following:
\begin{enumerate}
    \item add node $c$ to $G$ and its incident edges with the corresponding weights indicated in the figure;
    \item add edge $(u,v)=1$ to $G$;
    \item remove $c$ from $G$. Note that the paths of the form $(L_2a_i,b_iL_3)$ (for $1 \leq i \leq h$) are also removed from $P^*$;
    \item remove $(u,v)$ from $G$. Note that, the graph $G$ is back in its initial configuration.
\end{enumerate}

In Step 1, additional LSPs of the form $(L_1L_2,L_3L_4)$ are added to the graph via the new node $c$. In Step 2, all the shortest paths going through the $(L_2,L_3)$ bridges are replaced with shortest paths going through the new edge $(u,v)$. In Step3, $c$ is removed: this deletes all the paths in the $(L_1L_2,L_3L_4)$ historical tuples. In Step 4, $(u,v)$ is removed and $\Theta(n^3)$ tuples need to be recreated making the process computationally too expensive for a sequence of 4 updates.
Thus, if we did not retain THTs in the graph $G$ this example shows that we will have amortized cost $\Omega(n^3)$, since all LSTs are being recreated by \fullyfixup during update (step 4 above). On the other hand, if we retain  THTs we will need to maintain their correct counts between updates. This is efficiently achieved using our $\CH$ data structure presented in Section \ref{sec:ch}.

   \begin{figure}[!ht]
    \begin{center}
    \makebox[\textwidth]{\includegraphics[width=0.75\paperwidth]{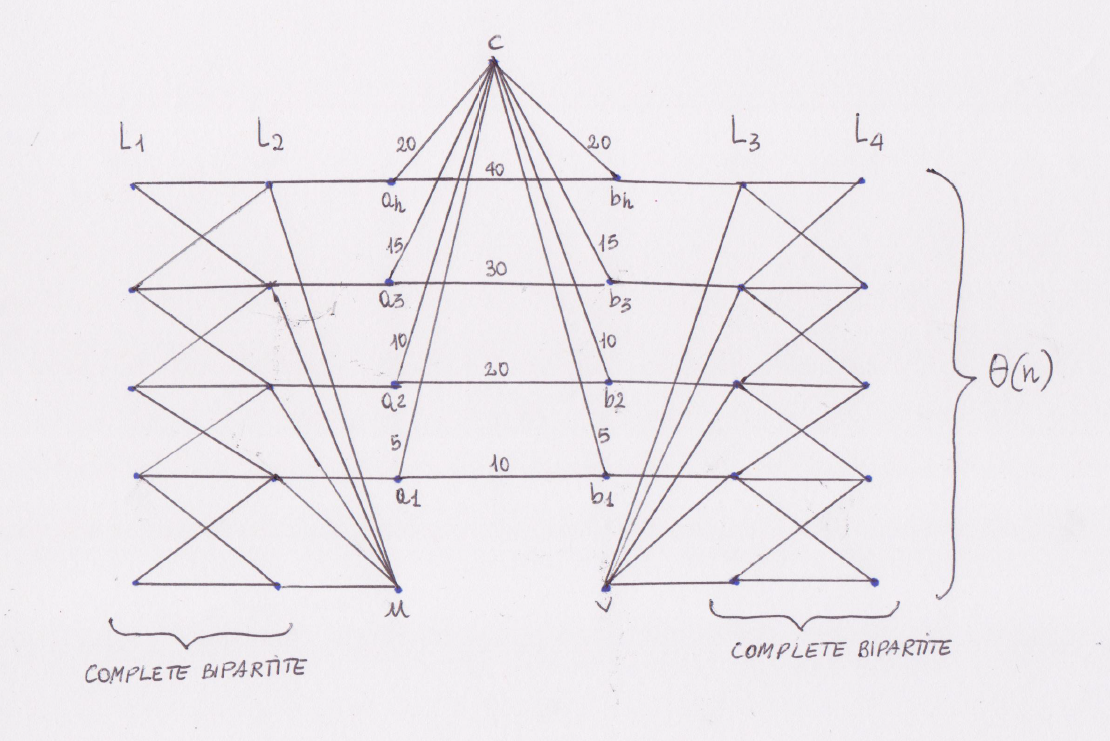}}
    \caption{\label{fig:prof1}An example for the necessity of $\CH$. The graph is oriented from left to right. Edges have unitary weight unless specified.}
    \end{center}
    \end{figure}
    
\section{Correctness of \fullydynamicend} \label{append:corr}
In this section, we establish the correctness of Algorithm \ref{algo:fcleanup} and \ref{algo:ffixup}.

We assume that all the local structures are correct before the update, and we will show the correctness of them after the update.
\fullycleanup works with a heap $H_c$ of triples. The algorithm maintains the loop invariant that any triple inserted into $H_c$ has already been deleted from the \systemend: only its extensions remain to be processed and deleted.
We prove the following lemma:

\begin{lemma} \label{lem:fcleanup-corr2}
	After Algorithm~\ref{algo:fcleanup} (\fullycleanupend)
	is executed, for any $(x, y) \in V$, the STs
	in $P^{*}(x,y)$ (LSTs in $P(x,y)$) represent all the SPs (LSPs) from $x$ to $y$ in $G$ that do not pass through $v$. Moreover, every THT (TLHT) present in the \system represents a collection of HPs (LHPs) in $G$ that contains only paths that do not pass through $v$.
	Finally, the sets $L, L^{*}, R, R^{*}$ are correctly maintained.
\end{lemma}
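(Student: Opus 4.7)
My plan is to prove Lemma~\ref{lem:fcleanup-corr2} by induction on the order in which the priority queue $H_c$ extracts triples (i.e., in non-decreasing order of weight, with ties broken by endpoints). The driving loop invariant will be: \emph{at the moment a triple $\gamma$ is extracted from $H_c$ with key $[wt,x,y]$, the book-keeping in $P, P^{*}, L, L^{*}, R, R^{*}$ already reflects the removal, from the \systemend{}, of every ST/LST through $v$ with weight strictly less than $wt$; any HT/LHT encountered with weight $<wt$ that contained $v$ has been deleted entirely; and $\gamma$ itself encodes the aggregated count of paths through $v$ that still need to be propagated to extensions of $(x,y)$}. The base case is the initial insertion of $((vv,vv),0,1)$ in step~\ref{cleanup:init}, which trivially represents the single ``path'' at $v$.

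For the inductive step, consider extracting the set $S$ of all triples with min-key $[wt,x,y]$ from $H_c$. Using the unique-SP argument of~\cite{NPR14b} adapted to the \systemend, I would first note that the aggregated count $fcount'$ computed in step~\ref{process-cleanup:left-extend-fcount} is exactly the number of paths through $v$ whose restriction to the prefix $x{\to}b$ is the tuple being left-extended. For each candidate extension $(x'x,by)$ of weight $wt' = wt + \weight(x',x)$, the check in step~\ref{process-cleanup:new1} determines whether the extended tuple is currently an LST in the updated graph: since we use the \emph{pre-cleanup} distances, this is valid precisely because all LSTs of smaller weight have already been processed and any that persist as LSTs still witness $d(x',b)$. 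If the check succeeds, the triple is an LST and we decrement its count by $fcount'$ (exactly removing the paths through $v$); if the check fails, the triple is a TLHT whose deletion by steps~\ref{process-cleanup:new2}--\ref{process-cleanup:new2c2} is justified by the algorithm's philosophy (reiterated in the running text) that a TLHT is obsolete information whose removal never invalidates any surviving ST or LST. The analogous dichotomy in steps~\ref{process-cleanup:new3}--\ref{process-cleanup:new4} handles STs versus THTs in $P^{*}$. The updates to $L, R, L^{*}, R^{*}$ in steps~\ref{process-cleanup:left-extend-removeL}, \ref{process-cleanup:new2c2}, \ref{process-cleanup:left-extend-removeLstar}, and~\ref{process-cleanup:left-extend-removeRstar} are triggered precisely when the corresponding $P$ or $P^{*}$ set becomes empty for that pair, so those sets remain in sync. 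The symmetric right extensions of step~\ref{cleanup:rextend} preserve the invariant in the mirrored way, and the \MTend{} set ensures that an extension is not doubly processed when both endpoints are touched.

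Once $H_c$ empties, every ST/LST that passed through $v$ has been discovered by the BFS-like traversal from $v$ (by correctness of the extension structure before the update), and its count in the \system{} has been decremented by exactly the number of such paths; the remaining count therefore represents only paths avoiding $v$. Every HT/LHT whose first or last edge is incident to $v$, or whose internal vertex set contains $v$, is encountered along one of the extension chains starting at $(vv,vv)$ and is either decremented (if still shortest) or entirely deleted (otherwise). Consequently any THT or TLHT remaining in the \system{} represents paths that avoid $v$, as required by the second conclusion. Combined with the update rules for $L, L^{*}, R, R^{*}$, the three assertions of the lemma follow.

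The main obstacle I anticipate is justifying the complete removal of TLHTs/THTs in steps~\ref{process-cleanup:new2} and~\ref{process-cleanup:new4}, because a triple aggregates paths with a common first edge, last edge, and weight — some of which may not pass through $v$. In the pure decremental setting of~\cite{NPR14b} this issue does not arise, since no obsolete historical information ever accumulates. Here I would argue that when a triple is recognized as a TLHT/THT via the pre-cleanup shortest-distance test (steps~\ref{process-cleanup:new1}, \ref{process-cleanup:new3}), it cannot be an LST/ST of the updated graph, so wiping it has no bearing on the truthfulness of the post-condition about what $P^{*}(x,y)$ and $P(x,y)$ should contain; moreover, by Lemma~\ref{lem:lhts}, every surviving LHT is reconstructible from its length-decremented sub-HTs, so destroying obsolete tuples at the aggregate level cannot sever any invariant that later steps (or \fullyfixupend) rely on. Careful bookkeeping must also verify that the control bit $\cb$ and the distinction between $P$-count and $P^{*}$-count discussed in the text are preserved — which is immediate from the pseudocode since \fullycleanup does not alter $\cb$ except indirectly via deletions.
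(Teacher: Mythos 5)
Your proposal is correct and follows essentially the same route as the paper: a loop invariant on the order of extractions from $H_c$, with the dichotomy ``decrement the count of an ST/LST by the aggregated $fcount'$ of paths through $v$ versus delete an encountered THT/TLHT outright,'' the \MT set preventing double processing of left/right extensions, and the emptiness tests driving the maintenance of $L, L^{*}, R, R^{*}$. One small slip to fix in the write-up: the distance tests in steps~\ref{process-cleanup:new1} and~\ref{process-cleanup:new3} use distances in the \emph{current} graph $G$ (cleanup runs before the new weights $\weight'$ are applied in \fullyfixupend), so they classify a tuple as LST/ST versus TLHT/THT with respect to $G$, not the ``updated graph'' as you phrase it — which is exactly what the lemma's statement about $G$ requires.
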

\begin{proof}
	The lemma is established with the following loop invariant.
	
	{\noindent \bf Loop Invariant:} At the start of each iteration of the while loop in Step~\ref{cleanup:while} of Algorithm~\ref{algo:fcleanup} the following properties hold about the \system and $H_c$. Assume that the first triple to be extracted from $H_c$ and processed has min-key = $[wt, x, y]$.
	\begin{enumerate}
		\item \label{proof:item1}
		Any LHP going through the update node $v$, which is contained in a triple $\gamma$ already processed, is removed from the \systemend. 
		For any $a, b \in V$, if $G$ contains $c_{ab}$ LHPs of weight ${wt}$ of the form $(xa, by)$
		passing through $v$,
		then $H_c$ contains a triple $\gamma = ((xa, by), { wt}, c_{ab})$ with key $[wt,x,y]$ already processed: the $c_{ab}$ LHPs through $v$ are cleaned from the system. 
		\item \label{proof:claim2}
		For each triple $\gamma=((xa,by),wt, count)$ already processed by the algorithm, $\gamma$ represents the exact set of LHPs of the form $x \rightarrow a \rightsquigarrow b \rightarrow y$ of weight $wt$ which avoid $v$. Moreover if $count=0$ the triple $\gamma$ is correctly removed from all the data structures in the \system associated to it. Finally, $\gamma$ has been placed in $H_c$ for future extensions.
		
		Let  $[\hat{wt},\hat{x},\hat{y}]$ be the last key extracted from $H_c$ and processed before $[wt,x,y]$. For any key $[wt_1, x_1, y_1] \leq [\hat{wt},\hat{x},\hat{y}]$, let 
		$G$ contain ${ c}  > 0 $ number of LHPs of weight  ${ wt_1}$ of the
		form $(x_1 \times, b_1y_1)$. Further, let ${ c_v}$ (resp. ${ c_{\bar v}}$) denote the number of such LHPs
		that pass through $v$ (resp. do not pass through $v$).
		Here ${ c_v + c_{\bar v} = c}$. For every extension $(x',wt') \in L(x_1,b_1y_1)$, let let $wt' = wt_1 + \weight (x',x_1)$ be the weight of the extended triple $(x'x_1, b_1y_1)$. Then, (the following assertions are similar for $(y',wt') \in R(x_1a_1, y_1)$)
		\begin{enumerate}
		\item \label{proof:item2} if $c>c_{v}$ there is a triple in $P(x', y_1)$ 
		of the form $(x'x_1, b_1y_1)$ and weight $wt'$ representing $c-c_v$ LHPs. If $c=c_v$ there is no such triple in $P(x', y_1)$.
		\item \label{proof:item21}
		If a triple of the form $(x'x_1, b_1y_1)$ and weight $wt'$ is present as an HT in $P^*(x',y_1)$, then it represents the exact same number of LHPs $c-c_v$ of the corresponding triple in $P(x',y_1)$. This is exactly the number of HPs of the form $(x'x_1, b_1y_1)$ and weight $wt'$ in $G - \{v\}$.
		\item \label{proof:item3} $(x',wt') \in L(x_1, b_1y_1)$,  $(y_1,wt') \in R(x'x_1, b_1)$ and    $(x'x_1, b_1y_1) \in $ \MT
		iff ${ {c_{\bar v}} > 0}$. If the triple $(x'x_1, b_1y_1)$ is an HT, a similar statement holds for $(x',wt') \in L^*(x_1,y_1)$ iff there is a triple of weight $wt_1$ in $P^*(x_1,y_1)$, and $(y_1,wt') \in R^*(x',b_1)$  iff there is a triple of weight $wt' - \weight(b_1, y_1)$ in $P^*(x',b_1)$.
		\item \label{proof:item4} A triple corresponding to $(x' x_1, b_1y_1)$ with weight $wt'$ and counts $c_v$ is in $H_c$.
		A similar assertion holds for $y' \in R(x_1a_1, y_1)$.
		\end{enumerate}
		\item \label{proof:item5}
		Any triple $\gamma$  in $H_c$, which is longer than the last triple processed, is also in \MT if and only if it contains at least one path not passing through $v$.  
		For any key $[wt_2, x_2, y_2 ] \geq [wt, x, y]$, let
		$G$ contain $c > 0$ LHPs of weight  ${ wt_2}$ of the
		form $(x_2a_2, b_2y_2)$. 
		Further, let ${ c_v}$ (resp. ${ c_{\bar v}}$) denote the number of such LHPs
		that pass through $v$ (resp. do not pass through $v$).
		Here ${ c_v + c_{\bar v} = c}$. 
		Then the tuple $(x_2 a_2, b_2 y_2) \in$ \MTend, iff $c_{\bar v} > 0$ and a triple for
		$(x_2a_2, b_2 y_2)$ is present in $H_c$
	\end{enumerate}
	
\noindent {\bf Initialization:} We start by showing that the invariants hold before the first loop iteration.
The min-key triple in $H_c$ has key $[0, v, v]$. Invariant assertion~$\ref{proof:item1}$
holds since  we inserted into
$H_c$ the trivial triple of weight $0$ corresponding to the vertex $v$
and that is the only triple of such key. Moreover, since we do not represent trivial
paths containing the single vertex, no counts need to be decremented.
Since we assume positive edge weights, there are no LHPs
in $G$ of weight less than zero. Thus all the points of invariant assertion~$\ref{proof:claim2}$ hold trivially.
Invariant assertion~$\ref{proof:item5}$ holds since $H_c$ does not contain any triple of weight $> 0$ and we initialized \MT to empty.

\noindent {\bf Maintenance:} Assume that the invariants are true before an iteration $k$ of the loop.
We prove that the invariant assertions remains true before the next iteration $k+1$.
Let the min-key triple at the beginning of the $k$-th iteration be $[wt_k, x_k, y_k]$.
By invariant assertion~$\ref{proof:item1}$, we know that for any $a_i, b_j$, if there exists a triple $\gamma$ of the form $(x_k a_i, b_j y_k)$
of weight $wt_k$ representing $count$ paths going through $v$, then it is present in $H_c$.
Now consider the set of triples with key $[wt_k, x_k, y_k]$
which we extract in the set $S$ (Step~\ref{cleanup:extract-set}, Algorithm~\ref{algo:fcleanup}).
We consider left-extensions of triples in $S$; symmetric arguments apply for right-extensions.
Consider for a particular~$b$, the set $S_{b} \subseteq S$ of triples of the form $(x_k \times, by_k)$  
and let $fcount'$ denote the sum of the counts of the paths represented by triples in $S_b$.
Let $(x',wt') \in L(x_k, by_k)$ be a left extension; our goal is to generate the triple $(x'x_k, b y_k)$ with
count $fcount'$ and weight $wt' = wt_k+\weight(x',x_k)$.
However, we generate such triple
only if it has not been generated by a right-extension of another set of paths.
We observe that the paths
of the form $(x'x_k, by_k)$ can be generated by right extending to $y_k$ the set of
triples of the form $(x'x_k, \times b)$. Without loss of generality assume that the triples of
the form $(x'x_k, \times b)$ have a key which is greater than the key $[wt_k, x_k, y_k]$. Thus,
at the beginning of the $k$-th iteration, by invariant assertion~$\ref{proof:item5}$, we know that $(x'x_k, by_k) \notin$ \MTend.
Steps~\ref{process-cleanup:create-left}--\ref{process-cleanup:left-extend-removeP},
Algorithm~\ref{algo:fcleanup} create a triple of the form $(x'x_k, by_k)$ of weight $wt'$.
The generated triple can be an LST or a TLHT in $P$. In both cases the condition at step \ref{process-cleanup:left-extend-removeP}, Algorithm~\ref{algo:fcleanup} holds and we remove $\gamma'$ by decrementing $fcount'$ many paths from the appropriate triple in $P(x',y_k)$. 
Moreover, if the generated triple is also contained in $P^*(x',y_k)$, we check if it is an ST or a THT using step \ref{process-cleanup:new3}, Algorithm~\ref{algo:fcleanup}. In the case of an ST we normally decrement $fcount'$ paths from the appropriate triple in $P^*(x',y_k)$, otherwise we decrement the THT from $P^*(x',y_k)$ in step \ref{process-cleanup:new4},  Algorithm~\ref{algo:fcleanup} removing only the paths going through the updated node $v$ that are contained in the THT, by using the new data structure $\CH$.
This establishes invariant assertions~$\ref{proof:item2}$ and $\ref{proof:item21}$.
In addition, if there are no
LSPs in $G$ of the form $(x'x_k, by_k)$ which do not pass through $v$,
we delete $(x',wt')$ from $L(x_k, by_k)$ and delete $(y_k,wt')$ from $R(x'x_k, b)$ (Step~\ref{process-cleanup:left-extend-removeL}, Algorithm~\ref{algo:fcleanup}).
On the other hand, if there exist LSPs in $G$ of the form $(x'x_k, by_k)$,
then $x'$ (resp. $y_k$) continues to exist in $L(x_k, by_k)$ (resp. in $R(x'x_k, b)$).
Further, we add the tuple $(x'x_k, by_k)$ to \MT and observe that the corresponding triple is already present in $H_c$ (Step~\ref{process-cleanup:left-extend-markR}, Algorithm~\ref{algo:fcleanup}).
Similarly, if the generated triple $(x'x_k, by_k)$ is an HT, then we check if $P^*(x_k,y_k)$ does not contain any triple of weight $wt_k$, and similarly $P^*(x',b)$ does not contain any triple of weight $wt' - 
\weight(b,y_k)$, in order to delete $(x',wt')$ from $L^*(x_k,y_k)$ and $(y_k,wt')$ from $R^*(x',b)$.
By the loop invariant, invariant assertions~$\ref{proof:item3}$ and $\ref{proof:item4}$ were true for every key  $ < [wt_k, x_k, y_k]$
and by the above steps we ensure that these invariant assertions hold for every key = $[wt_k, x_k, y_k]$.
Thus, invariant assertions~$\ref{proof:item3}$ and $\ref{proof:item4}$ are true at the beginning of the $(k+1)$-th iteration.
Note that any triple that is generated by a left extension (or symmetrically
right extension) is inserted into $H_c$ as well as into \MTend. This establishes invariant assertion~$\ref{proof:item5}$ at the beginning of the $(k+1)$-th iteration.

Finally, to see that invariant assertion~$\ref{proof:item1}$ holds at the beginning of the $(k+1)$-th iteration, let the
min-key at the $(k+1)$-th iteration be $[wt_{k+1}, x_{k+1}, y_{k+1}]$. Observe that triples
with weight $wt_{k+1}$ starting with $x_{k+1}$ and ending in $y_{k+1}$ can be created
either by left extending or right extending the triples of smaller weight. And since for each of
iteration $\le k$, invariant assertion~$\ref{proof:claim2}$ holds for any extension, we conclude that invariant assertion~$\ref{proof:item1}$ holds at the beginning of the $(k+1)$-th iteration.
This finishes our maintenance step.

\noindent {\bf Termination:} The condition to exit the loop is $H_c = \emptyset$. Because invariant assertion $\ref{proof:item1}$ maintains in $H_c$ all the triples already processed, then $H_c = \emptyset$ implies that there are no other triples to extend in the graph $G$ that contain the updated node $v$. Moreover, because of invariant assertion $\ref{proof:item1}$, every triple containing the node $v$ inserted into $H_c = \emptyset$, has been correctly decremented from the \systemend. 
Finally, for invariant assertion $\ref{proof:item3}$, the stacks $L, L^{*}, R, R^{*}$ are correctly maintained. This completes the proof.
\end{proof}

For \fullyfixup, we first show that Algorithm \ref{algo:ffixup} computes the correct distances for all the SPs in the updated graph $G'$ (Lemma \ref{proof:fflem01}). Moreover, we process all the \emph{new} SPs in $G'$ (Lemma \ref{proof:fflem0}). Finally, we show that data structures and counts are correctly maintained after the algorithm (Lemma \ref{proof:fflem}).
Here we use the notion of a \emph{\freshend} LHT for a triple that represents at least one path that is in $P$ but not in $P^*$. We will consider \fresh triples in Lemma \ref{proof:fflem0} and Observation \ref{obs:fresh}.

\begin{invariant} \label{proof:ffinv01}
	During the execution of Algorithm \ref{algo:ffixup}, for any pair $(x,y)$, consider the first extraction from $H_f$ of a set of triples from $x$ to $y$, and let their weight be $wt$. Then $wt$ is the shortest path distance from $x$ to $y$ in the updated graph $G'$.
\end{invariant}

\begin{lemma} \label{proof:fflem01}
	Algorithm \ref{algo:ffixup} maintains Invariant \ref{proof:ffinv01}.
\end{lemma}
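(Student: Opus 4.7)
The plan is to prove the invariant by induction on the order in which ``first-extraction'' events occur from $H_f$. List these first-extraction events as $(wt_1,x_1,y_1),(wt_2,x_2,y_2),\ldots$ in the order they happen; I will show $wt_i = d'(x_i,y_i)$ for every $i$, where $d'$ denotes shortest-path distance in the updated graph $G'$. Before the inductive step, I would verify a Dijkstra-style monotonicity property: the sequence of min-keys extracted from $H_f$ is non-decreasing in weight. This holds because every triple placed into $H_f$ in the extension loop (Steps~\ref{fixup:startleft}--\ref{fixup:endleft}) is the current min-key triple augmented by a strictly positive edge weight, and the initial contents of $H_f$ are only ever extracted at their original weight or later augmented by positive quantities.

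For the lower bound $wt_i \geq d'(x_i,y_i)$, I would prove the auxiliary invariant that every triple ever placed in $H_f$ represents a real collection of paths in $G'$ whose length equals the triple's recorded weight. For the initial single-edge triples created in Step~\ref{fixup:init1} this is immediate because they use $\weight'$ on edges incident on $v$; for the initial triples drawn from the $P(x,y)$'s in Step~\ref{fixup:init2} this follows from Lemma~\ref{lem:fcleanup-corr2}, since after \fullycleanupend{} the surviving LHTs in $P$ avoid $v$ and their weights are unaffected by the update; inductively, extensions only concatenate a triple already present in the \system{} with an edge of $G'$ using $\weight'$. Hence the extracted weight $wt_i$ is the length of some $(x_i,y_i)$-path in $G'$, and $wt_i \geq d'(x_i,y_i)$ follows.

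For the upper bound $wt_i \leq d'(x_i,y_i)$, fix any SP $p$ from $x_i$ to $y_i$ in $G'$. If $|p|=1$ then $p$ is a single edge, and I split into two subcases: either the edge is incident on $v$, so a triple for it is inserted into $H_f$ in Step~\ref{fixup:init1} with weight $\weight'(x_i,y_i)$; or it is not incident on $v$, in which case its single-edge LST survived cleanup and sits in $P(x_i,y_i)$, so the min-key insertion in Step~\ref{fixup:init2} places a triple of weight at most $|p|$ into $H_f$. If $|p|\geq 2$, write $p=x_i,u,\ldots,w,y_i$; the subpath from $u$ to $w$ is an SP in $G'$ of length $d'(u,w)<d'(x_i,y_i)$, so by the inductive hypothesis the first extraction for $(u,w)$ occurred at weight exactly $d'(u,w)$. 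I would then argue that at that moment the extension block creates the triple $(x_iu,wy_i)$ with weight $\weight'(x_i,u)+d'(u,w)+\weight'(w,y_i)=|p|$ and inserts it into $H_f$, forcing $wt_i\leq|p|=d'(x_i,y_i)$.

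The main obstacle will be justifying that this extension actually fires when the subpath $(u,\ldots,w)$ is processed: the extension loop only iterates over $x'\in L^*(u,w)$ (and symmetrically over $R^*$), so I need $x_i\in L^*(u,w)$ at that moment. This in turn requires the shorter SP prefix $x_i,u,\ldots,w$ to have itself already been processed as an ST and to have triggered the assignment $x_i\in L^*(u,w)$ in Step~\ref{ffixup:phase3-add2LRStar2} or~\ref{fixup:phase3-add2LRStar1}. To handle this cleanly I would strengthen the induction hypothesis to a joint invariant that additionally asserts: by the end of the $i$-th first-extraction event, for every SP $q$ in $G'$ of length at most $wt_i$, every proper sub-SP of $q$ has been processed as an ST and has contributed its SP extensions to the appropriate $L^*$ and $R^*$ sets. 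Combined with monotonicity and the lower bound, this strengthened invariant propagates through the inductive step and yields $wt_i=d'(x_i,y_i)$, establishing Invariant~\ref{proof:ffinv01}.
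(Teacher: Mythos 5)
There is a genuine gap, and it sits exactly where you flagged the ``main obstacle.'' First, a mechanical point: the extension loop in Steps~\ref{fixup:startleft}--\ref{fixup:endleft} performs only single-edge extensions, so when the set for the pair $(u,w)$ is processed it can only create triples of the form $(x_iu,\cdot\,w)$ or $(u\,\cdot,wy_i)$; the triple $(x_iu,wy_i)$ is never produced in one shot from the middle sub-SP, so your inductive step must peel one edge at a time rather than two. Second, and more fundamentally, the strengthened invariant you propose --- that by the time of the $i$-th first extraction every proper sub-SP of every SP of weight at most $wt_i$ ``has been processed as an ST and has contributed its SP extensions'' --- is false within a single run of \fullyfixupend. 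The branch at Step~\ref{ffixup:main1} deliberately skips reprocessing: if $P^*(x,y)$ did not increase its min-weight after cleanup, only triples containing the updated vertex $v$ are placed in $S$ and extended (Steps~\ref{fixup:phase3-addfromX-begin}--\ref{fixup:phase3-addfromX-end}); old STs avoiding $v$ are never extracted and their extensions are never regenerated during this fixup. So a purely intra-run Dijkstra-style induction cannot force the needed extension to ``fire''; correctness instead hinges on a cross-update property of the \systemend, namely that combinations of STs that were both in $P^*$ before the update already sit in $P$ (having been generated by an earlier fixup), and are therefore picked up as min-key candidates in Step~\ref{fixup:init2}.

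This is precisely how the paper argues it: taking the earliest violating extraction and a missed ST $\gamma=((xa,by),wt,count)$ with $wt<\hat{wt}$, it shows $\gamma$ is neither in $P(x,y)$ at the start of fixup (else Step~\ref{fixup:init2} puts a weight-$wt$ candidate into $H_f$) nor already in $H_f$, and then analyzes its one-edge-removed constituent sets $left(\gamma)$ and $right(\gamma)$: they cannot both have representatives in $P^*$ at the start of fixup (else the combined LST would already be in $P$), so at least one side must be newly inserted into $P^*$ during this run at a strictly smaller key --- by minimality of the violation --- and its extension then creates $((xa,by),wt)$ and places it in $H_f$ before weight $\hat{wt}$, a contradiction. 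Your lower-bound argument and the monotonicity of extracted keys are fine and essentially implicit in the paper, but the upper-bound half of your plan needs to be reorganized around this dichotomy (old combination already in $P$ versus new constituent processed in this run), which your current strengthened hypothesis does not capture.
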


\begin{proof}
	Suppose for a contradiction that the invariant is violated at some extraction. Consider the earliest event in which the first set of triples $S'$ of weight $\hat{wt}$, extracted for some pair $(x,y)$, does not contain STs in $G'$. Let $\gamma = ((xa,by),wt,count)$ be a triple in $G'$ that represents at least one shortest path from $x$ to $y$ in $G'$, with $wt < \hat{wt}$. 
	The triple  cannot be in $P(x,y)$ at the beginning of fixup otherwise it (or another triple with same weight $wt$) would have been inserted in $H_f$ during step~\ref{fixup:init2} of Algorithm~\ref{algo:ffixup}. Moreover, $\gamma$ cannot be in $H_f$ otherwise it would have been extracted before any triple of weight $\hat{wt}$ in $S'$; hence $\gamma$ must be a \emph{\new} LST generated by the algorithm.
	Since all the edges incident to $v$ are added to $H_f$ during step~\ref{fixup:init1} of Algorithm~\ref{algo:ffixup}, then $\gamma$ must represent SPs of at least two edges. We define $left(\gamma)$ as the set of LSTs of the form $((xa,c_ib), wt-\weight(b,y), count_{c_i})$ that represent all the LSPs in the left tuple $((xa,b), wt-\weight(b,y))$; similarly we define $right(\gamma)$ as the set of LSTs of the form $((ad_j,by), wt-\weight(x,a), count_{d_j})$ that represent all the LSPs in the right tuple $((a,by), wt-\weight(x,a))$. 
	
	Observe that since $\gamma$ is an ST, all the LSTs in $left(\gamma)$ and $right(\gamma)$ are also STs. 
	A triple in $left(\gamma)$ and a triple in $right(\gamma)$ cannot be present in $P^*$ together at the beginning of fixup. 
	In fact, if at least one triple from both sets is present in $P^*$ at the beginning of fixup, then the last one inserted during the fixup triggered by the previous update, would have generated an LST of the form $((xa,by), wt)$ automatically inserted and thus present in $P$ at the beginning of fixup (a contradiction). Thus either there is no triple in $left(\gamma)$ in $P^*$, or there is no triple in $right(\gamma)$ in $P^*$.
	
	Assume w.l.o.g. that no triple in $right(\gamma)$ is in $P^*$. 
	Since edge weights are positive, $wt-\weight(x,a) < wt < \hat{wt}$, and because all the extractions before $\gamma$ were correct, then the triples in $right(\gamma)$ were correctly extracted from $H_f$ and placed in $P^*$ before the wrong extractions in $S'$. 
	If at least one triple in $left(\gamma)$ is in $P^*$ then the fixup would generate the tuple $((xa,by),wt)$ and place it in $P$ and $H_f$ (Steps~\ref{fixup:startleft}--\ref{fixup:endleft},  Algorithm~\ref{algo:ffixup}). Otherwise, since $wt-\weight(b,y) < wt < \hat{wt}$, the triples in $left(\gamma)$ were discovered by the algorithm before the wrong extractions in $S'$. Moreover the algorithm would generate the tuple $((xa,by),wt)$ (as right extensions) and place it in $P$ and $H_f$ (because at least one triple in $right(\gamma)$ is already in $P^*$). Thus, in both cases, a tuple $((xa,by),wt)$ should have been extracted from $H_f$ before any triple in $S'$. A contradiction.
\end{proof}

\begin{invariant} \label{proof:ffinv}
	The set $S$ of triples constructed in Steps~\ref{ffixup:main1}--\ref{ffixup:phase3-main-check-end} of Algorithm~\ref{algo:ffixup}
	represents all the {\em new} shortest paths from $x$ to $y$.
\end{invariant}

\begin{lemma} \label{proof:fflem0}
	Algorithm \ref{algo:ffixup} maintains Invariant \ref{proof:ffinv}.
\end{lemma}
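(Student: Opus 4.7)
The plan is to prove the invariant by inducting on the order of first-extractions from $H_f$ across all pairs $(x,y)$, and within each extraction to split the analysis into the two disjoint branches taken by the algorithm at Step~\ref{ffixup:main1}. I first invoke Lemma~\ref{proof:fflem01} to fix $wt = d'(x,y)$ (the SP distance in $G'$), so that the triples whose extractions we analyse truly correspond to shortest paths in $G'$. A ``new'' SP for $(x,y)$ is, concretely, either a triple $\gamma=((xa,by),wt,c)$ that is absent from $P^{*}(x,y)$ before the current fixup processes $(x,y)$, or one that is present there but with a count different from the true number of $x$-to-$y$ SPs in $G'$ having first edge $(x,a)$ and last edge $(b,y)$.

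In the branch where the min-key in $P^{*}(x,y)$ strictly increased after \fullycleanupend, every SP in $G'$ of weight $wt$ is new (the previous STs all had weight strictly less than $wt$ and were wiped during cleanup). The key step is to show that every such LST $\tau = (xa,by)$ is in $P(x,y)$ with key $[wt,0]$ at the moment of the first extraction for $(x,y)$. I prove this by strong induction on the number of edges of $\tau$: single-edge tuples incident on $v$ are inserted by the initialization (Step~\ref{fixup:init1}); for $\tau$ with more than one edge, the sub-STs $(a,by)$ and $(xa,b)$ in $G'$ have weight strictly less than $wt$, so by Lemma~\ref{proof:fflem01} and the outer induction on extraction order they have already been processed correctly and added to their respective $P^{*}$ sets, upon which the left/right-extension block (Steps~\ref{fixup:startleft}--\ref{fixup:endleft}) generates $\tau$ in $P(x,y)$ and sets its bit to $0$ (Step~\ref{ffixup:b3}). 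Because the current min-key of $P^{*}(x,y)$ is strictly greater than $wt$, no triple of weight $\le wt$ in $P(x,y)$ can have $\cb=1$, so the for-loop at Step~\ref{fixup:phase3-addfromP-begin} enumerates exactly the set of min-weight new STs and places every one of them in $S$ with the correct count.

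In the other branch, where the min-key of $P^{*}(x,y)$ was not disturbed, old STs of weight $wt$ that avoid $v$ are preserved by \fullycleanup (Lemma~\ref{lem:fcleanup-corr2}) and stay in $P^{*}(x,y)$ with their correct counts. I argue that every new SP of weight $wt$ must contain $v$: if a path $p$ of weight $wt$ from $x$ to $y$ avoids $v$, then none of the edges on $p$ or its sub-paths change weight, so both the status of $p$ as an SP and the counts of the tuples it represents are identical in $G$ and $G'$, contradicting ``new''. Hence each new SP triple appears, by the inductive extension argument already used in the other branch (now applied only to tuples whose smaller-weight sub-STs contain $v$), as an element of $S'$ with $\mathrm{paths}(\gamma',v)>0$; the for-loop at Steps~\ref{fixup:phase3-addfromX-begin}--\ref{fixup:phase3-addfromX-end} then pushes exactly these triples into $S$, crediting them in $P^{*}(x,y)$ with the delta $\mathrm{paths}(\gamma',v)$.

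The main obstacle is the bookkeeping of counts in the second branch, because a single tuple $(xa,by)$ of weight $wt$ may represent both paths through $v$ and paths avoiding $v$, and only the former are ``new''. The argument must show that summing $\mathrm{paths}(\gamma',v)$ over all triples in $S'$ that share a tuple yields precisely the number of weight-$wt$ SPs of that tuple containing $v$; this in turn follows inductively from the correctness of how $\mathrm{paths}(\cdot,v)$ is propagated during left/right extensions (Step~\ref{fixup:add1}), combined with Lemma~\ref{lem:fcleanup-corr2} which guarantees that the residual counts in $P(x,y)$ after cleanup account exactly for paths avoiding $v$. Once this count delta is verified, together with the two branch arguments above, the invariant follows.
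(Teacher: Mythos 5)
Your overall structure (split on the test at Step~\ref{ffixup:main1}, use Lemma~\ref{proof:fflem01} to pin $wt=d'(x,y)$, argue completeness of $S$ in each branch) matches the paper's proof, but the key step of your first branch has a genuine hole. You claim that every new weight-$wt$ ST $\tau=(xa,by)$ is placed in $P(x,y)$ with key $[wt,0]$ because its smaller-weight sub-STs ``have already been processed \ldots and added to their respective $P^{*}$ sets, upon which the left/right-extension block (Steps~\ref{fixup:startleft}--\ref{fixup:endleft}) generates $\tau$.'' Those extension steps are only applied to triples placed in $S$, i.e.\ to sub-STs that are themselves new (or carry paths through $v$). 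The crucial case this branch exists for is precisely the opposite one: an LST avoiding $v$ whose two sub-tuples were STs before the update and keep their counts -- after a weight increase on $v$, $\tau$ becomes shortest although neither sub-tuple is touched in the current fixup, so nothing extends into $\tau$ now. Its presence in $P(x,y)$ must instead be argued from the tuple-system invariant that when both constituents sit in $P^{*}$ the combined tuple was already created in an earlier fixup (Lemma~\ref{lem:lhts}, or the ``cannot both be in $P^{*}$ at the beginning of fixup'' argument inside the proof of Lemma~\ref{proof:fflem01}); this is the paper's type-(3) path and the very reason the algorithm scans $P(x,y)$ rather than $H_f$ in this branch. Your induction on the number of edges silently skips it. (In the mixed case, one sub-ST new and one old, you also need that the old one's $L^{*}/R^{*}$ entries are in place, which you do not mention, though that is minor.)

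A second flaw: you assert that in this branch the post-cleanup min-key of $P^{*}(x,y)$ is \emph{strictly} greater than $wt$, hence that every weight-$wt$ SP is new and no weight-$\le wt$ triple in $P(x,y)$ can have $\cb=1$. This is false in the restore case: the min-weight of $P^{*}(x,y)$ can increase to exactly $wt$ because an HT of weight $wt$ avoiding $v$ survives \fullycleanup with its correct count; such a triple is not new and its $P$-copy has $\cb=1$. Completeness of $S$ is not destroyed (non-new triples need not enter $S$, and genuinely new triples do have $\cb=0$), but your stated reason that the loop at Step~\ref{fixup:phase3-addfromP-begin} enumerates ``exactly'' the new min-weight STs does not hold; the paper instead argues by contradiction that any min-weight triple processed there must contain new paths, since a triple all of whose paths avoid $v$ would have survived cleanup in $P^{*}$. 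Finally, you explicitly leave the count bookkeeping in the second branch as an unproved assertion; the paper defers that to the loop invariant of Lemma~\ref{proof:fflem}, so this is acceptable in spirit, but as submitted your branch-1 argument needs the repair above before the lemma can be considered proved.
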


\begin{proof}
	Any new SP from $x$ to $y$ is of the following three types:
	\begin{enumerate}
		\item a single edge containing the vertex $v$ (such a path is
		added to $P(x, y)$ and $H_f$ in Step~\ref{fixup:init1})
		\item a path generated via left/right extension of some shortest path previously extracted from $H_f$ during the execution of Algorithm~\ref{algo:ffixup}  (this generated path is added to $P(x, y)$ and $H_f$ in Step~\ref{fixup:add1} and an analogous step in right-extend).
		\item a path that was an LSP but not an SP before the update and is an SP after the update.
	\end{enumerate}
	In type (1) and (2) above
	any new SP from $x$ to $y$  which is added to $H_f$ is also added to $P^*(x, y)$. However,
	amongst the several triples representing paths of the type (3) listed above, only one candidate triple
	will be present in $H_f$. Thus we conclude that, for a given $x, y$, when we extract from $H_f$ a type (3) triple of weight $wt$, $P(x,y)$ could contain
	a superset of triples with the same weight $wt$ that are not present in $H_f$. We now consider the two cases the algorithm deals with.
	\begin{itemize}
		\item $P^{*}(x, y)$ increased its min-weight, when the first set of triples for $x, y$ is extracted from $H_f$. This is the only case where we could restore historical triples, or process \fresh triples from scratch because they are not yet in $P^*$ or they are present in $P^*$ with a lower count than the corresponding triple in $P$ (note that this condition is triggered only by \decremental updates). To do that we process all the min-weight triples in $P(x,y)$, but before we really add a triple in $S$ for further extensions, we check if it is present in $P^*$ with a lower count (Step~\ref{ffixup:new1}, Algorithm~\ref{algo:ffixup}), or it is not present in $P^*$ (Step~\ref{ffixup:old1}, Algorithm~\ref{algo:ffixup}). By the above argument, we consider all the new STs from $x$ to $y$ present in $P(x, y)$.
		Therefore it suffices to argue that all of them contains \emph{new} shortest paths to be processed. Suppose for contradiction that
		some triple $\gamma$ does not contains {\em new} shortest paths. Thus, $\gamma$ was a ST before
		the update and already in $P^*$ 
		with at least one path not going through $v$. However, since cleanup only removes paths that contain $v$,
		the triple $\gamma$ remained in $P^{*}(x,y)$ after the \fullycleanup phase. This contradicts the fact that $P^{*}(x,y)$
		increased its min-weight.
		\item $P^{*}(x,y)$ didn't change its min-weight when the first set of triples for $x, y$ is extracted from $H_f$.
		Let the weight of triples in $P^{*}(x,y)$ be $wt$.
		This implies that the shortest path distance from $x$ to $y$ before and after the update is $wt$.
		Both in the case of \incremental and \decremental updates, all the new paths that we need to consider from $x$ to $y$ are going through the updated node $v$. By construction of the Algorithm \ref{algo:ffixup}, every triple containing the updated node $v$ is always placed into $H_f$. Thus it suffices to consider only triples in $H_f$.
	\end{itemize}
\end{proof}

\begin{observation} \label{obs:fresh}
	During the execution of Algorithm \ref{algo:ffixup},
	consider a THT $\tau$ that becomes shortest. If $\tau$'s corresponding triple in $P$ is not \freshend, then it is simply restored (not processed); otherwise $\tau$'s count is replaced with the updated count from $P$ and it is extended anew.
\end{observation}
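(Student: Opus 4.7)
The plan is to verify the observation by a careful walk-through of the pseudocode of Algorithm~\ref{algo:ffixup}, focusing on the block at Steps~\ref{ffixup:main1}--\ref{ffixup:phase3-main-check-end} that is triggered precisely when $P^*(x,y)$ has increased its min-weight after \fullycleanupend---this is exactly the scenario in which a THT of weight matching the new min-weight transitions to being a shortest path tuple. I would then split on the value of the control bit $\cb$ of $\tau$'s triple in $P(x,y)$, since by definition of \freshend{} (paths in $P$ but not in $P^*$ with the correct count) fresh corresponds to $\cb=0$ and non-fresh to $\cb=1$.

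For the non-fresh case ($\cb=1$), my plan is to establish and invoke an inductive invariant: whenever $\cb=1$ for some triple in $P(x,y)$, that triple is present in $P^*(x,y)$ with the same count, and all of its extensions have already been generated in a previous fixup (set up by Steps~\ref{ffixup:b1}--\ref{ffixup:b2} together with the extension loops starting at Step~\ref{fixup:startleft}). Combined with Lemmas~\ref{lem:fcleanup-corr2} and \ref{lem:tht}, this implies that \fullycleanup preserves $\tau$ and all its extensions intact (because no surviving THT can contain the updated vertex $v$), so once $P^*(x,y)$'s min-weight rises to $wt$, $\tau$ automatically takes on ST status without the for-loop at Step~\ref{fixup:phase3-addfromP-begin} ever touching it: since $P(x,y)$ is keyed by $[wt,\cb]$, that loop only extracts triples with key $[wt,0]$. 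That is exactly what ``simply restored'' means.

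For the fresh case ($\cb=0$), I would trace the code directly. The triple is extracted by the for-loop at Step~\ref{fixup:phase3-addfromP-begin} and enters the elif-branch at Step~\ref{ffixup:new1}, where the count mismatch with $P^*$ is recognized (such a mismatch is possible because new paths of weight $wt$ were appended to $P(x,y)$ while $\tau$ was still historical, and each such append set $\cb$ to $0$ via Step~\ref{ffixup:b3}). Step~\ref{ffixup:new2} then overwrites the $P^*$ count with the updated count $count'$ from $P$ and inserts $\tau$ into the working set $S$. The tuples in $S$ subsequently drive the left and right extensions at Steps~\ref{fixup:startleft}--\ref{fixup:endleft}, so $\tau$ is extended anew with its correct count; Step~\ref{ffixup:b1} finally sets $\cb=1$, restoring the invariant used in the previous case.

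I expect the main obstacle to be formally stating and proving the invariant guaranteeing that ``$\cb=1$'' really does mean the triple is already correctly represented in $P^*$ with all its extensions in place. In particular, I would need to check that every way $\tau$ or any of its extensions can be modified by \fullycleanup or \fullyfixup is either accompanied by an appropriate reset of $\cb$ to $0$ (e.g., Step~\ref{ffixup:b3} when a fresh path is appended to $P$) or preserves the invariant. This amounts to a routine but slightly delicate induction over the sequence of cleanup/fixup calls; once verified, the observation follows directly from inspection of Algorithm~\ref{algo:ffixup}.
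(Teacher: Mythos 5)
Your proposal is correct and takes essentially the same route as the paper: a direct inspection of the restoration branch of Algorithm~\ref{algo:ffixup}, where a count mismatch (Step~\ref{ffixup:new1}) leads to the count in $P^*$ being overwritten and the triple re-extended via $S$ (Step~\ref{ffixup:new2}), while a matching count leaves the triple simply restored and unprocessed. The paper's own proof is even shorter, deciding the two cases directly by the count comparison at Step~\ref{ffixup:new1} rather than through the $\cb$-bit invariant you propose to formalize, so that additional invariant, while sensible, is not required for the argument the authors give.
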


\begin{proof}
	When we restore an existing HT $\tau$ from $P^*$, we always check if its corresponding triple in $P$ contains more paths (Step~\ref{ffixup:new1}, Algorithm~\ref{algo:ffixup}) or the counts match. In the first case $\tau$ in $P^*$ is carrying an obsolete number of SPs and is therefore replaced with the correct count in $P$ and extended anew (Step~\ref{ffixup:new2}, Algorithm~\ref{algo:ffixup}). Otherwise it is still representing the correct number of SPs to be restored and it is not processed.
\end{proof}

\begin{lemma} \label{proof:fflem}
	After the execution of Algorithm~\ref{algo:ffixup} (\fullyfixupend), for any $(x, y) \in V$, the STs
	in $P^{*}(x,y)$ (LSTs in $P(x,y)$) represent all the SPs (LSPs) from $x$ to $y$ in the updated graph.
	Also, the sets $L, L^{*}, R, R^{*}$ are correctly maintained.
\end{lemma}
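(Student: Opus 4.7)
The plan is to proceed by induction on the order in which triples are extracted from the priority queue $H_f$, using the post-cleanup state guaranteed by Lemma~\ref{lem:fcleanup-corr2} as the base, and combining it with Invariant~\ref{proof:ffinv01} (Lemma~\ref{proof:fflem01}), Invariant~\ref{proof:ffinv} (Lemma~\ref{proof:fflem0}), and Observation~\ref{obs:fresh}. The loop invariant to carry through the main while loop is: after the $k$-th iteration with extracted key $[wt_k,x_k,y_k]$, every pair $(a,b)$ whose shortest-path distance $d'(a,b)$ in the updated graph $G'$ satisfies $[d'(a,b),a,b]\le[wt_k,x_k,y_k]$ has (i) $P^*(a,b)$ equal to exactly the collection of STs for $(a,b)$ in $G'$ with correct counts, (ii) $L^*(a,b)$ and $R^*(a,b)$ equal to the pre- and post-extensions of these STs, and (iii) every LST of $G'$ whose minimum-weight constituent subtuple has pair-key $\le[wt_k,x_k,y_k]$ already present in the corresponding $P(\cdot,\cdot)$ with correct count and reflected in $L,R$.

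For the base case (before any extraction) the invariant is vacuous. For the inductive step, consider the first extraction for $(x_k,y_k)$. By Lemma~\ref{proof:fflem01}, $wt_k=d'(x_k,y_k)$. The branch of Step~\ref{ffixup:main1} splits on whether $P^*(x_k,y_k)$ increased its min-weight: if it did, then by Lemma~\ref{lem:fcleanup-corr2} no old ST for $(x_k,y_k)$ survives with weight $wt_k$, and by Lemma~\ref{proof:fflem0} and Observation~\ref{obs:fresh} the triples assembled into $S$ from $P(x_k,y_k)$ (including revived HTs with refreshed counts and genuinely new ones) are exactly the STs of $G'$ at weight $wt_k$; if it did not, then the old STs that did not go through $v$ are preserved in $P^*(x_k,y_k)$ by Lemma~\ref{lem:fcleanup-corr2}, and $S$ collects exactly the new STs through $v$ from $H_f$. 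In either case, $P^*(x_k,y_k)$ becomes correct, and Steps~\ref{ffixup:phase3-add2LRStar2} and \ref{fixup:phase3-add2LRStar1} install the corresponding $L^*,R^*$ entries. Subsequent extractions for the same pair $(x_k,y_k)$ are skipped by the first-extraction guard, which is correct because those triples have already been placed into $P^*$ with their proper counts.

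Next, the left/right extension loops (Steps~\ref{fixup:startleft}--\ref{fixup:endleft} and symmetric) run over every $x'\in L^*(x_k,b')$ for each $(x_k a', b' y_k)\in S$, thereby creating every LST of $G'$ whose leftmost sub-ST lies in $S$ (symmetrically on the right). The resulting triples are inserted into $P(x',y_k)$ with the correctly aggregated count $fcount'$, and are placed into $H_f$ to be examined at their true weight $wt_k+\weight(x',x_k)$, at which point the inductive hypothesis will apply to them. The \MT set and the conditional updates of $L,R$ ensure each new triple is charged exactly once and that the extension sets track exactly the triples currently in $P$. Thus properties (i)--(iii) of the invariant are restored at the end of iteration $k$, and unchanged for pairs whose first extraction has not yet happened.

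The main obstacle is completeness of the LST set: one must argue that every LST of $G'$ is either (a) preserved from before the update (it contains no path through $v$, so it was untouched by cleanup and remains in $P$), or (b) is generated by an extension from a newly computed ST in some $S$ set. This partition is justified by Lemma~\ref{lem:lhts}: every LST is built from two HTs, so if an LST is not entirely composed of sub-STs that existed before the update, at least one of its sub-STs is new and hence appears in some $S$ during fixup, at which point the extension loop creates the LST. Termination, when $H_f=\emptyset$, then yields the full statement of Lemma~\ref{proof:fflem} for all pairs $(x,y)$, and the claims on $L,L^*,R,R^*$ follow from the per-iteration maintenance above.
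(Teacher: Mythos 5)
Your overall architecture mirrors the paper's proof: a loop invariant over the extractions from $H_f$, anchored on the post-cleanup state (Lemma~\ref{lem:fcleanup-corr2}), using Lemma~\ref{proof:fflem01} for distances, Lemma~\ref{proof:fflem0} and Observation~\ref{obs:fresh} for the set $S$, and a final argument that every LST arises as an extension of STs. However, your invariant is stated too strongly, and in that form it is false and cannot be maintained. You claim that after a pair is processed, $P^*(a,b)$ is \emph{exactly} the collection of STs of $G'$ with correct counts and that $L^*(a,b)$, $R^*(a,b)$ are \emph{exactly} the extension sets of these STs. But in the fully dynamic \system these structures are deliberately supersets: a THT whose paths avoid the updated vertex $v$ is untouched by \fullycleanup (cleanup only reaches triples through $v$) and \fullyfixup never deletes anything, so such a THT survives in $P^*(a,b)$ after the update; likewise its $L^*/R^*$ entries persist until the corresponding $P^*$ queue empties. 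Moreover, the algorithm's design explicitly allows an HT to sit in $P^*$ with a stale count while $P$ carries the correct one (this is what the control bit $\cb$ and Step~\ref{ffixup:new1} are for), which again contradicts exactness of counts in $P^*$. The invariant the paper actually carries (assertions~\ref{proof:fitem2} and \ref{proof:fitem21} in its loop invariant) is the weaker, correct one: all STs of $G'$ at the processed keys are present with correct counts, and any HT that \emph{is} present in $P^*$ matches the count of its counterpart in $P$; that count-consistency clause is precisely what makes the later restoration of THTs via Observation~\ref{obs:fresh} sound, and your exact-equality version would leave nothing to restore.

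A second, smaller misstatement is in your clause (iii): you key the presence of an LST in $P$ to the extraction of its \emph{minimum-weight} constituent subtuple. An LST whose lighter constituent is an old ST but whose heavier constituent only becomes shortest after the update is generated only when that heavier constituent is extracted from $H_f$ and placed in $S$; until then it need not be in $P$, so your invariant would already be violated at intermediate iterations. The paper avoids this by phrasing the extension assertion (its item~\ref{proof:fitem4}) relative to the constituent actually being processed. Both defects are repairable without changing your plan — weaken (i)--(ii) to completeness plus $P$/$P^*$ count consistency for HTs, and re-key (iii) to the processed constituent — but as written the induction does not go through.
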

\begin{proof}

	{\noindent \bf Loop Invariant:} 
	At the start of each iteration of the while loop in Step~\ref{fixup:phase3-begin} of Algorithm~\ref{algo:ffixup},
	assume that the first triple in $H_f$ to be extracted and processed has min-key = $[wt, x, y]$. Then the following properties hold about the \system and $H_f$.
	\begin{enumerate}
		\item \label{proof:fitem1}
		If $G$ contains $c_{ab}$ SPs of form $(xa, by)$ and weight ${wt}$,
		then $H_f$ contains a triple of form $(xa, by)$ and weight $wt$ to be extracted and processed.
		For any $a, b \in V$, if $G'$ contains $c_{ab}$ SPs of form $(xa, by)$ and weight ${wt}$,
		then $H_f$ contains a triple of form $(xa, by)$ and weight $wt$ to be extracted and processed. Further, a triple $\gamma = ((xa, by), { wt}, c_{ab})$ is present in $P(x,y)$.
		\item Let $[\hat{wt},\hat{x},\hat{y}]$ be the last key extracted from $H_f$ and processed before $[wt,x,y]$.
		Every SP in $G$, with an associated key $[wt_1, x_1, y_1] \leq [\hat{wt},\hat{x},\hat{y}]$, is represented by a triple $\gamma$ in the \system having the correct count. Moreover, the \system contains all the valid extensions derived by $\gamma$, and any extension of $\gamma$ itself is present in $H_f$ as a candidate triple.   
		For any key $[wt_1, x_1, y_1] \leq [\hat{wt},\hat{x},\hat{y}]$, let
		$G'$ contain ${ c}  > 0 $ number of LHPs of weight  ${ wt_1}$ of the
		form $(x_1a_1, b_1y_1)$. Further, let ${ c_{new}}$ (resp. ${ c_{old }}$) denote the number of these LHPs
		 that are {\em new} (resp. not {\em new}).
		Here ${ c_{new} + c_{old} = c}$. If $c_{new} > 0$ then,
		\begin{enumerate}
		\item \label{proof:fitem2}
		there is an LHT in $P(x_1, y_1)$ of the form $(x_1a_1, b_1y_1)$ and weight $wt_1$ that represents $c$ LHPs. 
		\item \label{proof:fitem21}
		If a triple of the form $(x_1a_1, b_1y_1)$ and weight $wt_1$ is present as an HT in $P^*$, then it represents the exact same count of $c$ HPs of its corresponding triple in $P$. This is exactly the number of HPs of the form $(x_1a_1, b_1y_1)$ and weight $wt_1$ in $G'$.
		\item \label{proof:fitem3} $(x_1,wt_1) \in L(a_1, b_1y_1)$, $(y_1,wt_1) \in R(x_1a_1, b_1)$, and if the triple of the form $(x_1a_1, b_1y_1)$ and weight $wt_1$ is also shortest then $(x_1,wt_1) \in L^{*}(a_1, y_1)$, $(y_1,wt_1) \in R^{*}(x_1, b_1)$.
		Further, $(x_1a_1, b_1y_1) \in $ \MT
		iff ${ {c_{old}} > 0}$.
		\item \label{proof:fitem4} If $c_{new} > 0$, for every $(x',wt') \in L(x_1, b_1y_1)$, an LHT corresponding to $(x' x_1, b_1y_1)$
		with weight $wt'~=~wt_1~+~\weight (x',x_1) \geq wt$ and counts equal to the sum of new paths represented by its constituents, is in $H_f$ and $P$. 
		A similar assertion holds for $(y',wt') \in R(x_1a_1, y_1)$.
		\end{enumerate}
		\item \label{proof:fitem5}
		Any triple $\gamma$ in $H_f$, which is longer than the last triple processed, is also in \MT if and only if it contains at least one old LHP (generated by a previous update) and a new LHP added to $H_f$ during a previous step of the current update.   
		For any key $[wt_2, x_2, y_2 ] \geq [wt, x, y]$, let
		$G'$ contain $c > 0$ number of LHPs of weight  ${ wt_2}$ of the
		form $(x_2a_2, b_2y_2)$. Further, let ${ c_{new}}$ (resp. ${ c_{old }}$) denote the number of such LHPs
		that are {\em new} (resp. not {\em new}).
		Here ${ c_{new} + c_{old} = c}$. Then the tuple $(x_2 a_2, b_2 y_2) \in$ \MTend, iff 
		$c_{old} > 0$ and $c_{new}$ paths have been added to $H_f$ by some earlier iteration of the while loop.
	\end{enumerate}
	
	Initialization and Maintenance for the 3 invariant assertions are similar to the proof of Lemma~\ref{lem:fcleanup-corr2}.
	
	{\noindent \bf Termination:} The condition to exit the loop is $H_f = \emptyset$. Because invariant assertion $\ref{proof:item1}$ maintains in $H_f$ the first triple to be extracted and processed, then $H_f = \emptyset$ implies that there are no triples, formed by a valid left or right extensions, that contain \emph{new} SPs or LSPs, that need to be added or restored in the graph $G$. Moreover, because of invariant assertions $\ref{proof:item2}$ and $\ref{proof:item21}$, every triple containing the node $v$, extracted and processed before $H_f = \emptyset$, has been added or restored with its correct count in the \systemend. Finally, for invariant assertion $\ref{proof:item3}$, the stacks $L, L^{*}, R, R^{*}$ are correctly maintained. This completes the proof of the loop invariant.
	
    \vone
	By Lemma \ref{proof:fflem0}, all the new SPs in $G'$ are placed in $H_f$ and processed by the algorithm and hence are in $P^*$ after the execution of Algorithm~\ref{algo:ffixup}. Moreover, for a pair $(x,y)$, the check in Step~\ref{ffixup:main1} of Algorithm~\ref{algo:ffixup} would fail if the distance from $x$ to $y$ doesn't change after the update. Thus the old SPs from $x$ to $y$ will remain in $P^*(x,y)$. Hence, after Algorithm~\ref{algo:ffixup} is executed, every SP in $G'$ is in its corresponding $P^*$.   
	
	Since every LST of the form $(xa,by)$ in $G'$ is formed by a left extension of a set of STs of the form $(a \times,by)$ (Steps~\ref{fixup:startleft}--\ref{fixup:endleft},  Algorithm~\ref{algo:ffixup}), or a right extension of a set of the form $(xa,\times b)$ (analogous steps for right extensions), and all the STs are correctly maintained by the algorithm, then all the LSTs are correctly maintained at the end of the fixup algorithm. This completes the proof of the Lemma.
\end{proof}

\section{The \ffullydynamic Algorithm for APASP} \label{sec:algo}

\subsection{Description of \ffullycleanup} \label{ffcleanup-desc}
\ffullycleanup removes all the LHPs going through the updated vertex $v$ from all the global structures $P$, $P^{*}$, $L$ and $R$, and from all local structures in any active level graph $\levelgraph_j$ that contains these triples. This involves decrementing the count of some triples or removing them completely (when all the paths in the triple go through $v$). The algorithm also updates local dictionaries and the $\DMs$, $\RN$ and $\LN$ structures.
Algorithm \ffullycleanup is a natural extension of the \NPR cleanup. An extension of the \NPR cleanup is used also in our basic algorithm but in a different way.

\begin{algorithm}
   \SetAlgoLined
   \SetAlgoNoEnd
		$H_c \leftarrow \emptyset$; \MT$\leftarrow \emptyset$\;
		$\gamma \leftarrow [(v,v), 0, 1]$; $\CA_{\gamma}[center(v)]=1$; add $[\gamma, \CA_{\gamma}]$ to $H_c$\; \label{ffcleanup:init}
		\While {$H_c \neq \emptyset$ }{ \label{ffcleanup:while}
		extract in $S$ all the triples with the same min-key $[wt,x,y]$ from $H_c$\; \label{ffcleanup:extract}
		\ffullycleanupend-$\ell$-extend($S$,$[wt,x,y]$) (see Algorithm \ref{algo:ffcleanup-process})\;
		\ffullycleanupend-$r$-extend($S$,$[wt,x,y]$)\;
		}
	\caption{\ffullycleanupend$(v)$}
	\label{algo:ffcleanup}
\end{algorithm}

\ffullycleanup starts as in the \NPR algorithm. We add the updated node $v$ to $H_c$ (Step \ref{ffcleanup:init} -- Alg. \ref{algo:ffcleanup}) and we start extracting all the triples with same min-key (Step \ref{ffcleanup:extract} -- Alg. \ref{algo:ffcleanup}).
The main differences from \NPR start after we call Algorithm \ref{algo:ffcleanup-process}.
As in \cite{NPR14b}, we start by forming a new triple $\gamma'$ to be deleted (Steps \ref{ffcleanup:triplep} -- Alg. \ref{algo:ffcleanup}).
A new feature in Algorithm \ref{algo:ffcleanup-process} is to accumulate the paths that we need to remove level by level using the array $\CA'$. This is inspired by \Tho where unique SPs are maintained in each level. However, our algorithm maintains multiples paths spread across different levels using the $\CA_{\gamma}$ arrays associated to LSTs, and the technique used to update the $\CA_{\gamma}$ arrays is significantly different and more involved than the one described in \Thoe. Step~\ref{ffcleanup:updvect} - Alg.~\ref{algo:ffcleanup-process} calls \ffullyccenters (Alg.~\ref{algo:ffcleanup-vector}) that will perform this task.

\begin{algorithm}
   \SetAlgoLined
   \SetAlgoNoEnd
		\For {every $b$ such that $(x\times,by) \in S$}{
		let $S_b \subseteq S$ be the set of all triples of the form $(x\times,by)$\;
		let $fcount'$ be the sum of all the $counts$ of all triples in $S_b$\;
		\For {every $x'$ in $L(x,by)$ s.t. $(x'x,by) \notin$ \MT}{
		$wt' \leftarrow wt+\weight(x',x)$; $\gamma' \leftarrow ((x'x,by),wt', fcount')$\; \label{ffcleanup:triplep}
		$\CA_{\gamma'} \leftarrow$ \ffullyccentersend$(\gamma', S_b)$\; \label{ffcleanup:updvect}
		add $[\gamma', \CA_{\gamma'}]$ to $H_c$\; \label{ffcleanup:addHcp}
		remove $\gamma'$ in $P(x',y)$ // decrements $count$ by $fcount'$\; \label{ffcleanup:remP}
		set new center for $\gamma''=((x'x,by),wt')$ in $P(x',y)$ as $\arg\!\min_i(\CA_{\gamma''}[i] \neq 0)$\;
		\label{ffcleanup:setcen}	
		\eIf { a triple for $(x'x,by)$ exists in $P(x',y)$}{ \label{ffcleanup:LRstart}		
		insert $(x'x,by)$ in \MTend\;
		}{
		delete $x'$ from $L(x,by)$ and delete $y$ from $R(x'x,b)$\; \label{ffcleanup:remLR}
		} \label{ffcleanup:LRend}
		\If { no triple for $((x'-,by),wt')$ exists in $P(x',y)$}{ \label{ffcleanup:RN}
		remove $b$ from $\RN(x',y,wt')$\;
		}
		\If { no triple for $((x'x,-y),wt')$ exists in $P(x',y)$}{ 
		remove $x$ from $\LN(x',y,wt')$\;	
		}	\label{ffcleanup:LN}				
		\If {a triple for $((x'x,by),wt')$ exists in $P^*(x',y)$}{ 
		remove $\gamma'$ in $P^*(x',y)$ // decrements $count$ by $fcount'$\; \label{ffcleanup:remPS}
		\If {$\gamma' \notin P^*(x',y)$}{
		remove the element with weight $wt'$ from $\DMs(x',y)$ if not linked to other tuples in $P^*(x',y)$\; \label{ffcleanup:remDM}
		}
		\For {each $i$}{ \label{ffcleanup:forPri}
		decrement $\CA_{\gamma'}[i]$ paths from $\gamma' \in P^*_i(x',y)$\; \label{ffcleanup:updPri0}
		\If {$\gamma'$ is removed from $P^*_i(x',y)$}{ \label{ffcleanup:remLSs}
		\uIf	{$x'$ is centered in level $i$}{
		\If {$\forall \, j\geq i, P^*_j(x,y)=\emptyset$ or $x'=v$}{
		remove $x'$ from $L^*_i(x,y)$ and remove $x'$ from $LC^*_i(x,y)$\; \label{ffcleanup:updLC}
			}
		}
		\uElseIf {$P^*_i(x,y)=\emptyset$ or $x'=v$}{
		remove $x'$ from $L^*_i(x,y)$\;
		}
		\uIf {$y$ is centered in level $i$}{
		\If {$\forall \, j\geq i, P^*_j(x',b)=\emptyset$ or $y=v$}{
		remove $y$ from $R^*_i(x',b)$ and remove $y$ from $RC^*_i(x',b)$\; \label{ffcleanup:updRC}
		}
		}
		\uElseIf {$P^*_i(x',b)=\emptyset$ or $y=v$}{
		remove $y$ from $R^*_i(x',b)$\;
			}
		} \label{ffcleanup:remLSe}
		}
		}
		}
	}
	\caption{\ffullycleanupend-$\ell$-extend($S,[wt,x,y]$)}
	\label{algo:ffcleanup-process}
\end{algorithm}

\begin{algorithm}
  \SetAlgoLined
  \SetAlgoNoEnd
	let $\gamma' = ((x'x,by),wt',fcount')$ (the triple of the form $((x'x,by),wt')$ that contains all the paths through $v$ to be removed)\;	
	let $\gamma'' = ((x'x,by),wt',fcount'')$ (the triple of the form $((x'x,by),wt')$ in $P(x',y)$. Note that $\gamma'$ represents a subset of $\gamma''$)\;
	$j \leftarrow \arg\!\max_j(\CA_{\gamma''}[j]\neq 0)$ \label{ffcleanup:gpivot} // This is the oldest level in which a path in $\gamma''$ appeared for the first time\;
	$\CA' \leftarrow \sum_{\gamma \in S_b}{\CA_{\gamma}[r-1, \ldots, 0]}$ // This is the sum (level by level) of the triples of the form $(xa_i,by)$ that go through~$v$ and are extending to $\gamma'$ during this stage\; \label{ffcleanup:dist1}
	create a new center vector $\CA_{\gamma'}$ for the triple $\gamma'$ as follows\; 	\label{ffcleanup:newArr0}	
	\hspace{0.3in} for all the levels $m>j$ we set $\CA_{\gamma'}[m] = 0$\; \label{ffcleanup:newArr1}
	\hspace{0.3in} for the level $j$ we set $\CA_{\gamma'}[j] = \sum_{k = j}^{r-1}{\CA'[k]}$\; \label{ffcleanup:newArr2}
    \hspace{0.3in} for all the levels $i<j$ we set $\CA_{\gamma'}[i] = \CA'[i]$\; \label{ffcleanup:newArr3}	
	$\CA_{\gamma''}[r-1, \ldots, 0] \leftarrow \CA_{\gamma''}[r-1, \ldots, 0] - \CA_{\gamma'}[r-1, \ldots, 0]$ // We update the $\CA$ vector for $\gamma'' \in P(x',y)$\; \label{ffcleanup:Pvecupd}
	return $\CA_{\gamma'}$ // We return the correct vector for the generated $\gamma'$ triples\;
	\caption{\ffullyccentersend$(\gamma', S_b)$}
	\label{algo:ffcleanup-vector}
\end{algorithm}

\ffullyccenters takes as input the generated triple $\gamma'$ of the form $(x'x,by)$ which contains all the paths going through the updated node $v$ to be removed, and the set of triples $S_b$ of the form $(x\times,by)$ that are extended to $x'$ to generate $\gamma'$. This procedure has two tasks: (1) generating the $\CA_{\gamma'}$ vector for the triple $\gamma'$ that will be reinserted in $H_c$ for further extensions, and (2) updating the $\CA_{\gamma''}$ vector for the tuple $\gamma''$ in $P(x',y)$ (note that $\gamma''$ is the corresponding triple in $P$ of $\gamma'$, before we subtract all the paths represented by $\gamma'$ level by level).\\
(1) - This task, which is more complex than the second task (which is a single step in the algorithm, see point (2) below), is accomplished in steps \ref{ffcleanup:dist1} to \ref{ffcleanup:newArr3}, Alg. \ref{algo:ffcleanup-vector} and uses the following technique.
In step \ref{ffcleanup:dist1} -- Alg. \ref{algo:ffcleanup-vector}, we store into the $\log n$-size array $C'$ the distribution over the active levels for the set of triples in $S_b$ that generates $\gamma'$ using the left extension to $x'$. In order to generate the correct vector $\CA_{\gamma'}$ (to associate with the triple $\gamma'$), we need to reshape the distribution in $\CA'$ according to the corresponding distribution of the triple $\gamma'' \in P$.  
The reshaping procedure works as follows: we first identify the oldest level $j$ in which the triple $\gamma''$ appeared in $P$ for the first time (Step \ref{ffcleanup:gpivot} -- Alg. \ref{algo:ffcleanup-vector}).
Recall that we want to remove $\gamma'$ paths containing $v$ from $\gamma''$, and $\gamma''$ does not exist in any level older than $j$. Vector $\CA'$ is the sum of $\CA_{\gamma}$ for all $\gamma \in S_b$ (Step \ref{ffcleanup:dist1} -- Alg. \ref{algo:ffcleanup-vector}). Those triples are of the form $(xa_i,by)$ and they could exist in levels older, equal or more recent than $j$. But the triples in $S_b$ that were present in a level older than $j$, were extended to $\gamma''$ in $P$ for the first time in level $j$. For this reason, step \ref{ffcleanup:newArr2} - Alg. \ref{algo:ffcleanup-vector} aggregates all the counts in $\CA'$ in levels older or equal $j$ in $\CA_{\gamma'}[j]$.
Moreover, for each level $i < j$, if a triple $\gamma \in S_b$ is present in the level graph $\Gamma_i$ with $count$ paths centered in level $i$, then $\levelgraph_i$ also contains its extension to $x'$ that is a subtriple of $\gamma''$ located in level $i$ with at least $count$ paths. Thus for each level $i < j$ step \ref{ffcleanup:newArr3} - Alg. \ref{algo:ffcleanup-vector}, copies the number of paths level-wise.   
This procedure allows us to precisely remove the LHPs only from the level graphs where they exist. After $\CA'$ is reshaped into $\CA_{\gamma'}$ (steps \ref{ffcleanup:newArr0} to \ref{ffcleanup:newArr3} - Alg. \ref{algo:ffcleanup-vector}), the algorithm returns this correct array for $\gamma'$ to Alg. \ref{algo:ffcleanup-process}.\\
(2) - This task is performed by the simple step \ref{ffcleanup:Pvecupd}, Alg. \ref{algo:ffcleanup-vector}, which is a subtraction level by level of LHPs.

After adding the new triple $\gamma'$ to $H_c$ (Step \ref{ffcleanup:addHcp} - Alg. \ref{algo:ffcleanup-process}), the algorithm continues as the \NPR (Steps \ref{ffcleanup:addHcp} to \ref{ffcleanup:remLR} -- Alg. \ref{algo:ffcleanup-process}) with some differences: we need to update centers, local data structures, $\DMs$, $\RN$ and $\LN$. We update the center of $\gamma'$ using $C_{\gamma'}$ (Step \ref{ffcleanup:setcen} - Alg. \ref{algo:ffcleanup-process}).
If $\gamma'$ is a shortest triple, we decrement the count of $\gamma' \in P^*(x',y)$ (Step \ref{ffcleanup:remPS} - Alg. \ref{algo:ffcleanup-process}). 
If $\gamma'$ is completely removed from $P^*(x',y)$ and $\DMs(x',y, wt')$ is not linked to any other tuple, we remove the entry with weight $wt'$ from $\DMs(x',y)$ (Step \ref{ffcleanup:remDM} - Alg. \ref{algo:ffcleanup-process}).
Moreover, we subtract the correct number of paths from each level using the (previously built) array $\CA_{\gamma'}$ (Step \ref{ffcleanup:updPri0} - Alg. \ref{algo:ffcleanup-process}). Finally for each active level $i$, if $\gamma'$ is removed from $P^*_i(x',y)$, we take care of the sets $L_i^*$ and $R_i^*$ (Steps \ref{ffcleanup:remLSs} to \ref{ffcleanup:remLSe} - Alg. \ref{algo:ffcleanup-process}). In the process, we also update $LC_i^*$ and $RC_i^*$ in case the endpoints of $\gamma'$ are centered in level $i$.
If $\gamma'$ is completely removed from $P(x',y)$, using the double links to the node $b$ in $\RN(x',y,wt')$, we check if there are other triples that use $b$ in $P(x',y)$ (Step \ref{ffcleanup:RN} - Alg. \ref{algo:ffcleanup-process}): if not we remove $b$ from $\RN(x',y,wt')$. 
A similar step handles $\LN(x',y,wt')$.

\subsection{Description of \ffullyfixup} \label{ffixup-desc}
\ffullyfixup is an extension of \fullyfixup rather than \NPRe. This is because of the presence of the control bit $\beta$ (defined in Section \ref{subsec:fully-impl}), and the need to process historical triples (that are not present in \NPRe).
Algorithm \ffullyfixup will efficiently maintain exactly the LSTs and STs for each level graph in the \systemend. This is in contrast to \fullyfixupend, which can maintain \LHTe s that are not LSTs in any level graph (PDG).
\ffullyfixup maintains a heap $H_f$ of candidate LHTs to be processed in min-weight order. The main phase (Alg. \ref{algo:fffixup}) is very similar to the fixup in our basic algorithm. The differences are again related to levels, centers and the new data structures. 

We start describing Algorithm \ref{algo:fffixup}.
We initialize $H_f$ by inserting the edges incident on the updated vertex $v$ with their updated weights (Steps \ref{algo:finit-f-start} to \ref{algo:finit-f-end} -- Alg. \ref{algo:ftrivial}), as well as a candidate min-weight triple from $P$ for each pair of nodes $(x,y)$ (Step \ref{fixup:phase2-begin} -- Alg. \ref{algo:ftrivial}). Then we process $H_f$ by repeatedly extracting collections of triples of the same min-weight for a given pair of nodes, until $H_f$ is empty (Steps \ref{ffixup:phase3-begin} to \ref{ffixup:phase3-end} -- Alg. \ref{algo:fffixup}). We will establish that the first set of triples for each pair $(x,y)$ always represents the shortest path distance from $x$ to $y$ (see Lemma \ref{fdfixcorr}), and the triple extracted are added to the \system if not already there (see Alg. \ref{algo:fget-n-paths} and Lemma \ref{proof:ffflem}). For efficiency, among all the triples present in the \system for a pair of nodes, we select only the ones that need to be extended: this task is performed by Algorithm \ref{algo:fget-n-paths} (this step is explained later in the description).
After the triples in $S$ are left and right extended by Algorithm \ref{algo:fprocess-f}, we set the bit $\beta(\gamma') = 1$ for each triple $\gamma'$ that is identified as shortest in $S$, since $\gamma'$ is correctly updated both in $P^*(x,y)$ and $P(x,y)$ (Step \ref{ffixup:setbit1} -- Alg. \ref{algo:fffixup}). Finally, we update the $\DMs(x,y)$ structure by inserting (or updating if an element with weight $wt$ is already present) the element with weight $wt$ and the current level at the end of the list (Step \ref{ffixup:addDM} -- Alg. \ref{algo:fffixup}). This concludes the description of Algorithm \ref{algo:fffixup}.

We now describe Algorithm \ref{algo:fget-n-paths} which is responsible to select only the triples that have valid extensions that will generate LHTs in the current graph. 
In Algorithm \ref{algo:fget-n-paths}, we distinguish two cases. When the set of extracted triples from $x$ to $y$ contains at least one path not containing $v$ (Step \ref{ffixup:phase3-main-check} -- Alg. \ref{algo:fget-n-paths}), then we process all the triples from $P(x,y)$ of the same weight. Otherwise, if all the paths extracted go through $v$ (Step \ref{ffixup:phase3-nomain-check} -- Alg. \ref{algo:fget-n-paths}), we only use the triples extracted from $H_f$. 

\begin{algorithm}
   \SetAlgoLined
   \SetAlgoNoEnd
	$H_f \leftarrow \emptyset$; \MT$\leftarrow \emptyset$\; \label{ffixup:init-empty}
	\ffullypopulate$(v, \weight',k)$\; \label{ffixup:phase2}
	\While {$H_f \neq \emptyset$}{ \label{ffixup:phase3-begin}
	extract in $S'$ all the triples with min-key $[wt,x,y]$ from $H_f$\; \label{ffixup:phase3-extract1}
	\If {$S'$ is the first extracted set from $H_f$ for $x,y$}{  \label{ffixup:phase3-first-ext}
		$S \leftarrow$ \ffullygetnew($S',P(x,y)$)\;
		\ffullyfixupend-$\ell$-extend($S$,$[wt,x,y]$) (see Algorithm \ref{algo:fprocess-f})\; \label{ffixup:lext}
		\ffullyfixupend-$r$-extend($S$,$[wt,x,y]$)\; \label{ffixup:rext}
		for every $\gamma \in S$ set $\beta(\gamma)=1$\; \label{ffixup:setbit1}
		add an element with weight $wt$ and level $k$ to $\DMs(x,y)$ or update the level in the existing one\; \label{ffixup:addDM}
	}
	} \label{ffixup:phase3-end}
	\caption{\ffullyfixupend$(v, \weight', k)$}
	\label{algo:fffixup}
\end{algorithm}

\begin{algorithm}
   \SetAlgoLined
   \SetAlgoNoEnd
	\For {each $(u, v)$}{
	$\weight(u,v) = \weight'(u,v)$\; \label{algo:finit-f-start}
	\If {$\weight(u,v) < \infty$ }{
		$\gamma = ((uv,uv),\weight(u,v),1)$; $C_\gamma[k] \leftarrow 1$\;
		update-num($\gamma$) $\leftarrow$ curr-update-num; num-v-paths($\gamma$) $\leftarrow 1$\;
		add $[\gamma, C_{\gamma}]$ to $H_f$ and $P(u,v)$\;
		add $u$ to $L(-,vv)$ and $v$ to $R(uu,-)$\;
	} \label{algo:finit-f-end}
	}
	\For {each $(v,u)$}{
	symmetric processing as Steps~\ref{algo:finit-f-start}--\ref{algo:finit-f-end} above\;
	}
	\For {each $x, y \in V$}{ \label{fixup:phase2-begin}
		add a min-key triple $[\gamma, C_{\gamma}] \in P(x,y)$ to $H_f$\;
	} \label{fixup:phase2-end}
	\caption{\ffullypopulate$(v, \weight',k)$}
	\label{algo:ftrivial}
\end{algorithm}

\begin{algorithm}
	   \SetAlgoLined
	   \SetAlgoNoEnd
	$S \leftarrow \emptyset$; let $i$ be the min-weight level associated with $\DMs(x,y)$\;
	\eIf {$P^*(x,y)$ increased min-weight after cleanup}{  \label{ffixup:phase3-main-check}
	\For {each $\gamma' \in S$ with-key $[wt,0]$}{ \label{ffixup:phase3-addfromP-begin}
		let $\gamma' = ((xa', b'y), wt, count')$ and $j=\arg\!\min_j(\CA_{\gamma'}[j] \neq 0)$\; \label{ffixup:newcenter}
		\uIf {$\gamma'$ is not in $P^*(x,y)$}{                	
		add $\gamma'$ in $P^{*}(x,y)$ and $S$; add $x$ to $L^{*}(a',y)$ and $y$ to $R^{*}(x,b')$\;
		add $b'$ to $\RN(x,y,wt)$; place a double link between $\gamma'$ and $\DMs(x,y,wt)$\;
	}
	\uElseIf {$\gamma'$ is in $P(x,y)$ and $P^*(x,y)$ with different counts}{ \label{fffixup:new1} 
		replace the count of $\gamma'$ in $P^{*}(x,y)$ with $count'$ and add $\gamma'$ to $S$\; \label{fffixup:new2} 
		}
		add $\gamma'$ to $P^*_j(x,y)$ and $dict_j$\; \label{ffixup:phase3-addj}
		add $x$ to $L^*_j(a',y)$ and $y$ to $R^*_j(x,b')$\;
		add $x$ to $LC^*_j(a',y)$ ($y$ to $RC^*_j(x,b')$) if $x$ ($y$) is a level $i$ center\; \label{ffixup:phase3-addPe}
		add $\gamma'$ in $S$\;
	}
	\For {each $b' \in \RN(x,y,wt)$}{ \label{ffixup:RNstart}
		\If {$\exists \, h<i : L_h^*(x,b')\neq \emptyset$}{
		add any $\gamma'$ of the form $(x \times,b'y)$ and weight $wt$ in $P^*(x,y)$ with $\beta(\gamma')=1$ to $S$\; \label{ffixup:RNend}
		}
	}
	}{
		\For {each $\gamma' \in S'$ containing a path through $v$}{ \label{ffixup:phase3-nomain-check}
		let $\gamma' = ((xa', b'y), wt, count')$ and $k$ the current level\;
		add $\gamma'$ with paths$(\gamma',v)$ to $P^*(x,y)$, and $[\gamma', \CA_{\gamma'}]$ to $S$\; \label{ffixup:phase3-add2PStar1}
		add $\gamma'$ to $P^*_k(x,y)$ and $dict_k$, $x$ to $L^*_k(a',y)$ and $y$ to $R^*_k(x,b')$\;
		\label{ffixup:phase3-addk}
		add $x$ to $LC^*_k(a',y)$ ($y$ to $RC^*_k(x,b')$) if $x$ ($y$) is a level $k$ center\;
		\label{ffixup:phase3-add2LRCStar1}
		} \label{ffixup:phase3-addfromX-end}	
	}
	return $S$\;
	\caption{\ffullygetnew($S', P_{xy}$)}
	\label{algo:fget-n-paths}
\end{algorithm}

Both cases have a similar approach but here we focus on the former which is more involved than the latter. 
As soon as we identify a new triple $\gamma'$ we compute its center $j$ by using its associated array $\CA_{\gamma'}$ (Step \ref{ffixup:newcenter} -- Alg. \ref{algo:fget-n-paths}). This is straightforward if compared to \ffullycleanup where we first need to update the center arrays. We add this triple to $P^*(x,y)$ and to $S$, which contains the set of triples that need to be extended.
We also add $\gamma'$ to $P_j^*(x,y)$ (Steps \ref{ffixup:phase3-addj} and \ref{ffixup:phase3-addk} -- Alg. \ref{algo:fget-n-paths}). We update $dict_j$ to keep track of the locations of the triple in the global structures. 
A similar sequence of steps takes place when all the extracted paths go through $v$ (Steps \ref{ffixup:phase3-nomain-check} to \ref{ffixup:phase3-addfromX-end} -- Alg. \ref{algo:fget-n-paths}). The only difference is that the local data structures to be updated are only the $\levelgraph_k$ data stuctures (Steps \ref{ffixup:phase3-addk} and \ref{ffixup:phase3-add2LRCStar1} -- Alg. \ref{algo:fget-n-paths}). 

A crucial difference from \fullyfixup and this algorithm is the way we collect the set $S$ of triples to be extended. Here we require the new HE data structures $\RN$ and $\LN$ (see Section \ref{sec:newfeat}) because of PEP instances (see Section \ref{sec:ffeatures}).
Let $i$ be the min-weight level associated with $DL(x,y)$.
For each node $b \in \RN(x,y,wt)$ we check if $L_h^*(x,b)$ contains at least one extension, for every $h < i$ (Steps \ref{ffixup:RNstart} to \ref{ffixup:RNend} -- Alg. \ref{algo:fget-n-paths}). In fact we need to discover all tuples with $\beta=1$ that are inside a PEP instance. In this instance, the triples restored as STs may or may not be extended. We cannot afford to look at all of them, thus our solution should check only the triples with an available extension. Moreover, all the extendable triples with with $\beta=1$ have extension only in levels younger than the level where they last appear as STs. 
Thus, we check for extensions only in the levels $h < i$. 

Using the HE sets, is the key to avoid an otherwise long search of all the valid extensions for the set of examined triples with $\beta=1$. In particular, without the HE sets, the algorithm could waste time by searching for extensions that are not even in the \systemend.
Correctness of this method is proven in section \ref{sec:proof}.
After the algorithm collects the set $S$ of triples that can be extended, \ffullyfixup calls \ffullyfixupend-$\ell$-extend (Alg. \ref{algo:fprocess-f}).

\begin{algorithm}
   \SetAlgoLined
   \SetAlgoNoEnd
	\For {every $b$ such that $(x\times,by) \in S$}{
		let $S_b \subseteq S$ be the set of all triples of the form $(x\times,by)$\;
		let $fcount'$ be the sum of all the $counts$ of all triples in $S_b$; let $h$ be the $center(S_b)$\;
		\eIf {$\exists \, \gamma \in S_b : \beta(\gamma) = 0$}{
		let $j$ be the level associated to the minweight $wt'>wt$ in $\DMs(x,y)$\; \label{fprocess-f:DM}
		\For {every active level $h \leq i<j$}{
		\For {every $x'$ in $L_i^{*}(x,b)$}{ \label{fprocess-f:exts}
		\If {$(x'x,by) \notin$ \MT}{ \label{fprocess-f:start}
		$wt' \leftarrow wt+\weight(x',x)$; $\gamma' \leftarrow ((x'x,by),wt', fcount')$\; \label{fprocess-f:Lexts}
		$\CA_{\gamma'} \leftarrow$ \ffullyfcentersend($S_b$); add $\gamma'$ to $H_f$\; \label{fprocess-centers}
		\eIf {a triple $\gamma''$ for $((x'x,by),wt')$ exists in $P(x',y)$}{
		update the count of $\gamma''$ in $P(x',y)$ and $\CA_{\gamma''} = \CA_{\gamma''} + \CA_{\gamma'}$\;
		add $(x'x,by)$ to \MT\;
		}{
		add $[\gamma',\CA_{\gamma'}]$ to $P(x',y)$; add $x'$ to $L(x,by)$ and $y$ to $R(x'x,b)$\; \label{fprocess-f:addL}
		}
		set $\beta(\gamma')=0$; set update-num($\gamma'$)\; \label{fprocess-f:end}
		}
		}
	} \label{fprocess-f:exte}
		\For {every level $i < h$}{ \label{fprocess-f:extihs}
		\For {every $x'$ in $LC_i^{*}(x,b)$}{
		execute steps \ref{fprocess-f:start} to \ref{fprocess-f:end}\;
		}
		} \label{fprocess-f:extihe}
	}{
		let $j$ be the level associated to the minweight $wt$ in $\DMs(x,y)$\; \label{fprocess-f:DM1} \label{fprocess-f:extis}
		\For {every level $i < j$}{ 
		\For {every $x'$ in $LC_i^{*}(x,b)$}{
		execute steps \ref{fprocess-f:start} to \ref{fprocess-f:end}\;
		}
		} \label{fprocess-f:extie}
	}
	}
	\caption{\ffullyfixupend-$\ell$-extend($S$,$[wt,x,y]$)}
	\label{algo:fprocess-f}
\end{algorithm}

\begin{algorithm}
   \SetAlgoLined
   \SetAlgoNoEnd
		let $\CA'=\sum_{\gamma \in S_b}{\CA_{\gamma}}$ be the sum (level by level) of the new paths that are found shortest\;
		let $j$ be $\arg\!\max_j(\CA'[j]\neq 0)$, and $k=center(x')$\;	\label{fcenter:level}
		\eIf {$k < j$}{
		for all the levels $i<k$ we set $\CA_{\gamma'}[i] = \CA'[i]$\; \label{fcenter:ress}
		for the level $k$ we set $\CA_{\gamma'}[k] = \sum_{q = k}^{r-1}{\CA'[q]}$\;
		for all the levels $m>k$ we set $\CA_{\gamma'}[m] = 0$\; \label{fcenter:rese}
		}{
		$\CA_{\gamma'} = \CA'$\; \label{fcenter:nores}
		}	
		return $\CA_{\gamma'}$\;
	\caption{\ffullyfcentersend$(S_b)$}
	\label{algo:fffixup-centers}
\end{algorithm}

Here we describe the details of algorithm \ref{algo:fprocess-f}.
Its goal is to generate LHTs for the current graph $G$ by extending HTs.
Let $h$ be the \emph{center of $S_b$} defined as the most recent center among all the triples in $S_b$, and let $j$ be the level associated to the first weight $wt'$ larger than $wt$ in $\DMs(x,y)$.
The extension phase for triples is different from \fullyfixupend: in fact, the set of triples $S_b$ could contain only triples with $\beta(\gamma)=1$. In \fullyfixupend, the corresponding set $S_b$ contains only triples with $\beta(\gamma)=0$. We address two cases:\\ 
(\textbf{a}) -- If $S_b$ contains at least one triple $\gamma$ with $\beta(\gamma)=0$, we extend $S_b$ using the sets $L_i^{*}$ and $R_i^{*}$ with $h \leq i < j$ (Steps  \ref{fprocess-f:exts} to \ref{fprocess-f:exte} -- Alg.\ref{algo:fprocess-f}). In fact, the set $S_b$ contains at least one new path that was not extended in the previous iterations when $wt$ was the shortest distance from $x$ to $y$ (because of the $\beta(\gamma)=0$ triple). The LST generated in this way remains centered in level $h$. Moreover we extend $S_b$ also using the sets $LC_i^{*}$ and $RC_i^{*}$ with $i < h$ (Steps \ref{fprocess-f:extihs} to \ref{fprocess-f:extihe} -- Alg.\ref{algo:fprocess-f}). This ensures that every LST generated in a level $i$ lower than $h$ is centered in $i$ thanks to the extension node itself.  
This technique guarantees that each LHT generated by Algorithm \ref{algo:fprocess-f} is an LST centered in a unique level.\\
(\textbf{b}) -- In the case when there is no triple $\gamma$ in $S_b$ with $\beta(\gamma)=0$, then there is at least one extension to perform for $S_b$ and it must be in some level younger than the level where $wt$ stopped to be the shortest distance from $x$ to $y$ (this follows from the use of the HE sets in Alg. \ref{algo:fget-n-paths}). To perform these extensions we set $j$ as the level associated with the min-weight element in $\DMs(x,y)$, and we extend $S_b$ using the sets $LC_i^{*}$ and $RC_i^{*}$ with $i < j$ (Steps \ref{fprocess-f:extis} to \ref{fprocess-f:extie} -- Alg.\ref{algo:fprocess-f}). Again, every LHT generated is an LST centered in a unique level.
Finally, every generated LHT is added to $P$ and $H_f$ and we update global $L$ and $R$ structures. 
\begin{observation} \label{centerlht}
Every LHT generated by algorithm \ffullyfixup is an LST centered in a unique level graph.
\end{observation}
\begin{proof}
As described in (a) and (b) above, every LHT is generated using two triples which are shortest in the same level graph $\levelgraph_i$. Moreover, since at least one of them must be centered in level $i$, the resulting LHT is an LST centered in level $i$.
\end{proof}	

The last novelty in the algorithm is updating center arrays (Alg. \ref{algo:fffixup-centers} called at step \ref{fprocess-centers} -- Alg. \ref{algo:fprocess-f}) in a similar way of \ffullycleanupend: Algorithm \ref{algo:fffixup-centers} identifies the oldest level $j$ related to the triples contained in $S_b$ (Step \ref{fcenter:level} -- Alg. \ref{algo:fffixup-centers}). 
If $j > k$ then we reshape the distribution for $\gamma'$ similarly to \ffullycleanup (Steps \ref{fcenter:ress} to \ref{fcenter:rese} -- Alg. \ref{algo:fffixup-centers}). Otherwise $\gamma'$ is completely contained in level $k$ and no reshaping is required (Step \ref{fcenter:nores} -- Alg. \ref{algo:fffixup-centers}).

\subsection{Correctness of \FFDe} \label{sec:proofcorr}
For the correctness, we assume that all the global and local data structures are correct before the update, and we will show the correctness of them after the update.

\paragraph{\bf Correctness of Cleanup} The correctness of \ffullycleanup is established in Lemma \ref{lemma:fclean1}. We will prove that all paths containing the updated vertex $v$ are removed from the \systemend. Moreover, the center of each triple is restored, if necessary, to the level containing the most recently updated node on any path in this triple. 
Note that (as in \cite{DI04,NPR14b}) at the end of the cleanup phase, the global structures $P$ and $P^*$ may not have all the LHTs in $G \setminus \{v\}$.

\begin{lemma} \label{lemma:fclean1}
	At the end of the cleanup phase triggered by an update on a vertex $v$, every LHP that goes through $v$ is removed from the global structures. Moreover, in each level graph $\levelgraph_i$, each SP that goes through $v$ is removed from $P^*_i$. For each level $i$, the local structures $L_i^*$, $R_i^*$, $RC_i^*$ and $LC_i^*$ contain the correct extensions; the global structures $L$ and $R$ contain the correct extensions, for each $r$-tuple and $\ell$-tuple respectively, and the structures $\RN$ and $\LN$ contain only nodes associated with tuples in $P$. The $\DMs$ structure only contains historical distances represented by at least one path in the updated graph.
	Finally, every triple in $P$ and $P^*$ has the correct updated center for the graph $G \setminus \{v\}$.
\end{lemma}

\begin{proof}
The lemma is established with the following loop invariant. For more details and a full proof see \cite{PR15}.
	
	{\noindent \bf Loop Invariant:} At the start of each iteration of the while loop in Step~\ref{ffcleanup:while} of Algorithm~\ref{algo:ffcleanup},
	assume that the first triple to be extracted from $H_c$ and processed has min-key = $[wt, x, y]$. Then the following properties hold about the \system and $H_c$.
	\begin{enumerate}
		\item \label{fproof:item1} For any $a, b \in V$, if $G$ contains $c_{ab}$ LHPs of weight ${wt}$ of the form $(xa, by)$
		passing through $v$,
		then $H_c$ contains a triple $\gamma = ((xa, by), { wt}, c_{ab})$ with key $[wt,x,y]$ already processed: the $c_{ab}$ LHPs through $v$ are not present in the \systemend.
		
		\item \label{fproof:claim2} Let  $[\hat{wt},\hat{x},\hat{y}]$ be the last key extracted from $H_c$ and processed before $[wt,x,y]$. For any key $[wt_1, x_1, y_1] \leq [\hat{wt},\hat{x},\hat{y}]$, let 
		$G$ contain ${ c}  > 0 $ number of LHPs of weight  ${ wt_1}$ of the
		form $(x_1 \times, b_1y_1)$. Further, let ${ c_v}$ (resp. ${ c_{\bar v}}$) denote the number of such LHPs
		that pass through $v$ (resp. do not pass through $v$).
		Here ${ c_v + c_{\bar v} = c}$. For every extension $x' \in L(x_1, b_1y_1)$, let $wt' = wt_1 + \weight (x',x_1)$ be the weight of the extended triple $(x'x_1, b_1y_1)$. Then, (the following assertions are similar for $y' \in R(x_1a_1, y_1)$)\\
		\textbf{Global Data Structures: }
		\begin{enumerate}
			\item \label{fproof:item2} if $c>c_{v}$ there is a triple in $P(x', y_1)$ 
			of the form $(x'x_1, b_1y_1)$ and weight $wt'$ representing $c-c_v$ LHPs. Moreover, its center is updated according to the last update on any path represented by the triple. If $c=c_v$ there is no such triple in $P(x', y_1)$.
			\item \label{fproof:item21}
			If a triple of the form $(x'x_1, b_1y_1)$ and weight $wt'$ is present as an HT in $P^*(x', y_1)$, then it represents the exact same number of LHPs $c-c_v$ of the corresponding triple in $P(x',y_1)$. This is exactly the number of HPs of the form $(x'x_1, b_1y_1)$ and weight $wt'$ in $G \setminus \{v\}$.
			\item \label{fproof:item3} $x' \in L(x_1, b_1y_1)$,  $y_1 \in R(x'x_1, b_1)$, and    $(x'x_1, b_1y_1) \in $ \MT
			iff ${ {c_{\bar v}} > 0}$. 
			\item \label{fproof:item4} A triple corresponding to $(x' x_1, b_1y_1)$
			with weight $wt'$ and counts $c_v$ is in $H_c$.
			A similar assertion holds for $y' \in R(x_1a_1, y_1)$.
			\item \label{fproof:item6} The structure $\RN(x',y_1,wt')$ contains a node $b$ iff at least one path of the form $(x' \times,by_1)$ and weight $wt'$ is still represented by a triple in $P(x', y_1)$. A similar assertion holds for a node $a$ in $\LN(x',y_1,wt')$.
			\item \label{fproof:item7} If there is no HT of the form $(x'x, b_1y_1)$ and weight $wt'$ in $P^*(x', y_1)$ then the entry $\DMs(x', y_1)$ with weight $wt'$ does not exists.
		\end{enumerate}
		\textbf{Local Data Structures:} for each level $j$, let $c_j$ be the number of LSPs of the form $(x'x_1, b_1y_1)$ and weight $wt'$ centered in $\levelgraph_j$ and let $c_j(v)$ be the ones that go through $v$. Thus $c=\sum_{j}{c_j}$ and $c_v=\sum_{j}{c_j(v)}$. Then,
		\begin{enumerate}[resume]
			\item \label{fproof:litem1} the value of $\CA_\gamma[j]$, where $\gamma$ is the triple of the form $(x'x_1, b_1y_1)$ and weight $wt'$ in $P(x',y_1)$, is $c_j-c_j(v)$. 
			\item \label{fproof:litem2}
			If a triple $\gamma$ of the form $(x'x_1, b_1y_1)$ and weight $wt'$ is present as an HT in $P^*$, then $P_j^*(x',y_1)$ represents only $c_j-c_j(v)$ paths. If $c_j-c_j(v)=0$ then the link to $\gamma$ is removed from $dict_j$. Moreover, $x' \in L_j^*(x_1, y_1)$ (respectively $LC_j^*(x_1, y_1)$ if $x'$ is centered in $\levelgraph_j$) iff $x'$ is part of a shortest path of the form $(x'x_1, \times y_1)$ centered in $\levelgraph_j$. A similar statement holds for $y_1 \in R_j^*(x', b_1)$ (respectively $RC_j^*(x', b_1)$ if $y_1$ is centered in $\levelgraph_j$).
		\end{enumerate}
		\item \label{fproof:item5} For any key $[wt_2, x_2, y_2 ] \geq [wt, x, y]$, let
		$G$ contain $c > 0$ LHPs of weight  ${ wt_2}$ of the
		form $(x_2a_2, b_2y_2)$. 
		Further, let ${ c_v}$ (resp. ${ c_{\bar v}}$) denote the number of such LHPs
		that pass through $v$ (resp. do not pass through $v$).
		Here ${ c_v + c_{\bar v} = c}$. 
		Then the tuple $(x_2 a_2, b_2 y_2) \in$ \MTend, iff $c_{\bar v} > 0$ and a triple for
		$(x_2a_2, b_2 y_2)$ is present in $H_c$
	\end{enumerate}
\end{proof}

\paragraph{\bf Correctness of Fixup} For the fixup phase, we need to show that the triples generated by our algorithm are sufficient to maintain all the ST and LST in the current graph $G$.
As in our basic algorithm, we first show in the following lemma that \ffullyfixup computes all the correct distances for each pair of nodes in the updated graph. 
Finally, we show that data structures and counts are correctly maintained at the end of the algorithm (Lemma \ref{proof:ffflem}).
\begin{lemma} \label{fdfixcorr} 
	For every pair of nodes $(x,y)$, let $\gamma = ((xa, by), wt, count)$ be one of the min-weight triples from $x$ to $y$ extracted from $H_f$ during \ffullyfixupend. Then $wt$ is the shortest path distance from $x$ to $y$ in $G$ after the update.
\end{lemma}

\begin{proof}
	Suppose that the lemma is violated. Thus, there will be an extraction from $H_f$ during \ffullyfixup such that the set of extracted triples $S'$, of weight $\hat{wt}$ is not shortest in $G$ after the update. Consider the earliest of these events when $S'$ is extracted from $H_f$. Since $S'$ is not a set of STs from $x$ to $y$, there is at least one shorter tuple from $x$ and $y$ in the updated graph. 
	Let $\gamma' = ((xa',b'y),wt,count)$ be this triple that represents at least one shortest path from $x$ to $y$, with $wt < \hat{wt}$. 
	Since $S'$ is extracted from $H_f$ before any other triple from $x$ to $y$, $\gamma'$ cannot be in $H_f$ at any time during \ffullyfixupend. Hence, it is also not present in $P(x,y)$ as an LST at the beginning of the algorithm, otherwise it (or another triple with the same weight) would be placed in $H_f$ by step \ref{ffixup:phase2} - Alg. \ref{algo:fffixup}. Moreover, if $\gamma'$ is a single edge (trivial triple), then it was already an LST in $G$ present in $P(x,y)$ before the update, and it is added to $H_f$ by step \ref{fixup:phase2-begin} - Alg. \ref{algo:ftrivial}; moreover since all the edges incident to $v$ are added to $H_f$ during steps~\ref{algo:finit-f-start} to \ref{algo:finit-f-end} of Alg.~\ref{algo:ftrivial}, then $\gamma'$ must represent SPs of at least two edges. We define $left(\gamma')$ as the set of LSTs of the form $((xa',c_ib'), wt-\weight(b',y), count_{c_i})$ that represent all the LSPs in the left tuple $((xa',b'), wt-\weight(b',y))$; similarly we define $right(\gamma')$ as the set of LSTs of the form $((a'd_j,b'y), wt-\weight(x,a'), count_{d_j})$ that represent all the LSPs in the right tuple $((a',b'y), wt-\weight(x,a'))$. 
	
	Observe that since $\gamma'$ is an ST, all the LSTs in $left(\gamma')$ and $right(\gamma')$ are also STs. A triple in $left(\gamma')$ and a triple in $right(\gamma')$ cannot be present in $P^*$ together at the beginning of \ffullyfixupend. 
	In fact, if at least one triple from both sets is present in $P^*$ at the beginning of \ffullyfixupend, then the last one inserted during the fixup phase triggered during the previous update, would have generated an LST of the form $((xa',b'y), wt)$ automatically inserted, and thus present, in $P$ at the beginning of the current fixup phase (a contradiction). Thus either there is no triple represented by $left(\gamma')$ in $P^*$, or there is no triple represented by $right(\gamma')$ in $P^*$.
	
	Assume w.l.o.g. that the set of triples in $right(\gamma')$ is placed into $P^*$ after $left(\gamma')$ by \ffullyfixupend. 
	Since edge weights are positive, $wt-\weight(x,a') < wt < \hat{wt}$, and because all the extractions before $\gamma$ were correct, then the triples in $right(\gamma')$ were correctly extracted from $H_f$ and placed in $P^*$ before the wrong extraction of $S'$.
	Let $i$ be the level in which $left(\gamma')$ is centered, and let $j$ be the level in which $right(\gamma')$ is centered.
	By the assumptions, all the triples in $left(\gamma')$ are in $P^*$ and we need to distinguish 3 cases:
	\begin{enumerate}
		\item if $j=i$, then \ffullyfixup generates the tuple $((xa',b'y),wt)$ in the same level and place it in $P$ and $H_f$.
		\item if $i>j$, the algorithms \ffullyfixup extends the set $right(\gamma')$ to all nodes in $L_i^*(a',b')$ for every $i \geq j$ 
		(see Steps \ref{fprocess-f:exts} to \ref{fprocess-f:end} - Alg. \ref{algo:fprocess-f}). Thus, since  $left(\gamma')$ is centered in some level $i > j$, the node $x$ is a valid extension in $L_i^*(a',b')$, making the generated $\gamma'$ an LST in $\levelgraph_j$ that will be placed in $P(x,y)$ and also into $H_f$ (during Step \ref{fprocess-centers} - Alg. \ref{algo:fprocess-f}).
		\item if $j>i$, then $x$ was inserted in a level younger than $i$. In fact, all the paths from $a'$ to $b'$ must be the same in $right(\gamma')$ and $left(\gamma')$ otherwise the center of $right(\gamma')$ should be $i$. Hence, the only case when $j>i$ is when the last update on $left(\gamma')$ is on the node $x$ in a level $i$ younger than $j$. Thus $x \in LC_i^*(a',b')$. But \ffullyfixup extends $right(\gamma')$ to all nodes in $LC_i^*(a',b')$ for every $i < j$, placing the generated LST $\gamma'$ in $P(x,y)$ and also into $H_f$  
		(see Steps \ref{fprocess-f:extihs} to \ref{fprocess-f:extihe} - Alg. \ref{algo:fprocess-f}).
	\end{enumerate} 
	
	Thus the algorithm would generate the tuple $((xa',b'y),wt)$ (as a left extension) and place it in $P$ and $H_f$ (because all the triples in $left(\gamma')$ are already in $P^*$). Therefore, in all cases, a tuple $((xa',b'y),wt)$ should have been extracted from $H_f$ before any triple in $S'$. A contradiction.
\end{proof}

\begin{lemma} \label{proof:ffflem}
	After the execution of \ffullyfixupend, for any $(x, y) \in V$, the sets $P^{*}(x,y)$ ($P(x,y)$) contains all the SPs (LSPs) from $x$ to $y$ in the updated graph.
	Also, the global structures $L, R$ and the local structures $P_i^*, L_i^{*},R_i^{*},LC_i^{*},RC_i^{*}$ and $dict_i$ for each level $i$ are correctly maintained. The structures $\RN$ and $\LN$ are updated according to the newly identified tuples. The $\DMs$ structure contains the updated distance for each pair of nodes in the current graph. Finally, the center of each new triple is updated.
\end{lemma}
\begin{proof}
	We prove the lemma statement by showing the following loop invariant.
	For more details and a full proof see \cite{PR15}.
	Let $G'$ be the graph after the update.
	
	{\noindent \bf Loop Invariant:} 
	At the start of each iteration of the while loop in Step~\ref{ffixup:phase3-begin} of \ffullyfixupend,
	assume that the first triple in $H_f$ to be extracted and processed has min-key = $[wt, x, y]$. Then the following properties hold about the \system and $H_f$.
	\begin{enumerate}
		\item \label{fproof:fitem1} For any $a, b \in V$, if $G'$ contains $c_{ab}$ SPs of form $(xa, by)$ and weight ${wt}$,
		then $H_f$ contains a triple of form $(xa, by)$ and weight $wt$ to be extracted and processed. Further, a triple $\gamma = ((xa, by), { wt}, c_{ab})$ is present in $P(x,y)$. 
		\item Let $[\hat{wt},\hat{x},\hat{y}]$ be the last key extracted from $H_f$ and processed before $[wt,x,y]$. For any key $[wt_1, x_1, y_1] \leq [\hat{wt},\hat{x},\hat{y}]$, let
		$G'$ contain ${ c}  > 0 $ number of LHPs of weight  ${ wt_1}$ of the
		form $(x_1a_1, b_1y_1)$. Further, let ${ c_{new}}$ (resp. ${ c_{old }}$) denote the number of these LHPs
		that are {\em new} (resp. not {\em new}).
		Here ${ c_{new} + c_{old} = c}$. If $c_{new} > 0$ then,\\
		\textbf{Global Data Structures:}
		\begin{enumerate}
			\item \label{fproof:fitem2}
			there is an LHT $\gamma$ in $P(x_1, y_1)$ of the form $(x_1a_1, b_1y_1)$ and weight $wt_1$ that represents $c$ LHPs, with an updated center defined by the last update on any of the paths represented by the LHT. 
			\item \label{fproof:fitem21}
			If a triple of the form $(x_1a_1, b_1y_1)$ and weight $wt_1$ is present as an HT in $P^*$, then it represents the exact same count of $c$ HPs of its corresponding triple in $P$. This is exactly the number of HPs of the form $(x_1a_1, b_1y_1)$ and weight $wt_1$ in $G'$. Its control bit $\beta$ is set to 1.
			\item \label{fproof:fitem3} $x_1 \in L(a_1, b_1y_1)$, $y_1 \in R(x_1a_1, b_1)$.
			Further, $(x_1a_1, b_1y_1) \in $ \MT
			iff ${ {c_{old}} > 0}$.
			\item \label{fproof:fitem4} If $\beta(\gamma)=0$ or $\beta(\gamma)=1$ and there is an extension $x' \in L_j^*(x_1, y_1)$ that generates a centered LST in a level $j$, an LHT corresponding to $(x' x_1, b_1y_1)$
			with weight $wt'~=~wt_1~+~\weight (x',x_1) \geq wt$ and counts equal to the sum of new paths represented by its constituents, is in $H_f$ and $P$. 
			A similar assertion holds for an extension $y' \in R_j^*(x_1, y_1)$.
			\item \label{fproof:fitem6} The structure $\RN(x_1,y_1,wt_1)$ contains a node $b$ iff at least one path of the form $(x_1 \times,by_1)$ and weight $wt_1$ is represented by a triple in $P(x_1, y_1)$. A similar assertion holds for a node $a$ in $\LN(x_1,y_1,wt_1)$.
			\item \label{fproof:fitem7} The entry $\DMs(x_1, y_1)$ with weight $wt_1$ is updated to the current level.
		\end{enumerate}
		\textbf{Local Data Structures:} for each level $j$, let $c_j$ be the number of SPs of the form $(x_1a_1, b_1y_1)$ and weight $wt_1$ centered in $\levelgraph_j$ and let $c_j(n)$ be the new ones discovered bythe algorithm. Thus $c=\sum_{j}{c_j}$ and $c_{new}=\sum_{j}{c_j(n)}$. Then,
		\begin{enumerate}[resume]
			\item \label{fproof:flitem2} the value of $\CA_\gamma[j]$, where $\gamma$ is the triple of the form $(x_1a_1, b_1y_1)$ and weight $wt_1$ in $P(x_1,y_1)$, is $c_j$. 
			\item \label{fproof:flitem21}
			If a triple $\gamma$ of the form $(x_1a_1, b_1y_1)$ and weight $wt_1$ is present as an HT in $P^*$, then $P_j^*(x1,y1)$ represents $c_j$ paths. A link to $\gamma$ in $P$ is present in $dict_j$. Moreover, $x_1 \in L_j^*(a_1, y_1)$ (respectively $LC_j^*(a_1, y_1)$ if $x_1$ is centered in $\levelgraph_j$). A similar statement holds for $y_1 \in R_j^*(x_1, b_1)$ (respectively $RC_j^*(x_1, b_1)$ if $y_1$ is centered in $\levelgraph_j$).
		\end{enumerate}
		\item \label{fproof:fitem5} For any key $[wt_2, x_2, y_2 ] \geq [wt, x, y]$, let
		$G'$ contain $c > 0$ number of LHPs of weight  ${ wt_2}$ of the
		form $(x_2a_2, b_2y_2)$. Further, let ${ c_{new}}$ (resp. ${ c_{old }}$) denote the number of such LHPs
		that are {\em new} (resp. not {\em new}).
		Here ${ c_{new} + c_{old} = c}$. Then the tuple $(x_2 a_2, b_2 y_2) \in$ \MTend, iff 
		$c_{old} > 0$ and $c_{new}$ paths have been added to $H_f$ by some earlier iteration of the while loop.
	\end{enumerate}
	
	\noindent
	By Lemma \ref{fdfixcorr}, all the SP distances in $G'$ are placed in $H_f$ and processed by the algorithm. Hence, after Algorithm~\ref{algo:fffixup} is executed, every SP in $G'$ is in its corresponding $P^*$ by the invariant of Lemma~\ref{proof:ffflem}.   
	Since every LST of the form $(xa,by)$ in $G'$ is formed by a left extension of a set of STs of the form $(a\times,by)$ (Step \ref{ffixup:lext} -  Algorithm~\ref{algo:fffixup}), or a right extension of a set of the form $(xa,\times b)$ (analogous steps for right extensions), and all the STs are correctly maintained and extendend (by the invariant of Lemma \ref{proof:ffflem}), then all the LSTs are correctly maintained at the end of \ffullyfixupend. This completes the proof of the Lemma.
\end{proof}


\bibliographystyle{plain}
\bibliography{references,refs2}

\begin{thebibliography}{10}

\bibitem{Abraham}
Ittai Abraham, Shiri Chechik, and Sebastian Krinninger.
\newblock Fully dynamic all-pairs shortest paths with worst-case update-time
  revisited.
\newblock In {\em Proceedings of the Twenty-Eighth Annual ACM-SIAM Symposium on
  Discrete Algorithms}, SODA '17, pages 440--452, Philadelphia, PA, USA, 2017.
  Society for Industrial and Applied Mathematics.

\bibitem{bandelt}
Hans-Jürgen Bandelt and Henry~Martyn Mulder.
\newblock Interval-regular graphs of diameter two.
\newblock {\em Discrete Mathematics}, 50(0):117 -- 134, 1984.

\bibitem{Bergamini2}
Elisabetta Bergamini and Henning Meyerhenke.
\newblock Fully-dynamic approximation of betweenness centrality.
\newblock arXiv:1504.07091 [cs.DS], 2015.

\bibitem{Bergamini1}
Elisabetta Bergamini, Henning Meyerhenke, and Christian~L. Staudt.
\newblock Approximating betweenness centrality in large evolving networks.
\newblock In {\em Proc. of ALENEX 2015}, chapter~11, pages 133--146. SIAM, San
  Diego, USA, 2015.

\bibitem{Brandes01}
U.~Brandes.
\newblock A faster algorithm for betweenness centrality.
\newblock {\em Journal of Mathematical Sociology}, 25(2):163--177, 2001.

\bibitem{Coffman}
Thayne Coffman, Seth Greenblatt, and Sherry Marcus.
\newblock Graph-based technologies for intelligence analysis.
\newblock {\em Commun. ACM}, 47(3):45--47, 2004.

\bibitem{Condon}
Anne Condon and Richard~M. Karp.
\newblock Algorithms for graph partitioning on the planted partition model.
\newblock {\em Random Struct. Algorithms}, 18(2):116--140, March 2001.

\bibitem{DI04}
Camil Demetrescu and Giuseppe~F. Italiano.
\newblock A new approach to dynamic all pairs shortest paths.
\newblock {\em J. ACM}, 51(6):968--992, 2004.

\bibitem{Geisb}
Robert Geisberger, Peter Sanders, and Dominik Schultes.
\newblock Better approximation of betweenness centrality.
\newblock In {\em Proc. of ALENEX 2008}, chapter~8, pages 90--100. SIAM, San
  Francisco, USA, 2008.

\bibitem{SinghGIS13}
Keshav Goel, Rishi~Ranjan Singh, Sudarshan Iyengar, and Sukrit.
\newblock A faster algorithm to update betweenness centrality after node
  alteration.
\newblock In Anthony Bonato, Michael Mitzenmacher, and Pawe{\l} Pra{\l}at,
  editors, {\em Algorithms and Models for the Web Graph}, pages 170--184, Cham,
  2013. Springer International Publishing.

\bibitem{GreenMB12}
Oded Green, Robert McColl, and David~A. Bader.
\newblock A fast algorithm for streaming betweenness centrality.
\newblock In {\em Proc. of 4th PASSAT}, pages 11--20, Amsterdam, Netherlands,
  2012. IEEE.

\bibitem{KKP93}
David~R. Karger, Daphne Koller, and Steven~J. Phillips.
\newblock Finding the hidden path: Time bounds for all-pairs shortest paths.
\newblock {\em SIAM J. Comput.}, 22(6):1199--1217, 1993.

\bibitem{KasWCC13}
Miray Kas, Matthew Wachs, Kathleen~M. Carley, and L.~Richard Carley.
\newblock Incremental algorithm for updating betweenness centrality in
  dynamically growing networks.
\newblock In {\em Proc. of ASONAM}, ASONAM '13, page 33–40, Niagara, Ontario,
  Canada, 2013. Association for Computing Machinery.

\bibitem{KA12}
Nicolas Kourtellis, Tharaka Alahakoon, Ramanuja Simha, Adriana Iamnitchi, and
  Rahul Tripathi.
\newblock Identifying high betweenness centrality nodes in large social
  networks.
\newblock {\em SNAM}, 3:899–914, 2013.

\bibitem{Kourtellis}
Nicolas Kourtellis, Gianmarco De~Francisci Morales, and Francesco Bonchi.
\newblock Scalable online betweenness centrality in evolving graphs.
\newblock {\em {IEEE} Trans. Knowl. Data Eng.}, 27(9):2494--2506, 2015.

\bibitem{Krebs02}
Valdis Krebs.
\newblock Mapping networks of terrorist cells.
\newblock {\em CONNECTIONS}, 24(3):43--52, 2002.

\bibitem{Lee12}
Min-Joong Lee, Jungmin Lee, Jaimie~Yejean Park, Ryan~Hyun Choi, and Chin-Wan
  Chung.
\newblock Qube: A quick algorithm for updating betweenness centrality.
\newblock In {\em Proceedings of the 21st International Conference on World
  Wide Web}, WWW '12, page 351–360, New York, NY, USA, 2012. Association for
  Computing Machinery.

\bibitem{BCBN}
Ghasemi Mahdieh, Seidkhani Hossein, Maseud~Rahgozar Faezeh~Tamimi, and
  Masoudi-Nejad Ali.
\newblock Centrality measures in biological networks.
\newblock {\em Current Bioinformatics}, 9(4):426--441, 2014.

\bibitem{mulder}
Henry~Martyn Mulder.
\newblock Interval-regular graphs.
\newblock {\em Discrete Mathematics}, 41(3):253 -- 269, 1982.

\bibitem{NPR14}
Meghana Nasre, Matteo Pontecorvi, and Vijaya Ramachandran.
\newblock Betweenness centrality – incremental and faster.
\newblock In {\em MFCS 2014}, volume 8635 of {\em LNCS}, pages 577--588.
  Springer, Budapest, Hungary, 2014.

\bibitem{NPR14b}
Meghana Nasre, Matteo Pontecorvi, and Vijaya Ramachandran.
\newblock Decremental all-pairs all shortest paths and betweenness centrality.
\newblock In {\em ISAAC 2014}, volume 8889 of {\em LNCS}, pages 766--778.
  Springer, Jeonju, South Korea, 2014.

\bibitem{PR15}
Matteo Pontecorvi and Vijaya Ramachandran.
\newblock A faster algorithm for fully dynamic betweenness centrality.
\newblock http://arxiv.org/abs/1506.05783, 2015.

\bibitem{kgeo}
J.~Sicilia R.~M.~Ramos and M.~T. Ramos.
\newblock A generalization of geodetic graphs: K-geodetic graphs.
\newblock {\em Inverstigación Operativa}, 1:85--101, 1998.

\bibitem{Riondato}
Matteo Riondato and Evgenios~M. Kornaropoulos.
\newblock Fast approximation of betweenness centrality through sampling.
\newblock In {\em Proc. of the 7th ACM WSDM}, pages 413--422, New York, USA,
  2014. ACM.

\bibitem{clusters}
Satu~Elisa Schaeffer.
\newblock Survey: Graph clustering.
\newblock {\em Comput. Sci. Rev.}, 1(1):27--64, August 2007.

\bibitem{bigeo}
N.~Srinivasan, J.~Opatrny, and V.S. Alagar.
\newblock Bigeodetic graphs.
\newblock {\em Graphs and Combinatorics}, 4(1):379--392, 1988.

\bibitem{Thorup04}
Mikkel Thorup.
\newblock Fully-dynamic all-pairs shortest paths: Faster and allowing negative
  cycles.
\newblock In Torben Hagerup and Jyrki Katajainen, editors, {\em Algorithm
  Theory - SWAT 2004}, pages 384--396, Berlin, Heidelberg, 2004. Springer
  Berlin Heidelberg.

\bibitem{Thorup05}
Mikkel Thorup.
\newblock Worst-case update times for fully-dynamic all-pairs shortest paths.
\newblock In {\em Proceedings of the Thirty-seventh Annual ACM Symposium on
  Theory of Computing}, STOC '05, pages 112--119, New York, NY, USA, 2005. ACM.

\end{thebibliography}

\end{document}